\newtheorem{theorem}{Theorem}[section]
\newtheorem{lemma}[theorem]{Lemma}
\newtheorem{proposition}[theorem]{Proposition}
\newtheorem{conjecture}[theorem]{Conjecture}
\newtheorem{definition}[theorem]{Definition}
\theoremstyle{remark}
\newtheorem{remark}[theorem]{Remark}
\numberwithin{equation}{section}
\begin{document}
\title{Exponentially-growing Mode Instability on the Reissner--Nordström-Anti-de-Sitter black holes}

\author{Weihao Zheng\thanks{wz344@math.rutgers.edu}}
\affil{\small Department of Mathematics, Rutgers University, Hill Center, 110 Frelinghuysen Road, Piscataway, NJ, USA}
%\emailaddress{wz344@math.rutgers.edu}
\date{\today}
\maketitle
\begin{abstract}
We construct growing mode solutions to the uncharged and charged Klein--Gordon equations on the sub-extremal Reissner--Nordström-anti-de-Sitter (AdS) spacetime under reflecting (Dirichlet or Neumann) boundary conditions. Our result applies to a range of Klein--Gordon masses above the so-called Breitenlohner--Freedman bound, notably including the conformal mass case. The mode instability of the Reissner--Nordström-AdS spacetime for some black hole parameters is in sharp contrast to the Schwarzschild-AdS spacetime, where the solution to the Klein--Gordon equation is known to decay in time. Contrary to other mode instability results on the Kerr and Kerr-AdS spacetimes, our growing mode solutions of the uncharged and weakly charged Klein--Gordon equation exist independently of the occurrence or absence of superradiance. We discover a novel mechanism leading to a growing mode solution, namely, a near-extremal instability for the Klein--Gordon equation. Our result seems to be the first rigorous mathematical realization of this instability.
\end{abstract}

\section{Introduction}
In this paper, we are interested in constructing growing mode solutions to the Klein--Gordon equation with parameters $(M,e,\Lambda,q_{0},\alpha)$ on a sub-extremal Reissner--Nordström-anti-de-Sitter background:\begin{equation}
g_{RN}^{\mu\nu}D_{\mu}D_{\nu}\phi =\bigl(-\frac{\Lambda}{3}\bigr)\alpha\phi,\quad D_{\mu} = \nabla_{\mu}+iq_{0}A_{\mu},\label{Klein-Gordon}
\end{equation}
where $A$ is the electromagnetic potential and the Reissner--Nordström-AdS metric takes the form of\begin{equation}
\label{RN metric}
g_{RN} = -\Omega^{2}dt^{2}+\frac{1}{\Omega^{2}}dr^{2}+r^{2}d\sigma^{2},\quad\
\Omega^{2} = 1-\frac{2M}{r}+\frac{e^{2}}{r^{2}}+\bigl(-\frac{\Lambda}{3}\bigr)r^{2}.
\end{equation}
$M>0$ is the mass of the black hole, $e\in\mathbb{R}$ is the charge of the black hole, $\Lambda<0$ is the cosmological constant, $q_{0}\in\mathbb{R}$ is the charge of the scalar field, and $\alpha<0$ is the negative Klein--Gordon mass. We will consider the charged scalar field case $q_{0}\neq0$ and the uncharged scalar field case $q_{0} = 0$. We will assume the Reissner--Nordström-AdS spacetime is sub-extremal, i.e. $\Omega^{2}$ admits two positive roots. Let $r_{+} := r_{+}(M,e,\Lambda)$ be the largest root of $\Omega^{2}$ corresponding to the area radius of the event horizon.

We will make a gauge choice for the electromagnetic potential $A$ of the following form\begin{equation}
A = -e\left(\frac{1}{r_{+}}-\frac{1}{r}\right)dt.\label{equation for A}
\end{equation}
In the case $q_{0} = 0$, equation $\eqref{Klein-Gordon}$ is reduced to the well-known uncharged Klein--Gordon equation with the usual covariant derivative $D_{\mu} = \nabla_{\mu}$.

Due to the lack of global hyperbolicity of the asymptotically AdS spacetime, the natural formulation of the Klein--Gordon equation $\eqref{Klein-Gordon}$ is the initial-boundary value problem; see \cite{holzegel2012self,holzegel2012well}. We also refer to Section $\ref{geometry}$ for a detailed introduction to the geometry and boundary conditions of the asymptotically AdS spacetime. The growing mode solutions we consider in this paper take the form of $\phi(t,r) = e^{i\omega t}\psi(r)$, where $\omega\in\mathbb{C}$ has a negative imaginary part and $\phi$ is a function regular at the event horizon. 

In view of the fact that $\eqref{RN metric}$ can be parametrized by $(M,e,\Lambda)$ or $(M,r_{+},\Lambda)$ alternatively, we will call $(M,r_{+},\Lambda)$ the sub-extremal parameters of $\eqref{RN metric}$ if the corresponding $(M,e,\Lambda)$ are sub-extremal parameters. Then under these new parameters, $e = 0$ corresponds to $M = M_{e = 0}: = \frac{r_{+}}{2}\left(1+\bigl(-\frac{\Lambda}{3}\bigr)r_{+}^{2}\right)$ while the extremality corresponds to $M = M_{0}:=r_{+}+2\bigl(-\frac{\Lambda}{3}\bigr)r_{+}^{3}$. For fixed $(r_{+},\Lambda)$, the admissible sub-extremal range of $M$ is $M_{e = 0}\leq M<M_{0}$. We refer to Section \ref{section:parameters} for a detailed discussion of the parameters transform.

In the following, we will call $(M,r_{+},\Lambda,\alpha,q_{0})$ parameters of the Klein--Gordon equation $\eqref{Klein-Gordon}$. Our main results are as follows:

\begin{theorem}\textup{[Rough version of the main result]}
\label{rough}
For the Klein--Gordon equation $\eqref{Klein-Gordon}$, let $C_{DN} = 0$ for Dirichlet boundary conditions and $C_{DN} = -\frac{5}{4}$ for Neumann boundary conditions respectively. Imposing reflecting boundary condition (Dirichlet or Neumann) for \eqref{Klein-Gordon}, we have the following three results about growing mode solutions:
\begin{enumerate}
  \item[(1)]\textup{(Large charge case)} Assume $(M_{b},r_{+},\Lambda)$ are given sub-extremal parameters with $M_{e = 0}<M_{b}<M_{0}$, $-\frac{9}{4}<\alpha<C_{DN}$ is fixed, and $\vert q_{0}\vert$ is a fixed large charge (the size of $\vert q_{0}\vert$ depends on $(M_{b},r_{+},\Lambda)$ and $\alpha$). Let $\mathcal{S}(M_{b},r_{+},\Lambda,\alpha,q_{0})$ be the set of all $M_{e = 0}\leq M<M_{b}$ such that $(M,r_{+},\Lambda)$ are admissible sub-extremal parameters and $\eqref{Klein-Gordon}$ with parameters $(M,r_{+},\Lambda,\alpha,q_{0})$ has a growing mode solution that is regular at the event horizon. Then $\mathcal{S}(M_{b},r_{+},\Lambda,\alpha,q_{0})$ is non-empty and open.
  \item[(2)]\textup{(General fixed charge case)} For any fixed $(r_{+},\Lambda,\alpha,q_{0})$ satisfying the conditions
  \begin{align}
  &-\frac{9}{4}<\alpha<\min\left\{C_{DN},-\frac{3}{2}+\frac{q_{0}^{2}}{2\bigl(-\frac{\Lambda}{3}\bigr)}\right\},\label{large1}\\
  &\bigl(-\frac{\Lambda}{3}\bigr)r_{+}^{2}>R_{0}\left(\alpha,\frac{q_{0}^{2}}{\bigl(-\frac{\Lambda}{3}\bigr)}\right),\label{large}
  \end{align}
  where $R_{0}$ is the positive solution to the quadratic equation\begin{equation}
  24\left(\alpha+\frac{3}{2}-\frac{q_{0}^{2}}{2\bigl(-\frac{\Lambda}{3}\bigr)}\right)x^{2}+\left(4\left(\alpha+\frac{3}{2}-\frac{q_{0}^{2}}{2\bigl(-\frac{\Lambda}{3}\bigr)}\right)-\frac{2q_{0}^{2}}{\bigl(-\frac{\Lambda}{3}\bigr)}+6\right)x+1 = 0.\label{quadratic}
  \end{equation}
  Let $\mathcal{S}_{0}(r_{+},\Lambda,\alpha,q_{0})$ be the set of all $M_{e = 0}\leq M<M_{0}$ such that $(M,r_{+},\Lambda)$ are admissile sub-extremal parameters and $\eqref{Klein-Gordon}$ with parameters $(M,r_{+},\Lambda,\alpha,q_{0})$ has a growing mode solution that is regular at the event horizon. Then $\mathcal{S}_{0}(r_{+},\Lambda,\alpha,q_{0})$ is non-empty and open. 
  \item[(3)]\textup{(Weakly charged case)} For any given $(r_{+},\Lambda,\alpha)$ satisfies \eqref{large1} and \eqref{large} with $q_{0} = 0$, there exists $\epsilon>0$ small enough, such that for all $M\in(M_{0}-\epsilon,M_{0})$, there exists $\delta>0$ depending on $M$, such that $M\in\mathcal{S}_{0}(r_{+},\Lambda,\alpha,q_{0})$ for all $\vert q_{0}\vert\leq\delta$. 
\end{enumerate}
\end{theorem}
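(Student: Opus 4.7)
The plan is to treat part (3) as a perturbation of the uncharged ($q_{0}=0$) case already covered by part (2). At $q_{0}=0$ the hypotheses \eqref{large1} and \eqref{large} of part (3) reduce exactly to those of part (2), so $\mathcal{S}_{0}(r_{+},\Lambda,\alpha,0)$ is non-empty and open. The two additional points of (3) are (a) this set contains a one-sided neighborhood $(M_{0}-\epsilon,M_{0})$ of extremality, and (b) the growing modes on this neighborhood survive a small charge perturbation. The first refines the output of (2); the second is a standard holomorphic perturbation.

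For (a), I would revisit the construction behind part (2) --- the \emph{near-extremal instability} mechanism highlighted in the abstract. The expectation is that this construction naturally produces a continuous family $M\mapsto\omega(M)$ of growing mode frequencies parametrized by the near-extremal parameter $M_{0}-M$, defined on a whole interval $(M_{0}-\epsilon,M_{0})$ and with $\operatorname{Im}\omega(M)<0$ throughout. Making this uniformity explicit in the proof of (2) specialized to $q_{0}=0$ yields $(M_{0}-\epsilon,M_{0})\subset\mathcal{S}_{0}(r_{+},\Lambda,\alpha,0)$.

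For (b), mode frequencies correspond to zeros of a Wronskian-type function $\mathcal{W}(\omega;M,q_{0})$ formed from the unique (up to scaling) horizon-regular solution of the radial ODE associated to \eqref{Klein-Gordon} and the unique solution satisfying the reflecting (Dirichlet or Neumann) boundary condition at conformal infinity. The gauge choice \eqref{equation for A} makes $A_{t}$ vanish at $r=r_{+}$, so the $q_{0}$ perturbation enters the ODE and its horizon-regular solution smoothly, and $\mathcal{W}$ is jointly holomorphic in $\omega$ and continuous in $q_{0}$ near any mode $(\omega(M),0)$ produced in step (a). Fixing $M\in(M_{0}-\epsilon,M_{0})$ and a small closed disk $D$ about $\omega(M)$ contained in the lower half-plane on which $\omega(M)$ is the only zero of $\mathcal{W}(\,\cdot\,;M,0)$, I apply Rouch\'e's theorem to obtain, for $|q_{0}|\le\delta(M)$ small enough, a zero of $\mathcal{W}(\,\cdot\,;M,q_{0})$ inside $D$, necessarily of negative imaginary part. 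This gives $M\in\mathcal{S}_{0}(r_{+},\Lambda,\alpha,q_{0})$ for $|q_{0}|\le\delta(M)$.

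The main obstacle is step (a): showing that the output of (2) indeed sweeps out a full one-sided neighborhood of extremality rather than merely a non-empty open subset somewhere in $[M_{e=0},M_{0})$. Near extremality the two horizons coalesce, the surface gravity degenerates, and the horizon-regular solution requires a careful asymptotic analysis uniform in $M_{0}-M$; one must also rule out that $\omega(M)$ drifts out of the lower half-plane before $M$ reaches $M_{0}$. Once this uniform near-extremal existence at $q_{0}=0$ is in hand, the Rouch\'e step is routine, and the dependence of $\delta$ on $M$ is forced precisely because $\operatorname{Im}\omega(M)$ may tend to zero as $M\uparrow M_{0}$, shrinking the permissible perturbation disk.
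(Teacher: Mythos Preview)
Your outline is correct and close to the paper's argument, but you are overcomplicating step (a). The variational construction underlying part (2) does more than produce a single stationary solution at some $M_c$ to be perturbed: it shows (Propositions \ref{eigenvalue}--\ref{nontrival solution} and their Neumann analogues) that for \emph{every} $M$ in the interval $(M_c,M_0)$ where a negative energy bound state exists, the eigenvalue problem
\[
-\frac{d}{dr}\Bigl(r^{2}\Omega_M^{2}\frac{d\phi}{dr}\Bigr)+\Bigl(-\tfrac{\Lambda}{3}\Bigr)\alpha\,r^{2}\phi = -\lambda_M\frac{r^{2}}{\Omega_M^{2}}\phi
\]
admits a solution with $\lambda_M>0$, regular at the horizon and obeying the reflecting condition. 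The key observation you are missing is Proposition \ref{imaginary omega pro}: at $q_0=0$ every growing mode has pure imaginary frequency, so the mode equation \eqref{Klein-Gordon equation for psi} with $\omega=-i\sqrt{\lambda_M}$ \emph{is} precisely this eigenvalue equation, and $e^{\sqrt{\lambda_M}t}\phi_M$ is already the growing mode. Monotonicity of $L_M$ in $M$ makes the negative-bound-state set an interval $(M_c,M_0)$. Thus the whole of $(M_c,M_0)\supset(M_0-\epsilon,M_0)$ carries growing modes at $q_0=0$, each $M$ handled independently by the variational method --- no continuation, no uniform near-extremal asymptotics, and no risk of $\omega(M)$ drifting to the real axis, since $\lambda_M>0$ is guaranteed directly.

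For step (b) the paper applies the implicit function theorem to the coefficient $B(q_0,\omega_R,\omega_I)$ in the decomposition $\psi=Au_D+Bu_N$, verifying $\partial B/\partial\omega_I\neq 0$ at the uncharged mode via an energy identity. Your Rouch\'e argument on a Wronskian is essentially equivalent (the Wronskian of the horizon-regular and infinity-regular solutions is proportional to $B$) and is a valid alternative.
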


We make the following remarks.

\begin{remark}
In the large charge case, $\vert q_{0}\vert$ above is a constant depending on the given sub-extremal parameters, and growing mode solutions are constructed for spacetimes away from the extremality (in the sense that $M_{e = 0}<M<M_{b}<M_{0}$). For the general fixed charge case, however, there is no lower-bound requirement for $q_{0}$ and thus our result includes the uncharged case $q_{0} = 0$. Our general strategy in proving the large charge case and general fixed charge case is to construct growing mode solutions around a stationary solution and one should think $M\in\mathcal{S}_{0}(r_{+},\Lambda,\alpha,q_{0})$ is close to $M_{c}$ where a stationary solution to \eqref{Klein-Gordon} exists. For the weakly charged case, we can further construct growing mode solutions not only near a stationary solution but for all $M$ close to $M_{0}$. For the uncharged case, the conditions \eqref{large1} and \eqref{large} are reduced to \begin{align}
&-\frac{9}{4}<\alpha<-\frac{3}{2},\label{uncharged range}\\&
\bigl(-\frac{\Lambda}{3}\bigr)r_{+}^{2}>\frac{1}{4(-\frac{3}{2}-\alpha)},
\end{align}
which includes the conformal mass $\alpha = -2$.
\end{remark}
\begin{remark}
\label{rmk:large charge}
The construction of growing mode solutions for the uncharged case crucially relies on the non-positivity of the energy on spacelike hypersurfaces $\{t = const\}$. As will be discussed in detail in Section \ref{uncharged instability}, the condition\begin{equation}
\frac{e^{2}}{r_{+}^{2}}<1+\bigl(-\frac{\Lambda}{3}\bigr)r_{+}^{2},\label{analogous Hawking reall bound}
\end{equation}
ensures that the energy remains positive and bounded on each constant time slice. Consequently, the boundedness of the solution to the uncharged Klein--Gordon equation \eqref{Klein-Gordon} can be established using standard arguments. Therefore, for growing mode solutions in the uncharged case, the parameters must satisfy the following necessary ``bound-violating'' condition:
\begin{equation}
\frac{e^{2}}{r_{+}^{2}}>1+\bigl(-\frac{\Lambda}{3}\bigr)r_{+}^{2},\label{large charge condition initially}
\end{equation}
which can be interpreted as a largeness requirement for the black hole charge; see Section \ref{uncharged instability} for a detailed discussion.
\end{remark}

\begin{remark}
We also remark that the growing mode solutions constructed in this paper for the large charge case and the general fixed charge case are obtained by perturbing stationary solutions. It remains unclear to us whether growing mode solutions exist for the general fixed charge and the black hole being close to the extremality. However, by the computation in Section \ref{uncharged instability}, we can show that if\begin{equation}
\left(1+\frac{4q_{0}^{2}}{\bigl(-\frac{\Lambda}{3}\bigr)}\right)\frac{e^{2}}{r_{+}^{2}}<1+\bigl(-\frac{\Lambda}{3}\bigr)r_{+}^{2},\label{strong coupling}
\end{equation}
then there is no stationary solution to \eqref{Klein-Gordon}. Hence the necessary condition the existence of stationary solutions is\begin{equation}
\left(1+\frac{4q_{0}^{2}}{\bigl(-\frac{\Lambda}{3}\bigr)}\right)\frac{e^{2}}{r_{+}^{2}}>1+\bigl(-\frac{\Lambda}{3}\bigr)r_{+}^{2}.
\end{equation}

\end{remark}

\paragraph{The role of the black hole charge $e$ in the existence of growing mode solutions}
It is important to note that, in direct contrast to the condition \eqref{large charge condition initially}, the decay results in \cite{holzegel2013stability,holzegel2013decay} for Klein--Gordon equations on Schwarzschild-AdS and Kerr-AdS imply that for any parameters $(r_{+},\Lambda,\alpha,q_{0})$\begin{equation*}
M_{e = 0}\notin\mathcal{S}_{0}(r_{+},\Lambda,\alpha,q_{0}),\quad M_{e = 0}\notin \mathcal{S}(M_{b},r_{+},\Lambda,\alpha,q_{0})
\end{equation*}
In other words, it is impossible to construct the growing mode solutions if the black hole spacetime itself is uncharged ($e = 0$) and spherically symmetric; see Section \ref{sec: linear instability} for a detailed discussion. 

\paragraph{The role of the scalar field charge $q_{0}$ in the existence of growing mode solutions}
We now distinguish between the charged and uncharged cases. For the charged case, since the charge $\vert q_{0}\vert$ can be taken to be arbitrarily large, we have\begin{equation*}
\inf_{q_{0}}\inf\mathcal{S}(M_{b},r_{+},\Lambda,\alpha,q_{0}) = M_{e = 0}.
\end{equation*}
The presence of a (large) scalar field charge allows the black hole away from the extremality and the black hole spacetime to be weakly charged. While for the general fixed charge case, by $\eqref{large}$ and the method in this paper, we can show\begin{equation*}
\inf_{(r_{+},\Lambda)}\sup_{M\in\mathcal{S}_{0}(r_{+},\Lambda,\alpha,q_{0})}\frac{d\Omega^{2}}{dr}\Bigl|_{r = r_{+}} = 0,
\end{equation*}
which can be interpreted as follows: for a fixed charge $q_{0}$, there exist parameters $$ \left(M\in\mathcal{S}_{0}(r_{+},\Lambda,\alpha,0),r_{+},\Lambda\right)$$ very close to the extremal parameters. The result in the weakly charged case implies that growing mode solutions exist for all $(M,r_{+},\Lambda)$ close to the extremality.

Heuristically, the existence of growing mode solutions for large charge case in Theorem \ref{rough} can be explained in view of the fact that the charge $q_{0}$ plays a role in the so-called ``effective mass'' $\alpha_{eff}(r)$, i.e. the coefficient of zero order term of $\phi$ in $\eqref{Klein-Gordon}$. $\alpha_{eff}(r)$ violates the Breitenlohner--Freedman bound in the compact region of $(r_{+},\infty)$ when the charge $q_{0}$ is large while keeping $\lim_{r\rightarrow\infty}\alpha_{eff}(r)$ satisfying the Breitenlohner--Freedman bound; see \cite{gubser2005phase,gubser2008breaking,hartnoll2008building,horowitz2011introduction} for the heuristical discussions. For the general fixed charge case and weakly charged case, the situation is more subtle since no large effective mass is available in general.
\paragraph{The Breitenlohner--Freedman bound and local well-posedness}
The bound $\alpha>-\frac{9}{4}$ is called the Breitenlohner--Freedman bound, which is crucial for local well-posedness concerns of $\eqref{Klein-Gordon}$. $\eqref{Klein-Gordon}$ is proved to be local well-posed under Dirichlet boundary conditions \cite{holzegel2010massive} for $-\frac{9}{4}<\alpha<0$ and under Neumann boundary conditions \cite{warnick2013massive} for $-\frac{9}{4}<\alpha<-\frac{5}{4}$; see also Section \ref{sec:local well-posedness}. The ranges of our $\alpha$ in all three cases in Theorem \ref{rough} (condition $\eqref{large1}$) fit within both local well-posedness regimes for the reflecting boundary conditions. Furthermore, we note $\eqref{uncharged range}$ is conjectured in \cite{horowitz2011introduction} as the optimal masses range for the uncharged Klein--Gordon equation under which the growing mode solutions exist; see also Section \ref{physics}. However, condition $\eqref{large}$ is new to the best of our knowledge.
\paragraph{A toy model for the linearized Einstein--Maxwell equations}
The negative Klein--Gordon mass $\alpha$ in the setting is motivated by considering the Einstein--Maxwell equations with $\Lambda<0$:\begin{equation*}
Ric_{\mu\nu}(g) = \Lambda g_{\mu\nu}+T_{\mu\nu}^{EM},
\end{equation*}
where $T_{\mu\nu}^{EM}$ is the energy tensor given by the Maxwell field. Informally, $Ric$ can be viewed as a wave operator $\Box$ under wave harmonic coordinates. Hence a toy model for studying the (in)stability of the Reissner--Nordström-AdS spacetime is the uncharged Klein--Gordon equation with a negative mass $\alpha$. Furthermore, $\eqref{Klein-Gordon}$ with the conformal mass $\alpha = -2$ can be viewed as a simplification of the analogous Teukolsky equations derived in the Reissner--Nordström-AdS spacetime, which plays an important role in the study of linear stability of the Reissner--Nordström-AdS spacetime. One can refer to \cite{giorgi2020linear} for the analogous Teukolsky equations on the asymptotically flat Reissner--Nordström-AdS spacetime.
\paragraph{Construction of hairy black holes}
In \cite{holzegel2013stability}, Holzegel and Smulevici proved that any spherically symmetric solutions to the Einstein--Klein--Gordon equations with a negative cosmological constant $\Lambda$ in the vicinity of Schwarzschild-AdS converge exponentially to the Schwarzschild-AdS again. In particular, the scalar field $\phi$ decays exponentially in time. Their result shows that the only stationary spherical black hole solutions of the Einstein--Klein--Gordon equations with a negative cosmological constant $\Lambda$ are Schwarzschild-AdS spacetimes with vanishing scalar field $\phi$. However, using the mode solution construction obtained in this paper, which essentially relies on the non-zero black hole charge $e$, see Remark \eqref{rmk:large charge} and \eqref{large charge condition initially}, we show the existence of stationary black hole with a non-trivial (charged or uncharged) scalar field $\phi$ under the reflecting boundary conditions for the Einstein--Maxwell--Klein--Gordon equations with a negative cosmological constant $\Lambda$ in our companion work \cite{weihaozheng}. 

\paragraph{Largely charged and weakly charged instability mechanism}
In \cite{shlapentokh2014exponentially} and \cite{dold2017unstable}, growing mode solutions were constructed respectively to the Klein--Gordon equation on the Kerr and Kerr-AdS spacetimes violating the Hawking--Reall bound. The mechanism behind these instabilities is of a superradiant nature for spacetimes outside of spherical symmetric. In \cite{besset2021existence}, growing mode solutions to the charged Klein--Gordon equation on Kerr--Newman-dS and Reissner--Nordström-dS have also been constructed, due to the strong coupling of the black hole charge and the scalar field charge. The mechanism behind this instability is also of a superradiant nature induced by the scalar field charge \cite{bachelot2004superradiance}; see Section \ref{previous work on mode instability} for a detailed discussion. 

As mentioned above, the instability mechanism for our large charge case is due to the effective mass violating the Breitenlohner--Freedman bound. In view of \eqref{strong coupling}, the coupling of the black hole charge $e$ and the scalar field charge $q_{0}$ is also strong in our large charge case, which is very similar to the case of growing mode solutions on Kerr--Newman-dS. This instability is indeed of superradiant nature \cite{di2015superradiance} and is called tachyonic instability in physics literature \cite{gubser2005phase,gubser2008breaking,hartnoll2008building,horowitz2011introduction,brito2020superradiance}. However, since there is no superradiance for the uncharged scalar field in the Reissner--Nordström-AdS spacetime, our growing mode solution is due to a new mechanism called near-extremal instability, as already discussed in \cite{horowitz2011introduction}; see also Section \ref{uncharged instability} for a further discussion of this mechanism.

\paragraph{Outline of the rest of the introduction}
In Section $\ref{previous result on instability}$, we discuss some previous results regarding (in)stability on the asymptotically AdS spacetime and establish a connection between growing mode solutions we get in this paper and an instability conjecture of the Reissner--Nordström-AdS spacetime. In Section $\ref{previous work on mode instability}$, we review previous works on constructing growing mode solutions to the Klein--Gordon equation on different spacetimes. We also give the brief introduction to the instability mechanism of these growing mode solutions there. In Section \ref{uncharged instability}, we discuss the novel instability mechanism leading to the growing mode solutions of the uncharged Klein--Gordon equation on the Reissner--Nordström-AdS spacetime. In Section \ref{physics}, we discuss the physics motivation and heuristic argument.

\subsection{Previous results on stability/instability of the asymptotically AdS spacetime and decay of the field}
\label{previous result on instability}
In the past decades, despite intensive research aimed at proving the stability of the Einstein vacuum equation in the asymptotically flat spacetimes, there have been only a few results regarding the (in)stability of the asymptotically AdS spacetime. Due to the presence of a conformal timelike boundary, the choice of boundary conditions plays a decisive role in the (in)stability issues. To address the nonlinear (in)stability problems in the asymptotically AdS spacetime, the first step is to understand the decay and boundedness properties of the linearized field equation, which will heuristically suggest the (in)stability of the spacetime. As mentioned above, a natural toy model for the Einstein equations (coupled with matter field) with a negative cosmological constant is the Klein--Gordon equation on a fixed asymptotically AdS spacetime with a negative conformal mass. 

First, we give a brief definition of boundary conditions for the Klein--Gordon equation with a conformal mass $\alpha = -2$ here; see Section $\ref{sec: boundary conditions}$ for the definition for more general Klein--Gordon masses.
\begin{definition}
For the Klein--Gordon equation $\eqref{Klein-Gordon}$ on an asymptotically AdS spacetime, the solution $\phi$ satisfies the Dirichlet, Neumann, or dissipative boundary condition if the following holds:
\begin{enumerate}
  \item[(1)] Dirichlet boundary condition:\begin{equation*}
  r\phi\rightarrow 0,\quad r\rightarrow \infty.
  \end{equation*}
  \item[(2)] Neumann boundary condition:\begin{equation*}
  r^{2}\frac{\partial}{\partial r}\left(r\phi\right) = 0,\quad r\rightarrow\infty.
  \end{equation*}
  \item[(3)] Optimally dissipative boundary condition:\begin{equation*}
  \frac{\partial(r\phi)}{\partial t}+r^{2}\frac{\partial (r\phi)}{\partial r}\rightarrow 0, \quad r\rightarrow \infty.
  \end{equation*}
\end{enumerate}
\end{definition}
We define the reflecting boundary condition to be either the Dirichlet boundary condition or the Neumann boundary condition. Note it is also possible to consider inear combinations of the Dirichlet and the Neumann boundary conditions, which are called the Robin boundary conditions. We will not pursue results under Robin boundary conditions in this paper.
\subsubsection{Local well-posedness results}
\label{sec:local well-posedness}
The study of the Klein--Gordon equation on asymptotically AdS spacetimes has been initiated in a series of works \cite{holzegel2010massive, holzegel2012well}. In \cite{holzegel2010massive}, Holzegel proved the local well-posedness results for Klein--Gordon equations on general asymptotically AdS spacetimes under the Dirichlet boundary conditions with $-\frac{9}{4}<\alpha<0$. For the Neumann boundary conditions, the local well-posedness results are still available \cite{warnick2013massive} for $-\frac{9}{4}<\alpha<-\frac{5}{4}$. \cite{gannot2018elliptic} establishes the local well-posedness under the dissipative boundary conditions for some certain range of $\alpha$. In \cite{holzegel2012well}, Holzegel and Smulevici proved the local well-posedness results for the spherically symmetric Einstein--Klein--Gordon equations with a negative cosmological constant $\Lambda$ under Dirichlet boundary conditions with $-2\leq \alpha<0$. 

\subsubsection{Results and conjectures on the stability of the pure AdS spacetime with dissipative boundary conditions}
The pure AdS spacetime takes the form of\begin{equation*}
g_{AdS} = -\left(1+\bigl(-\frac{\Lambda}{3}\bigr)r^{2}\right)dt^{2}+\left(1+\bigl(-\frac{\Lambda}{3}\bigr)r^{2}\right)^{-1}dr^{2}+r^{2}d\sigma^{2}.
\end{equation*}
In \cite{holzegel2020asymptotic}, the following conjecture was made for the pure AdS with optimally dissipative boundary conditions.
\begin{conjecture}[\cite{holzegel2020asymptotic}]
\label{cj1s}
Anti-de Sitter spacetime is asymptotically stable for optimally dissipative boundary conditions.
\end{conjecture}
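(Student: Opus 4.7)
Since the conjecture concerns the fully nonlinear asymptotic stability of pure anti-de Sitter spacetime, the natural strategy is a three-tier approach: linear decay for a scalar toy model, a linearised Einstein analysis, and finally a nonlinear bootstrap. The physical intuition guiding the conjecture is that the optimally dissipative boundary condition radiates energy out through the conformal boundary $\mathcal{I}$, preventing the resonant weakly turbulent instability of Bizo\'n--Rostworowski that is expected under reflecting (Dirichlet) conditions.

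The first step is to carry out the linear analysis on the fixed pure AdS background. For the Klein--Gordon equation with conformal mass $\alpha=-2$ under the optimally dissipative boundary condition, I would prove uniform-in-time decay of a suitable energy. The key ingredients should be a vector-field multiplier adapted to $\mathcal{I}$, exploiting the favourable sign of the boundary flux produced by the dissipative condition, together with a Morawetz-type integrated local energy decay estimate. Since pure AdS has no trapping and, under optimal dissipation, the quasinormal spectrum is expected to lie strictly in the lower half-plane with a uniform spectral gap, one should be able to upgrade integrated decay to exponential decay of all higher-order energies.

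The second step is the linearised Einstein equations on AdS. One must first fix a gauge --- either a generalised harmonic/wave coordinate gauge or a double-null / Bondi-type gauge adapted to $\mathcal{I}$ --- and formulate the optimally dissipative condition in a geometrically meaningful way, for instance as a condition on the conformal class of the boundary metric together with the traceless part of the boundary stress tensor. The linearised system then becomes a coupled system of wave equations for metric perturbations and, via Bianchi, for Weyl-type curvature components, to which one applies the energy-and-Morawetz scheme established in the scalar toy model, using the algebraic structure of the Einstein equations to close estimates component by component.

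Finally, one bootstraps to the full nonlinear equations via a continuity argument in weighted vector-field norms. The hardest step, and the reason this conjecture remains open, is controlling nonlinear error terms supported near the timelike boundary: trace estimates into $\mathcal{I}$ inevitably lose regularity, and this loss must be reclaimed through a higher-order energy hierarchy driven by the dissipation. A secondary but equally delicate obstacle is maintaining the gauge and the geometric meaning of the dissipative boundary condition under the perturbation, since the location and conformal structure of $\mathcal{I}$ itself are modified. Absent a new mechanism for absorbing these boundary losses, a full proof will likely require either a carefully chosen characteristic gauge that trivialises the boundary condition, or a hybrid microlocal/physical-space framework in the spirit of the recent stability proofs for Schwarzschild and Kerr in the asymptotically flat setting.
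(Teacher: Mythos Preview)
The statement you are addressing is a \emph{conjecture}, not a theorem: the paper does not prove it, and indeed explicitly presents it as open, noting only that Holzegel--Luk--Smulevici--Warnick \cite{holzegel2020asymptotic} established energy boundedness and integrated decay for the conformal Klein--Gordon equation on pure AdS with optimally dissipative boundary conditions, which they describe as ``the first step toward the proof of Conjecture~\ref{cj1s}.'' There is therefore no proof in the paper to compare your proposal against.

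Your write-up is a reasonable high-level strategy sketch for attacking the conjecture, and you correctly identify the main obstacles (boundary regularity losses, gauge-dependence of the dissipative condition, propagation of the conformal structure of $\mathcal{I}$). You also yourself acknowledge that ``this conjecture remains open.'' But none of this constitutes a proof, nor is it meant to: a proof would require actually carrying out each of the three tiers you describe, and in particular closing the nonlinear bootstrap, which nobody has done. So the honest assessment is that your proposal is not a proof attempt in the usual sense but a research programme outline for an open problem, and the paper itself makes no claim to resolve it.
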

Holzegel--Luk--Smulevici--Warnick showed the standard energy boundedness and integrated decay result for the Klein--Gordon equation with $\alpha = -2$ under optimally dissipative boundary condition on the pure AdS spacetime in \cite{holzegel2020asymptotic}, which is the first step toward the proof of Conjecture $\ref{cj1s}$.
\subsubsection{Results and conjectures on the instability of the pure AdS spacetime with reflecting boundary conditions}
In \cite{dafermos2006dynamic,dafermos2006nonlinear}, the following instability conjecture about the pure AdS spacetime with reflecting boundary conditions was made by Holzegel and Dafermos.
\begin{conjecture}[\cite{dafermos2006dynamic,dafermos2006nonlinear}]
Anti-de Sitter spacetime is non-linearly unstable for reflecting (Dirichlet or Neumann) boundary conditions.
\label{cj1}
\end{conjecture}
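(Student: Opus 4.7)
The conjecture asserts nonlinear instability of pure AdS under reflecting boundary conditions, so I would aim to produce, for every $\epsilon>0$, smooth initial data of size $\epsilon$ whose maximal Cauchy development fails to be future-asymptotic to AdS. A convenient concrete obstruction is the formation of a trapped surface in finite time, which by Penrose's incompleteness theorem rules out any such asymptotic convergence. The plan is to work first within the spherically symmetric Einstein--scalar field (or Einstein--Maxwell--Klein--Gordon) system with $\Lambda<0$ and Dirichlet boundary conditions, where local well-posedness is available \cite{holzegel2012well}; the essential feature distinguishing this problem from the asymptotically flat case is that the reflecting boundary traps all energy of the scalar field in the bulk.

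The mechanism I would exploit is the weakly turbulent cascade of Bizon--Rostworowski. The linearization on pure AdS has a fully resonant, integer-spaced spectrum, so at order $\epsilon^{3}$ an infinite family of exact resonances couples arbitrary modes. My first technical step would be a rigorous Birkhoff/multiscale normal form on the slow timescale $\tau=\epsilon^{2}t$, reducing the true dynamics to an infinite-dimensional resonant Hamiltonian system for the modal amplitudes $a_{n}(\tau)$, together with quantitative error bounds that survive repeated reflections off the conformal boundary. The second step is to establish genuine growth of a higher Sobolev norm $\sum_{n}n^{2s}|a_{n}|^{2}$ for an open set of initial data of the reduced system; a natural avenue is to isolate a small truncation (three- or four-mode) in which the cascade is already explicit and then persist this behaviour into the full resonant system via perturbative, Nekhoroshev-style estimates. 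An alternative that is closer in spirit to the present paper would instead regard very small-mass, weakly charged, near-extremal Reissner--Nordström-AdS as lying close to pure AdS, and attempt to upgrade the linear growing modes constructed in Theorem \ref{rough} (valid for arbitrarily small $\vert q_{0}\vert$ and $M$ close to $M_{0}$) into a genuine nonlinear instability of the limiting background.

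Once a small neighbourhood of the centre of symmetry is shown to trap a sufficiently concentrated amount of energy, I would apply a Christodoulou-type trapped-surface formation criterion (or its short-pulse refinement) to produce a trapped sphere and conclude by Penrose. The hard part will be the second step: controlling the infinite-dimensional resonant system on the cascade timescale and ruling out hidden conservation laws or KAM-type obstructions is exactly the analytic heart of the Bizon--Rostworowski picture, and even for simpler Hamiltonian model PDEs such unbounded Sobolev growth is at present known only for very special solutions. Removing spherical symmetry, which is ultimately required for the full conjecture, would add a further serious layer of difficulty, demanding a robust boundary-value theory for the Einstein equations at the conformal boundary that is only partially developed, so I would view the spherically symmetric case with a charged scalar field as the first genuinely tractable target.
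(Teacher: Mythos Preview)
The statement you are attempting to prove is a \emph{conjecture} in the paper, not a theorem: the paper offers no proof of it. Conjecture~\ref{cj1} is attributed to Dafermos--Holzegel and is presented as open; the paper only records that Moschidis has established it for the spherically symmetric Einstein--null dust and Einstein--massless Vlasov systems, not for vacuum or for the Einstein--scalar field system you target. So there is no ``paper's own proof'' to compare against, and your proposal should be read as an outline of a research programme rather than a reconstruction of an existing argument. Your first approach (resonant normal form, cascade to high frequencies, trapped-surface formation) is indeed the heuristic picture behind the conjecture, and you correctly flag that the crucial step---unbounded Sobolev growth for the full resonant system over the cascade timescale---is exactly the part nobody knows how to do; this is not a gap in your write-up so much as the gap in the subject.

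Your alternative route, however, contains a genuine error. You propose to view ``very small-mass, weakly charged, near-extremal Reissner--Nordstr\"om-AdS as lying close to pure AdS'' and to leverage the growing modes of Theorem~\ref{rough}. This is not consistent with the paper's hypotheses. The general fixed charge and weakly charged cases require \eqref{large}, i.e.\ $\bigl(-\tfrac{\Lambda}{3}\bigr)r_{+}^{2}>R_{0}>0$, so $r_{+}$ is bounded away from zero; and by Remark~\ref{rmk:large charge} the uncharged growing modes need the bound-violating condition \eqref{large charge condition initially}, which forces a large black-hole charge $e$. None of these spacetimes is a small perturbation of pure AdS: they all contain a black hole of definite size, and in the regimes where the instability is proved the charge is large, not small. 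The linear instability established in this paper is an instability of certain Reissner--Nordstr\"om-AdS backgrounds, not of the AdS vacuum, and cannot be transported to Conjecture~\ref{cj1} by a limiting argument.
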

The above conjecture relies on the expectation that, for any small perturbations to pure AdS initial data, the solutions to the Einstein vacuum equations with a negative cosmological constant $\Lambda$ will form a trapped surface region. In a series of works \cite{moschidis2018characteristic,moschidis2017einstein,moschidis2020proof,moschidis2023proof}, Moschidis proved the instability of AdS for the Einstein-null dust system and Einstein-massless Vlasov system with reflecting boundary conditions under the spherically symmetric setting.

\subsubsection{Linear (in)stability of AdS black hole}
\label{sec: linear instability}
In \cite{holzegel2013decay, holzegel2014quasimodes}, Holzegel and Smulevici showed a sharp inverse logarithmic decay rate for the Klein--Gordon equation on the Schwarzschild-AdS spacetime \footnote{In fact, under the spherically symmetric assumption of $\phi$, they can further show $\phi$ decays exponentially in time.} under the Dirichlet boundary condition. Moreover, their sharp logarithmic decay result also holds for the Kerr-AdS spacetime satisfying the Hawking--Reall bound. One should note that the Hawing--Reall bound is crucial in obtaining the decay. In fact, Dold \cite{dold2017unstable} constructed growing mode solutions of the Klein--Gordon equation on the Kerr-AdS spacetime violating the Hawking--Reall bound.

Once the decay result for the Klein--Gordon equation is obtained, the next step is to study the (in)stability of linearized gravity. For the Kerr spacetime, Teukolsky derived two decoupled equations for the linearized curvature components in \cite{teukolsky1972rotating}. Most recently, Graf and Holzegel proved mode stability for the analogous Teukolsky equations on the Kerr-AdS spacetime satisfying the Hawking--Reall bound in their work \cite{graf2023mode}, thus ruling out the possibility of growing mode solutions.

However, the existence of growing mode solutions to the Klein--Gordon equation on the Reissner--Nordström-AdS spacetime constructed in this paper leads to the expectation of mode instability of the analogous Teukolsky equations, in contrast to the cases of Schwarzschild-AdS and Kerr-AdS. We will come back to this problem in our future work.

\subsubsection{Nonlinear (in)stability of the asymptotically AdS spacetime}
Now we discuss nonlinear (in)stability results for asymptotically AdS spacetimes. In \cite{holzegel2013stability}, Holzegel and Smulevici showed that, for any small spherical perturbation of the Schwarzschild-AdS initial data, under the Dirichlet boundary conditions, solutions of the Einstein--Klein--Gordon equations with a negative cosmological constant will converge to the Schwarzschild-AdS spacetime exponentially in time, demonstrating the spherical stability of the Schwarzschild-AdS spacetime. 

Most recently, in a series of works \cite{graf2024linear1,graf2024linear2} establishing the linear stability of Schwarzschild-AdS, Holzegel and Graf showed that under Dirichlet-type boundary conditions, the solutions to the Teukolsky equations with perturbed Schwarzschild-AdS initial data converge to Kerr-AdS with an inverse logarithmic rate in time.

Despite decay results for the Klein--Gordon equations and the Teukolsky equations, this inverse logarithmic decay rate is believed to be too slow to ensure the nonlinear stability of the Kerr-AdS spacetime. Hence, in \cite{holzegel2013decay}, Holzegel and Smulevici made the following conjecture:
\begin{conjecture}[\cite{holzegel2013decay}]
The Kerr-AdS spacetimes are non-linearly unstable solutions to the initial-boundary value problem for the Einstein equations with Dirichlet boundary conditions.
\end{conjecture}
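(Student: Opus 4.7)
The plan is to attack this in two movements, first obtaining a robust linear instability and then bootstrapping it nonlinearly, in the spirit of Moschidis' program for pure AdS \cite{moschidis2020proof,moschidis2023proof} but adapted to the black hole setting. I would begin by targeting the analogue of Theorem \ref{rough} at the level of the linearized Einstein equations on Kerr-AdS, i.e.\ constructing growing mode solutions to the spin $\pm 2$ Teukolsky equations. The idea would be to mimic the mechanism isolated in this paper: even for parameters satisfying the Hawking--Reall bound (where the Klein--Gordon equation decays logarithmically by \cite{holzegel2013decay,holzegel2014quasimodes}), one can hope that an effective potential for the Teukolsky variable violates a Breitenlohner--Freedman-type bound, either through a near-extremal mechanism (as in the uncharged case of Theorem \ref{rough}) or through a tachyonic/superradiant mechanism (as in the large charge case). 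Concretely, first I would derive the analogous Teukolsky system on Kerr-AdS and reduce the mode problem to an ODE boundary value problem between the horizon and conformal infinity, then run a perturbative argument off a stationary solution (a ``Teukolsky hair''), exactly as in the roadmap advertised in Section \ref{sec: linear instability}.

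Given linear mode instability, the second movement is to upgrade it to nonlinear instability under Dirichlet boundary conditions. Here I would pursue two complementary strategies. The first is a bifurcation strategy: using the mode solutions (or, better, the stationary solutions built from them, analogous to the hairy black holes of the companion work \cite{weihaozheng}) as seeds, apply an implicit function theorem in a suitable weighted space of asymptotically AdS initial data to produce a one-parameter family of stationary vacuum (or Einstein--Maxwell) solutions branching off Kerr-AdS. The existence of such a branch with arbitrarily small data distinct from Kerr-AdS already contradicts asymptotic stability. The second strategy is a direct instability argument in the spirit of Moschidis: exploit the reflecting Dirichlet boundary to set up a resonant cascade of modes whose energies shift to higher frequencies, eventually concentrating enough mass-energy in a small region to trigger trapped surface formation. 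For this, the inverse logarithmic decay rate established by Graf--Holzegel \cite{graf2024linear1,graf2024linear2} plays the role that the Holzegel--Luk--Smulevici--Warnick non-decay for reflecting pure AdS plays in Moschidis' work --- it is the quantitative manifestation of the absence of a sufficient dissipation mechanism at infinity.

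The main obstacle, and the reason this is a conjecture rather than a theorem, is the passage from heuristics to a rigorous nonlinear statement. For the bifurcation approach, one has to show that the linearized operator at Kerr-AdS has the requisite Fredholm properties and that the constructed branch consists of geometrically distinct solutions (modulo diffeomorphism and gauge), not just a reparametrization of Kerr-AdS; this requires a precise understanding of the modular parameters of the Kerr-AdS family, as well as a careful treatment of the conformal boundary data under Dirichlet conditions. For the Moschidis-type approach, the central difficulty is that the Einstein vacuum equations have no analogue of the null dust or Vlasov matter that makes the weak turbulence mechanism tractable: one must both quantify the nonlinear coupling of high-frequency modes reflecting off infinity and close a bootstrap argument controlling the backreaction on the Kerr-AdS geometry over exponentially long time intervals. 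I would expect the first rigorous instability result in this direction to bypass this difficulty by restricting attention to axisymmetric or polarized perturbations, or by coupling to a charged Klein--Gordon field so that the growing modes of the present paper can be directly invoked, with a genuinely vacuum proof remaining well out of reach.
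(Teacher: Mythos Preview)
The statement you are attempting to prove is a \emph{conjecture}, not a theorem: the paper merely records it as an open problem quoted from \cite{holzegel2013decay} and offers no proof whatsoever. There is therefore no ``paper's own proof'' against which to compare your proposal. The paper's contribution is restricted to growing mode solutions of the Klein--Gordon equation on Reissner--Nordstr\"om-AdS; the nonlinear instability of Kerr-AdS for the vacuum Einstein equations with Dirichlet boundary conditions remains an open problem, and the paper only mentions it to situate its own linear results within the broader landscape.

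What you have written is not a proof but a research program --- and a reasonable one, correctly identifying the two natural avenues (bifurcation off a nontrivial stationary branch, and a Moschidis-type turbulent cascade) and their respective obstructions. However, you should be aware that the mechanisms isolated in this paper do not transfer directly to the vacuum Kerr-AdS setting: the near-extremal instability here depends essentially on the black hole charge $e$ (see Remark~\ref{rmk:large charge} and condition~\eqref{large charge condition initially}), and the tachyonic mechanism requires a scalar field charge $q_{0}$, neither of which is present for vacuum perturbations of Kerr-AdS. Indeed, Graf--Holzegel \cite{graf2023mode} have proved mode \emph{stability} for the Teukolsky equations on Kerr-AdS satisfying the Hawking--Reall bound, so your first movement --- constructing growing Teukolsky modes in that regime --- is already known to fail. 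The conjectured instability is expected to be a genuinely nonlinear effect driven by the slow (inverse logarithmic) linear decay, not by a linear growing mode; this is precisely why it remains a conjecture.
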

In contrast, the growing mode solutions we construct in this paper show instability even at the linear level. We naturally expect the nonlinear instability of the Reissner--Nordström-AdS spacetimes as solutions to the Einstein--Maxwell equations with a negative cosmological constant $\Lambda$ under reflecting boundary conditions.

\subsection{Comparing mechanisms for growing mode solutions to the Klein--Gordon equation on different spacetimes}

\subsubsection{The known instability mechanism I: Superradiant instability}
\label{previous work on mode instability}
Growing mode solutions on the Kerr and Kerr-AdS spacetimes violating the Hawking--Reall bound have been constructed in \cite{shlapentokh2014exponentially,dold2017unstable} respectively. Both spacetimes exhibit superradiance induced by rotation, namely, negative energy at the event horizon, which gives rise to the growing mode solutions\footnote{For the Klein--Gordon equation on Kerr, growing mode solutions are due to the combination of the superradiance and massive character of the equation. Heuristically, the Klein--Gordon mass serves as a reflecting mirror that would reflect the superradiance and result in a ``black-hole bomb''; see \cite{shlapentokh2014exponentially,damour1976quantum,detweiler1980klein,zouros1979instabilities}. Note on the contrary, the massless wave equation on the Kerr spacetime does not admit exponentially growing mode solutions \cite{shlapentokh2015quantitative,whiting1989mode}. Moreover, the solutions to the wave equation on Kerr are bounded and decay in time \cite{dafermos2010decay,dafermos2016decay,andersson2015hidden}.}. See \cite{shlapentokh2014exponentially,dold2017unstable} for a detailed discussion of this superradiant instability. See also \cite{press1972floating,damour1976quantum,detweiler1980klein,zouros1979instabilities} for some previous heuristic discussions.

While the rotation-induced superradiance phenomenon affects the Klein--Gordon equation on the Kerr/Kerr-AdS spacetimes, there is no such phenomenon on Reissner--Nordström/Reissner--Nordström-AdS since they are spherically symmetric. However, if one considers the charged Klein--Gordon equation on the charged spacetime, there is a charged analog of the superradiance induced by the coupling of the black hole charge $e$ and the scalar field charge $q_{0}$; see \cite{di2015superradiance,brito2020superradiance,denardo1973energetics,bachelot2004superradiance} and references therein. 
Growing mode solutions on Kerr-Newman-dS and Reissner--Nordström-dS have been constructed in \cite{besset2021existence}, for spacetimes where the product of the angular momentum and the Klein--Gordon mass is small compared to the product of the black hole charge $e$ and the scalar field charge $q_{0}$, which is a rigourous mathematical realization of this charge-induced superradiant instability. However, if the coupling is not strong, namely, the product of the black hole charge $e$ and the scalar field charge $q_{0}$ is small, then the exponential decay of the local energy is obtained in \cite{besset2020decay} for the Klein--Gordon equation on Reissner--Nordström-dS.

The Breitenlohner--Freedman bound $\alpha>-\frac{9}{4}$ is crucial to provide a positive scalar field energy near the infinity. This is exploited to prove the local well-posedness of the asymptotically AdS spacetime; see \cite{breitenlohner1982stability} for the original physics argument and \cite{holzegel2012well,warnick2013massive} for a rigorous mathematics proof. The charged Klein--Gordon equation \eqref{Klein-Gordon} takes the form of the uncharged Klein--Gordon equation with a effective mass $$\alpha_{eff} = \alpha-\frac{q_{0}^{2}A_{t}^{2}}{\bigl(-\frac{\Lambda}{3}\bigr)\Omega^{2}},$$ for which the relevant Breitenlohner--Freedman bound $\alpha_{eff}>-\frac{9}{4}$ is satisfied when $r$ approaches infinity. Hence, we still have the local well-posedness result. However, for large charge $q_{0}$, it is possible to violate the analogous condition of the Breitenlohner--Freedman bound\begin{equation*}
\alpha_{eff}\leq -\frac{9}{4}
\end{equation*}
for a bounded value of $r$. We exploit this fact in the proof of Lemma \ref{charged negative energy bound state}, which plays an important role in the construction. This superradiant mechanism has been identified as tachyonic instability in the context of physics literatures; see for instance \cite{brito2020superradiance,hartnoll2008holographic}.
\subsubsection{The new type of instability mechanism: Near-extremal instability}
\label{uncharged instability}
There is, however, no analogous superradiant mechanism for the uncharged Klein--Gordon equation on the Reissner--Nordström-AdS spacetime. Instead, the existence of growing mode solutions results from the black hole being too close to the extremality, which is called the near-extremal instability \cite{hartnoll2008holographic}.

We try to illustrate the mathematical meaning of this new type of instability by comparing the uncharged Klein--Gordon equations on Schwarzschild-AdS and Reissner--Nordström-AdS. Using the $(t,r,\theta,\varphi)$ coordinates as in \eqref{RN metric}, let $\Sigma_{t} = \{t = constant\}$ be the foliation of spacetimes. Putting the Dirichlet boundary conditions at timelike infinity, the standard energy estimate gives\begin{equation*}
\int_{\Sigma_{t_{1}}}T_{\mu\nu}X^{\mu}n^{\nu}dvol = \int_{\Sigma_{t_{2}}}T_{\mu\nu}X^{\mu}n^{\nu}dvol,
\end{equation*}
where $X = \partial_{t}$ is the unique timelike Killing vector field on the spacetimes, $n$ is the normal vector field on $\Sigma_{t}$, $dvol$ is the volume form of $\Sigma_{t}$, and the energy momentum tensor $T_{\mu\nu}$ for the Klein--Gordon equation is \begin{equation}
T_{\mu\nu} = \Re\left({\nabla_{\mu}\phi\overline{\nabla_{\nu}\phi}}\right)-\frac{1}{2}g_{\mu\nu}\left(g^{\alpha\beta}\nabla_{\alpha}\phi\overline{\nabla_{\beta}\phi}+\bigl(-\frac{\Lambda}{3}\bigr)\alpha\vert\phi\vert^{2}\right)
\end{equation}
The boundedness of the solution follows from a standard argument if the energy on the spacelike slice $\Sigma_{t = 0}$ is positive. The conservation of energy shows that the quantity \begin{equation*}
\int_{r_{+}}^{\infty}\int_{\mathbb{S}^{2}}\frac{T_{tt}}{\Omega^{2}}r^{2}\sin{\theta}drd\theta d\varphi
\end{equation*}
is independent of $t$. However, since the Klein--Gordon mass $\alpha$ in our setting is negative, the positivity of $T_{tt}$\begin{equation*}
T_{tt} = \frac{1}{2}\left(\vert\partial_{t} \phi\vert^{2}+\Omega^{2}\left(\Omega^{2}\vert\partial_{r}\phi\vert^{2}+\vert\slashed{\nabla}\phi\vert^{2}+\bigl(-\frac{\Lambda}{3}\bigr)\alpha\vert\phi\vert^{2}\right)\right)
\end{equation*}
is not guaranteed at first glance. On the Schwarzschild-AdS spacetime, the integral of the negative term $\bigl(-\frac{\Lambda}{3}\bigr)\alpha\vert\phi\vert^{2}$ can be absorbed by the integral of the derivative term $\Omega^{2}\vert\partial_{r}\phi\vert^{2}$ using the Hardy inequality. The following Hardy inequality can indeed be obtained:\begin{equation}
\label{nav hardy}
\int_{r_{+}}^{\infty}\int_{\mathbb{S}^{2}}\vert\phi\vert^{2}r^{2}\sin{\theta}drd\theta d\varphi\leq\frac{4}{9}\int_{r_{+}}^{\infty}\int_{\mathbb{S}^{2}}\vert\partial_{r}\phi\vert^{2}(r-r_{+})^{2}\left(r+r_{+}+\frac{r_{+}^{2}}{r}\right)^{2}drd\theta d\varphi.
\end{equation}
Using the above Hardy inequality, for the Schwarzschild-AdS spacetime, we have \begin{equation}
\begin{aligned}
&\int_{r_{+}}^{\infty}\int_{\mathbb{S}^{2}}\frac{T_{tt}}{\Omega^{2}}r^{2}\sin{\theta}drd\theta d\varphi = 
\int_{r_{+}}^{\infty}\int_{\mathbb{S}^{2}}\Omega^{2}\vert\partial_{r}\phi\vert^{2}r^{2}\sin\theta+\bigl(-\frac{\Lambda}{3}\bigr)\alpha\vert\phi\vert^{2}r^{2}\sin{\theta}drd\theta d\varphi\\\geq&
\int_{r_{+}}^{\infty}\int_{\mathbb{S}^{2}}\bigl(-\frac{\Lambda}{3}\bigr)(r-r_{+})\frac{1}{r^{2}}\left(\frac{r^{3}}{\bigl(-\frac{\Lambda}{3}\bigr)}+r_{+}^{3}r^{2}+r_{+}^{4}r+r_{+}^{5}\right)\vert\phi\vert^{2}\sin{\theta}drd\theta d\varphi>0.
\end{aligned}
\label{after hardy}
\end{equation}
Moreover, for the Reissner--Nordström-AdS spacetime, using the argument as above, one can in fact show that if \begin{equation}
\frac{e^{2}}{r_{+}}<1+\bigl(-\frac{\Lambda}{3}\bigr)r_{+}^{2},\label{weakly charged}
\end{equation}
the energy is still positive. Hence by the standard commuting vector field and redshift methods, one can still show the boundedness of the solutions. Combining \eqref{weakly charged} and \eqref{large}, we can show that growing mode solutions to the uncharged Klein--Gordon equation \eqref{Klein-Gordon} can only exist if\begin{equation}
\frac{e^{2}}{r_{+}^{2}}>1+\frac{1}{4\left(-\frac{3}{2}-\alpha\right)}.\label{large charge condition}
\end{equation}
In the above computation, the integral of the negative term $\bigl(-\frac{\Lambda}{3}\bigr)\alpha r^{2}\vert\phi\vert^{2}\sin{\theta}$ in the energy is controlled by the weighted $L^{2}$ norm of $\partial_{r}\phi$ by the Hardy inequality $\eqref{nav hardy}$. This weight in \eqref{nav hardy} decays quadratically like $(r-r_{+})^{2}$ towards the event horizon, whereas the weight $\Omega^{2}$ of the $L^{2}$ norm of $\partial_{r}\phi$ in the original energy of the Schwarzschild-AdS spacetime decays at a linear rate $C(r-r_{+})$ with $C$ bounded away from $0$. This fact is crucial in showing the absence of the negative energy. Hence, near the horizon, the integrand on the right-hand side of $\eqref{after hardy}$ is positive, giving the hope to absorb the negative term in the whole black hole exterior region $r>r_{+}$. However, on the extremal Reissner--Nordström-AdS spacetime, $\Omega^{2}$ decays exactly quadratically like $(r-r_{+})^{2}$. A simple computation shows that the analogous integrand on the right-hand side of $\eqref{after hardy}$ might be negative near the event horizon. Hence, for the near-extremal Reissner--Nordström-AdS spacetime, if the initial data $(\phi,\partial_{t}\phi){\bigl|_{t = 0}}$ is supported near the event horizon, the energy $\int_{\Sigma_{0}}T_{\mu\nu}X^{\mu}n^{\nu}$ might be negative; see Section \ref{sec:negative} for the proof of the existence of negative energy in near-extremal Reissner--Nordström-AdS spacetimes. This phenomenon is the key observation allowing us to construct the growing mode solution in our main theorem.

\subsection{Hairy black holes and connection to the AdS/CFT correspondence}
\label{physics}
\subsubsection{Connection between growing mode solutions, oscillating mode solutions, and hairy black holes}
The well-known no-hair conjecture in General Relativity asserts that all stationary black hole solutions to the Einstein (coupled with reasonable matter fields) equations are uniquely determined by the mass, charge, and angular momentum; for instance see the review \cite{heusler1998stationary} and references therein.

In \cite{shlapentokh2014exponentially}, Shlapentokh-Rothman constructed solutions of the form $e^{-i\omega t}e^{im\varphi}S_{\kappa ml}(\theta)R(r)$ to the Klein--Gordon equation on the Kerr spacetime, which is interpreted as a ``linear hair'', in violation to at least the spirit of the conjecture. He showed the existence of growing mode solutions ($\omega$ has a negative imaginary part) and time-periodic solutions ($\omega\in\mathbb{R}$). Later, with Chodosh \cite{chodosh2017time}, they further constructed one-parameter families of the stationary axisymmetric asymptotically flat solutions $(\mathcal{M},g_{\delta},\phi_{\delta})$ to the Einstein--Klein--Gordon equations bifurcating off the Kerr solution. Their work gives a counter-example to the no-hair conjecture.\footnote{In fact, their result violates the no-hair conjecture to the extend that the metric is stationary but the scalar field is time-periodic.} See also \cite{herdeiro2014kerr,herdeiro2014ergosurfaces,brihaye2014myers,benone2014kerr,herdeiro2014non,cunha2015shadows} for previous physics and numerics results in this direction.

In the works \cite{van2021violent,li2023kasner}, the mathematical study of the hairy black hole interiors has been initiated. Putting the characteristic initial data $(g,\phi)$ on the event horizon, corresponding to a stationary non-zero scalar field $\phi$, Van de Moortel \cite{van2021violent} investigated the interior of the stationary Einstein--Maxwell--Klein--Gordon hairy black holes and their singular structures, assuming the scalar field is uncharged. Later, Van de Moortel and Li studied the analogous problem with the charged scalar field \cite{li2023kasner}. However, the existence of a stationary black hole exterior corresponding to the solutions in \cite{van2021violent,li2023kasner} was open; see Open Problem iv in \cite{van2021violent}. We resolve this problem in the companion work \cite{weihaozheng}, taking advantage of the method in this paper, which gives the affirmative answer to Open Problem iv and provides examples of asymptotically AdS hairy black holes whose interior is governed by the results of \cite{van2021violent,li2023kasner}.
\subsubsection{AdS/CFT correspondence, holographic superconductor, and near-horizon geometry}
Asymptotically AdS black holes have also emerged as an object of interest in the celebrated AdS/CFT correspondence \cite{maldacena1999large,maldacena2003eternal,witten1998anti}. Growing efforts have been spent in finding holographic analogs of superconductors, a problem for which hairy black holes appear as natural candidates. In fact, the existence of non-trivial stationary asymptotically AdS hairy black holes has been predicted in the paper \cite{hartnoll2008holographic}, using heuristics relying on the so-called near-horizon geometry. Indeed, it is possible to ``map'' the near-extremal Reissner--Nordström-AdS spacetime to $AdS_{2}\times\mathbb{S}^{2}$ by a limiting procedure. The heuristics of \cite{hartnoll2008holographic} argue that the $2$-dimensional Breitenlohner--Freedman bound is relevant to the existence of growing mode solutions to the uncharged Klein--Gordon equation on the extremal Reissner--Nordström-AdS spacetime, specifically for $\alpha$ in the range
\begin{equation*}
-\frac{9}{4}<\alpha<-\frac{3}{2}.
\end{equation*}
Our result actually shows the growing mode solutions exist within this range for sub-extremal Reissner--Nordström-AdS black holes.
We also identify explicitly a sufficient condition \eqref{large} for the growing mode solutions to exist, which is not previously identified in \cite{hartnoll2008holographic}. While our method in principle also works to construct growing mode solutions on the extremal Reissner--Nordström-AdS spacetime, we leave this question to future works.

\paragraph{Outline of the rest of the paper}
In Section $\ref{Preliminary}$, we introduce the geometry of the Reissner--Nordström-AdS spacetime and discuss the near-horizon boundary conditions and near-infinity boundary conditions. In Section $\ref{main theorem statement}$, we give precise statements of our main theorems and discuss the difficulties and new ideas. The construction of the stationary solutions of \eqref{Klein-Gordon} is viewed as an important step toward the construction of the growing mode solutions. We construct the stationary solution with arbitratry boundary conditions in Section $\ref{trival hair}$ and with reflecting boundary conditions in Section $\ref{proof of linear hair theorem}$ and $\ref{Neumann section}$ respectively. Lastly, in section $\ref{growing mode solution}$, we prove the existence of the desired growing mode solutions.
\section{Acknowledgements}
The author would like to thank his advisor, Maxime Van de Moortel, for his kind support, continuous encouragement, numerous inspiring discussions, and valuable suggestions for the manuscript. Additionally, the author expresses special thanks to Zheng-Chao Han for several very enlightening discussions.

\section{Preliminary}
\label{Preliminary}
\subsection{Geometry of Reissner--Nordström-AdS space}
\label{geometry}
The Reissner--Nordström-AdS metric in $(t,r,\theta,\varphi)$ coordinates is given by
\begin{align*}
g_{RN} &= -\Omega^{2}dt^{2}+\frac{1}{\Omega^{2}}dr^{2}+r^{2}d\sigma^{2},\\
\Omega^{2} &= 1-\frac{2M}{r}+\frac{e^{2}}{r^{2}}+\left(-\frac{\Lambda}{3}\right)r^{2},
\end{align*}
where $M>0$ here is the mass of the black hole, $e\in\mathbb{R}$ is the charge of the black hole, and $\Lambda<0$ is the negative cosmological constant. The Reissner--Nordström-AdS spacetime is uniquely determined by parameters $(M,e,\Lambda)$. We refer to these parameters $(M,e,\Lambda)$ as sub-extremal if the function $\Omega^{2}$ admits two distinct positive roots, denoted as $0<r_{-}<r_{+}$. The region $r>r_{+}$ is known as the exterior of the Reissner--Nordström-AdS black hole and the hypersurface $\{r = r_{+}\}$ is called the event horizon. We can define the horizon temperature $T$ as: \begin{equation}
\label{temperature}
T: = \frac{d\Omega^{2}}{dr}(r_{+}).
\end{equation} 
For sub-extremal parameters, since $r_{+}$ is the largest real root, we have $T>0$. Parameters $(M,e,\Lambda)$ are called extremal if $\Omega^{2}$ has two identical positive roots. For extremal Reissner--Nordström-AdS, we have $T = 0$. 

Note that the Reissner--Nordström-AdS spacetime breaks down in the usual $(t,r,\theta,\varphi)$ coordinates when $r = r_{+}$. To extend the spacetime smoothly to the event horizon, let\begin{equation*}
r^{*}(r) = \int_{\infty}^{r}\left(1-\frac{2M}{r}+\frac{e^{2}}{r^{2}}+\bigl(-\frac{\Lambda}{3}\bigr)r^{2}\right)^{-1}dr.
\end{equation*}
we can see that\begin{align*}
\lim_{r\rightarrow\infty}r^{*}(r) &= 0,\\
\lim_{r\rightarrow r_{+}}r^{*}(r) &= -\infty.
\end{align*}
Then we can define the outgoing Eddington--Finkelstein (EF) coordinates by\begin{equation*}
v = t+r^{*},\quad r = r.
\end{equation*}
The Reissner--Nordström-AdS metric in the outgoing EF coordinates $(v,r,\theta,\varphi)$ becomes\begin{equation*}
g_{RN} = -\Omega^{2}dv^{2}+2dvdr+r^{2}d\sigma^{2}.
\end{equation*}
As we can see from the asymptotic behavior of $r^{*}(r)$ when $r$ approaches infinity, the affine length of the curve with constant $(t,\theta,\varphi)$ is finite, contrary to the asymptotically flat Reissner--Nordström spacetime, where $r^{*}$ there approaches infinity. By letting $g = \Omega^{2}\widetilde{g}$, we can show that the boundary of $(\mathcal{M},\widetilde{g})$ with coordinates $(t,r^{*} = 0,\theta,\varphi)$ is timelike; see Figure $\ref{fig:penrose}$ for the Penrose diagram of the Reissner--Nordström-AdS spacetime.
\begin{figure}[htbp]
    \centering
    \begin{tikzpicture}[scale=2]
        % Conformal boundary
        \draw[dashed] (-1,-1) node[left] {$i^{+}$} -- (0,-2);
        \draw[dashed] (1,-1) node[right] {$i^{+}$} -- (0,-2);
        \draw[dashed] (-1,1) node[left] {$i^{+}$} -- (0,2);
        \draw[dashed] (1,1) node[right] {$i^{+}$} -- (0,2);
        
        % Past and future infinity
        \draw[dashed] (-1,1) -- (1,-1);
        \draw[dashed] (-1,-1) -- (1,1);
        \draw (-1,2) -- (-1,-2);
        \draw (1,2) -- (1,-2);
        \node at (1.3,0) {$r = \infty$};
        \node at (-1.3,0) {$r = \infty$};
        \node at (1.3,1.3) {$r = 0$};
        \node at (-1.3,1.3) {$r = 0$};
        \node at (1.3,-1.3) {$r = 0$};
        \node at (-1.3,-1.3) {$r = 0$};
        \node at (0.6,0.3) {$r = r_{+}$};
        \node at (0.6,-0.3) {$r = r_{+}$};
        \node at (-0.6,0.3) {$r = r_{+}$};
        \node at (-0.6,-0.3) {$r = r_{+}$};
        \node at (0.5,1.2) {$r = r_{-}$};
        \node at (-0.5,1.2) {$r = r_{-}$};\node at (0.5,-1.2) {$r = r_{-}$};\node at (-0.5,-1.2) {$r = r_{-}$};
        % Diamond boundary
        
    \end{tikzpicture}
    \caption{Penrose diagram for the asymptotically AdS spacetime.}
    \label{fig:penrose}
\end{figure}
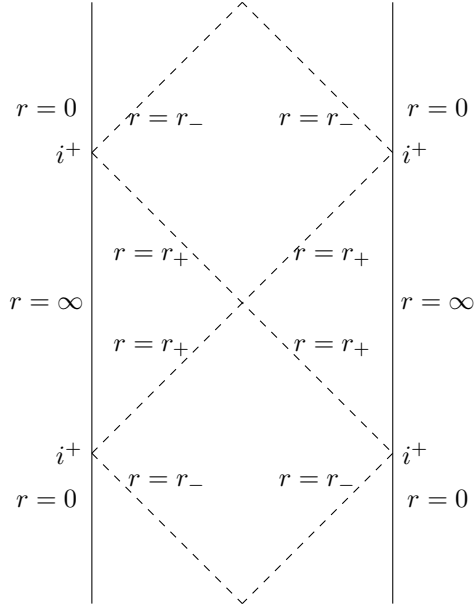

\subsection{Parameters for the spherically symmetric Klein--Gordon equation}
\label{section:parameters}
The spherically symmetric Klein--Gordon equation on the Reissner--Nordström-AdS background is\begin{equation}
-\frac{r^{2}}{\Omega^{2}}\frac{\partial^{2}\phi}{\partial t^{2}}-2iq_{0}r^{2}\frac{A}{\Omega^{2}}\partial_{t}\phi+\frac{\partial}{\partial r}\left(r^{2}\Omega^{2}\frac{\partial\phi}{\partial r}\right) = \left(\bigl(-\frac{\Lambda}{3}\bigr)\alpha-\frac{q_{0}^{2}A^{2}}{\Omega^{2}}\right)r^{2}\phi.\label{Klein-Gordon on RN}
\end{equation}
If we further assume $\phi$ is stationary, then we can get the stationary Klein--Gordon equation:\begin{equation}
\label{stationary Klein-Gordon}
\frac{d}{dr}\left(r^{2}\Omega^{2}\frac{d\phi}{dr}\right) = \left(\bigl(-\frac{\Lambda}{3}\bigr)\alpha-\frac{q_{0}^{2}A^{2}}{\Omega^{2}}\right)r^{2}\phi.
\end{equation}
We call $(M,e,\Lambda,\alpha,q_{0})$ parameters of the Klein--Gordon equation $\eqref{Klein-Gordon on RN}$ and $(M,e,\Lambda,\alpha,q_{0})$ sub-extremal parameters if the corresponding $(M,e,\Lambda)$ are sub-extremal parameters for Reissner--Nordström-AdS spacetime. However, since the sub-extremal condition concerns the roots of $\Omega^{2}$, which are solutions of a quartic equation, it is hard to provide an algebraic criterion for when the parameters will satisfy the sub-extremal condition. Hence it may be inconvenient to use $(M,e,\Lambda,\alpha,q_{0})$ as parameters. In the following proposition, we prove that we can always use $(M,r_{+},\Lambda,\alpha,q_{0})$ as parameters.
\begin{proposition}
The parameters transformation $(M,e,\Lambda,\alpha,q_{0})\rightarrow(M,r_{+},\Lambda,\alpha,q_{0})$ is a regular transformation. The new parameters $(M,r_{+},\Lambda,\alpha,q_{0})$ satisfy the sub-extremal condition if and only if\begin{align}
&M\geq M_{e = 0} :=\frac{r_{+}}{2}\left(1+\bigl(-\frac{\Lambda}{3}\bigr)r_{+}^{2}\right) ,\label{subextremal1}\\&
M<M_{0} := r_{+}\left(1+2\bigl(-\frac{\Lambda}{3}\bigr)r_{+}^{2}\right).\label{subextremal2}
\end{align}
In other words, for fixed $(r_{+},\Lambda,\alpha,q_{0})$, the sub-extremal condition can be achieved by decreasing $M$ to be smaller than $M_{0}$ but larger than $M_{e = 0}$.
\end{proposition}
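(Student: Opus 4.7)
The plan is to make the horizon equation $\Omega^2(r_+) = 0$ the defining relation for the inverse map, and then use a factorization of $r^2 \Omega^2$ to characterize sub-extremality. Solving $\Omega^2(r_+) = 0$ for the black hole charge yields the polynomial expression $e^2 = 2 M r_+ - r_+^2 - \bigl(-\tfrac{\Lambda}{3}\bigr) r_+^4$, so the backward map $(M,r_+,\Lambda) \mapsto (M,e,\Lambda)$ (with the convention $e \geq 0$) is manifestly smooth wherever the right-hand side is non-negative, i.e.\ wherever $M \geq M_{e=0}$. The forward map is smooth on the sub-extremal locus by the implicit function theorem applied to $\Omega^2(r_+;M,e,\Lambda) = 0$, because sub-extremality forces $r_+$ to be a simple zero of $\Omega^2$, i.e.\ $T = (\partial_r \Omega^2)(r_+) > 0$. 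This establishes regularity of the transformation and identifies \eqref{subextremal1} as precisely the condition $e^2 \geq 0$.

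It remains to characterize, among parameters with $M \geq M_{e=0}$, which ones yield a sub-extremal spacetime. The idea is to factor $r^2\Omega^2(r) = (r - r_+)Q(r)$ and analyze the cubic $Q$. Substituting the explicit formula for $e^2$ above, a short polynomial division gives
\begin{equation*}
Q(r) = \bigl(-\tfrac{\Lambda}{3}\bigr) r^3 + \bigl(-\tfrac{\Lambda}{3}\bigr) r_+ r^2 + \bigl(1 + \bigl(-\tfrac{\Lambda}{3}\bigr) r_+^2\bigr) r - \tfrac{e^2}{r_+}.
\end{equation*}
The pivotal observation is that $Q$ is strictly increasing on $\mathbb{R}$: its derivative $Q'(r)$ is a quadratic with positive leading coefficient $3\bigl(-\tfrac{\Lambda}{3}\bigr)$ and discriminant $-8\bigl(-\tfrac{\Lambda}{3}\bigr)^2 r_+^2 - 12\bigl(-\tfrac{\Lambda}{3}\bigr) < 0$, so $Q' > 0$ everywhere. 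Combined with $Q(0) = -e^2/r_+ \leq 0$ and $Q(r) \to +\infty$ as $r \to \infty$, this forces $Q$ to have a unique real root $r_* \in [0, \infty)$, which coincides with the inner horizon $r_-$ when $e \neq 0$.

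Sub-extremality of $(M,e,\Lambda)$ is equivalent to $r_* < r_+$, which by strict monotonicity of $Q$ reduces to $Q(r_+) > 0$. A direct substitution using $e^2/r_+ = 2M - r_+ - \bigl(-\tfrac{\Lambda}{3}\bigr) r_+^3$ collapses this to $Q(r_+) = 4\bigl(-\tfrac{\Lambda}{3}\bigr) r_+^3 + 2 r_+ - 2M > 0$, i.e.\ $M < r_+\bigl(1 + 2\bigl(-\tfrac{\Lambda}{3}\bigr) r_+^2\bigr) = M_0$, which is exactly \eqref{subextremal2}. Combining with the previous paragraph closes the equivalence.

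The only step that is not purely mechanical is verifying that $Q$ has a unique real root, which rests on the negative discriminant of $Q'$ — a small algebraic identity specific to the Reissner--Nordström-AdS structure (it uses the relative signs in $\Omega^2$ together with the positivity of $-\Lambda/3$). The boundary case $M = M_{e=0}$ (then $e = 0$ and $r_* = 0$, so strictly speaking there is only one positive root) should be mentioned explicitly as a degenerate limit, consistent with the convention set in Section~\ref{geometry}.
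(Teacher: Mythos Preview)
Your proof is correct and takes a genuinely different route from the paper's. The paper argues via Vieta's formulas on the full quartic $r^2\Omega_M^2$: from $\sum r_i = 0$ and $\prod r_i = e^2/\bigl(-\tfrac{\Lambda}{3}\bigr) > 0$ it splits into the two cases (two real plus a complex-conjugate pair, or four real roots with two positive and two negative) and then uses $T>0$ to conclude that $r_+$ is the largest root. Your approach instead factors $r^2\Omega_M^2 = (r-r_+)Q(r)$ and shows the cubic $Q$ is strictly monotone via the discriminant of $Q'$, reducing sub-extremality to the single inequality $Q(r_+)>0$. Your route avoids the case split entirely and delivers the biconditional cleanly, since $Q(r_+) = 2(M_0 - M)$ is an exact identity (equivalently $Q(r_+) = r_+^2 T$); the paper's argument as written is really only the ``if'' direction and leaves the converse implicit. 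The Vieta argument, on the other hand, requires no computation of the cubic or its discriminant and is perhaps more robust to variations in the metric. Your remark about the $e=0$ boundary case is appropriate and matches the paper's convention of including $M = M_{e=0}$ in the admissible range.
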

\begin{proof}
By the form of $\Omega^{2}$, we have\begin{equation*}
e^{2} = -r_{+}^{2}\left(1-\frac{2M}{r_{+}}+\bigl(-\frac{\Lambda}{3}\bigr)r_{+}^{2}\right)
\end{equation*}
It is easy to check the Jacobian of the map $$(M,r_{+},\Lambda,\alpha,q_{0})\rightarrow(M,e,\Lambda,\alpha,q_{0})$$ is non-zero. Hence the transformation is regular. By condition $\eqref{subextremal1}$, we have $e^{2}\geq0$. 

Calculating the horizon temperature $T$, we can get\begin{align*}
T =& \frac{2M}{r_{+}^{2}}-\frac{2e^{2}}{r_{+}^{3}}+2\left(-\frac{\Lambda}{3}\right)r_{+}\\=&
\frac{2}{r_{+}^{2}}(M_{0}-M)>0
\end{align*}
by condition \eqref{subextremal2}. Since \begin{equation*}
\Omega^{2} = \frac{\left(-\frac{\Lambda}{3}\right)r^{4}+r^{2}-2Mr+e^{2}}{r^{2}},
\end{equation*}
by Vieta Theorem, we have \begin{align}
&\sum_{i = 1}^{4}r_{i} = 0,\label{Vieta sum}\\&
\prod\limits_{1}^{4} r_{i} = \frac{e^{2}}{\bigl(-\frac{\Lambda}{3}\bigr)}>0,\label{Vieta product}
\end{align}
where $r_{i}$, $i = 1,\dots,4$ are the four roots of the polynomial $\left(-\frac{\Lambda}{3}\right)r^{4}+r^{2}-2Mr+e^{2}$.

Given that $\Omega^{2}$ already has a positive root $r_{+}$, either it has two real roots $r_{1} = r_{+}$ and $r_{2} = r_{-}$ along with two non-real conjugate roots $r_{3}$ and $r_{4}$, or it has four real roots $r_{1}>r_{2}>r_{3}>r_{4}$. For the first case, we have $r_{3}\times r_{4}>0$. Thus we have $r_{1}>0$, $r_{2}>0$ by $\eqref{Vieta product}$. Since $T>0$ by condition $\eqref{subextremal2}$, $r_{+}$ is the largest root and $(M,e,\Lambda)$ are sub-extremal. If $\Omega^{2}$ has four real roots, then by $\eqref{Vieta sum}$ and $\eqref{Vieta product}$, we have $r_{1},\ r_{2}$ positive and $r_{3},\ r_{4}$ negative. Since $r_{+}>0$ and $T>0$, then $r_{+}$ is the largest root and parameters are sub-extremal.
\end{proof}

\subsection{Boundary conditions}
\label{sec: boundary conditions}
In this section, we give the precise definition of Dirichlet, Neumann, and Robin boundary conditions. Let $\Delta: = \sqrt{\frac{9}{4}+\alpha}$.
\begin{definition}\textup{\cite{warnick2013massive}}
We say a $C^{1}$ function $f$ on $\mathcal{M}$ obeys Dirichlet, Neumann, or Robin boundary conditions if the following holds:
\begin{enumerate}
  \item[(1)] Dirichlet:\begin{equation*}
  r^{\frac{3}{2}-\Delta}f\rightarrow 0,\quad r\rightarrow \infty.
  \end{equation*}
  \item[(2)] Neumann:\begin{equation*}
  r^{2\Delta+1}\frac{d}{dr}\left(r^{\frac{3}{2}-\Delta}f\right) \rightarrow 0,\quad r\rightarrow\infty.
  \end{equation*}
  \item[(3)] Robin:\begin{equation*}
  r^{2\Delta+1}\frac{d}{dr}\left(r^{\frac{3}{2}-\Delta}\phi\right)+\beta r^{\frac{3}{2}-\Delta}\phi\rightarrow 0,\quad r\rightarrow\infty,
  \end{equation*}
  where $\beta$ is a real constant.
\end{enumerate}
\end{definition}

\subsection{Mode solutions}
Assume $\phi = e^{i\omega t}\psi(r)$ is a mode solution of $\eqref{Klein-Gordon on RN}$ satisfying the Dirichlet boundary condition, we can get the equation for $\psi$:\begin{equation}
\label{Klein-Gordon equation for psi}
\frac{d}{dr}\left(r^{2}\Omega^{2}\frac{d\psi}{dr}\right) = \left(\bigl(-\frac{\Lambda}{3}\bigr)\alpha-\frac{q_{0}^{2}A^{2}}{\Omega^{2}}-\frac{\omega^{2}}{\Omega^{2}}-\frac{2q_{0}A\omega}{\Omega^{2}}\right)r^{2}\psi.
\end{equation}
To facilitate the discussion of solutions under Neumann boundary conditions, letting $\psi = r^{-\frac{3}{2}+\Delta}\Psi$, we can rewrite the equation $\eqref{Klein-Gordon equation for psi}$ as:\begin{equation}
\label{twisted Klein-Gordon equation for psi}
\begin{aligned}
&\frac{d}{dr}\left(r^{2\Delta-1}\Omega^{2}\frac{d\Psi}{dr}\right) \\=&
r^{2\Delta-1}\left(\bigl(-\frac{\Lambda}{3}\bigr)\alpha-\frac{q_{0}^{2}A^{2}}{\Omega^{2}}-\frac{\omega^{2}}{\Omega^{2}}-\frac{2q_{0}A\omega}{\Omega^{2}}+(\frac{3}{2}-\Delta)(-\frac{1}{2}+\Delta)\frac{\Omega^{2}}{r^{2}}+(\frac{3}{2}-\Delta)\frac{1}{r}\frac{d\Omega^{2}}{dr}\right) \Psi.
\end{aligned}
\end{equation}
From the asymptotic analysis, the behavior of $\psi$ near $r = \infty$ is given by Dirichlet, Neumann, or Robin conditions. Next, we derive the asymptotic behavior of $\psi$ when $r\rightarrow r_{+}$. Since the Reissner--Nordström-AdS metric can be extended to the event horizon in the outgoing EF coordinates, we require the solution $\phi$ of the Klein--Gordon equation can also be extended to the event horizon. We have\begin{equation*}
\phi = e^{i\omega t}\psi(r) = e^{i\omega(t+r^{*})}e^{-i\omega r^{*}}\psi(r).
\end{equation*}
Hence to do the extension, we require that $\rho(r): = e^{-i\omega r^{*}}\psi$ can extended to be a smooth function on $[r_{+},\infty)$. We have the following relation\begin{align}
\psi &= e^{i\omega r^{*}}\rho(r),\label{function relation}\\
\psi&\approx Ce^{i\omega r^{*}},\quad r\rightarrow r_{+}.
\end{align}
From $\eqref{function relation}$, we have \begin{equation}
\frac{d\psi}{dr^{*}} = i\omega\psi+O(r-r_{+}).\label{near horizon for psi}
\end{equation}
For growing mode solutions $\Im(\omega)<0$, we have the asymptotic decay behavior:
\begin{equation}
\vert\psi\vert\approx \vert r-r_{+}\vert^{-C\Im(\omega)}.\label{decay of growing mode near horizon}
\end{equation}
We can prove the following result:
\begin{proposition}
For any given sub-extremal parameters $(M,r_{+},\Lambda,\alpha,q_{0})$, the only real mode solution to the Klein-Gordon equation $\eqref{Klein-Gordon equation for psi}$ with Dirichlet or Neumann boundary conditions which can be extended smoothly to the horizon is the static solution $\phi(t,r) = \psi(r)$ with $\omega = 0$.
\end{proposition}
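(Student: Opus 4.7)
The plan is to derive an energy-type identity by multiplying~\eqref{Klein-Gordon equation for psi} by $\bar\psi$, integrating over the black hole exterior $(r_+,\infty)$, and extracting the imaginary part. Because $\omega\in\mathbb{R}$ and every coefficient on the right-hand side of~\eqref{Klein-Gordon equation for psi} is real, both bulk integrals produced by the identity are real; the imaginary part of the identity will therefore reduce to a pure boundary flux computation, and a nontrivial horizon flux proportional to $\omega$ will force $\omega=0$.

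Concretely, one integration by parts yields
\begin{equation*}
\bigl[r^{2}\Omega^{2}\,\bar\psi\,\psi'\bigr]_{r_+}^{\infty} - \int_{r_+}^{\infty} r^{2}\Omega^{2}\,|\psi'|^{2}\,dr = \int_{r_+}^{\infty}\left(\bigl(-\tfrac{\Lambda}{3}\bigr)\alpha - \tfrac{q_{0}^{2}A^{2}}{\Omega^{2}} - \tfrac{\omega^{2}}{\Omega^{2}} - \tfrac{2q_{0}A\omega}{\Omega^{2}}\right) r^{2}\,|\psi|^{2}\,dr,
\end{equation*}
with manifestly real integrands. Using the regularity ansatz $\psi = e^{i\omega r^{*}}\rho$ with $\rho$ smooth at $r_+$, a direct computation gives $r^{2}\Omega^{2}\,\bar\psi\,\psi' = r^{2}\Omega^{2}\,\bar\rho\,\rho' + i\omega r^{2}\,|\rho|^{2}$, so the horizon boundary contribution is exactly $i\omega r_+^{2}\,|\rho(r_+)|^{2}$. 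At infinity, the reflecting boundary condition selects a single complex-linear subspace of solutions; because the underlying ODE has real coefficients, this subspace is spanned by a real-valued solution, so any admissible $\psi$ is a complex constant times a real function, and hence $\Im\bigl(r^{2}\Omega^{2}\,\bar\psi\,\psi'\bigr)\equiv 0$ near infinity---even in the Neumann case, where the boundary term itself grows in modulus like $r^{2\Delta}$.

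Taking imaginary parts of the identity then yields $-\omega\, r_+^{2}\,|\rho(r_+)|^{2}=0$. To finish, I would show $\rho(r_+)\neq 0$ for any nontrivial $\psi$: after the substitution $\psi = e^{i\omega r^{*}}\rho$ the horizon becomes a regular singular point for the ODE satisfied by $\rho$, with Frobenius branches $(r-r_+)^{\pm i\omega/T}\sim e^{\pm i\omega r^{*}}$ (where $T>0$ is the horizon temperature~\eqref{temperature}); only the $+$ branch extends smoothly in Eddington--Finkelstein coordinates, so the one-complex-dimensional space of regular solutions is parametrized by the single nonzero datum $\rho(r_+)$. This gives $\omega=0$ as claimed. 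The main subtlety will be the Neumann boundary contribution at infinity, where $r^{2}\Omega^{2}\,\bar\psi\,\psi'$ blows up but its imaginary part must vanish; this is resolved by exploiting the reality of the ODE coefficients when $\omega\in\mathbb{R}$. The indicial computation at $r_+$ is likewise clean, since $A(r_+)=0$ under the gauge~\eqref{equation for A}, so the $q_{0}A\omega/\Omega^{2}$ term does not contribute to the indicial equation.
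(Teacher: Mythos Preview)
Your argument is correct and follows the same Wronskian/energy-identity strategy as the paper: multiply by $\bar\psi$, observe that for real $\omega$ the quantity $\Im\bigl(r^{2}\Omega^{2}\bar\psi\,\psi'\bigr)$ is constant in $r$, and match its horizon value $r_{+}^{2}\omega\,|\rho(r_{+})|^{2}$ against its value at infinity. The one substantive difference is your treatment of the Neumann case. The paper handles Neumann by passing to the twisted equation~\eqref{twisted Klein-Gordon equation for psi} and rerunning the argument there, so that the boundary flux at infinity genuinely decays. You instead stay with the untwisted equation and observe that, because the ODE has real coefficients when $\omega\in\mathbb{R}$, the one-dimensional Neumann branch is spanned by a real solution; hence $\bar\psi\,\psi'$ is real and the imaginary part of the boundary term vanishes identically, even though its modulus diverges like $r^{2\Delta}$. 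This is a clean alternative that avoids introducing the twisted formulation at this stage, and it also makes explicit the nondegeneracy $\rho(r_{+})\neq 0$, which the paper leaves implicit. One small slip: the Frobenius exponents $(r-r_{+})^{\pm i\omega/T}$ you quote are those of the $\psi$-equation, not the $\rho$-equation; after the substitution the indicial roots for $\rho$ are $0$ and $-2i\omega/T$, so the smooth branch is $\rho\sim 1$ with $\rho(r_{+})\neq 0$, which is exactly the conclusion you want.
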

\begin{proof}
Multiplying $\overline{\psi}$ and taking the imaginary part of the equation $\eqref{Klein-Gordon equation for psi}$, we have\begin{equation*}
\Im\left(\frac{d}{dr}\left(r^{2}\Omega^{2}\frac{d\psi}{dr}\right)\overline{\psi}\right) = r^{2}\vert\psi\vert^{2}\Im\left(\bigl(-\frac{\Lambda}{3}\bigr)\alpha-\frac{q_{0}^{2}A^{2}}{\Omega^{2}}-\frac{\omega^{2}}{\Omega^{2}}-\frac{2q_{0}A\omega}{\Omega^{2}}\right).
\end{equation*}
Hence if $\omega$ is real, we have\begin{equation*}
\frac{d}{dr}\Im\left(r^{2}\Omega^{2}\overline{\psi}\frac{d\psi}{dr}\right) = 0.
\end{equation*}
For $\psi$ satisfying the Dirichlet boundary condition, using $\eqref{near horizon for psi}$ and $\eqref{function relation}$, we have\begin{equation*}
r_{+}^{2}\lim_{r\rightarrow r_{+}}\Im\left(\Omega^{2}\overline{\psi}\frac{d\psi}{dr}\right) = r_{+}^{2}\omega\vert \rho\vert^{2}(r_{+}) = 0.
\end{equation*}
Hence $\omega = 0$. For $\psi$ satisfying the Neumann boundary condition, using the equation $\eqref{twisted Klein-Gordon equation for psi}$ and reapplying the above argument, we can prove the same conclusion.
\end{proof}
Furthermore, for the uncharged Klein--Gordon equation, we have the following result about growing mode solutions.
\begin{proposition}
\label{imaginary omega pro}
All growing mode solutions to the uncharged equation \eqref{Klein-Gordon equation for psi} on the sub-extremal Reissner--Nordström-AdS spacetime have pure imaginary mode $\omega$.
\end{proposition}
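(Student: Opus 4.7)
The plan is to run a multiply-by-conjugate energy identity on the radial mode equation \eqref{Klein-Gordon equation for psi} with $q_{0}=0$, and exploit the fact that in the uncharged case the only $\omega$-dependence is through $\omega^{2}$, so the imaginary part of the identity will involve only $\Im(\omega^{2})=2\Re(\omega)\Im(\omega)$.

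Concretely, for the Dirichlet case I would multiply
$$\frac{d}{dr}\left(r^{2}\Omega^{2}\frac{d\psi}{dr}\right) = \left(\bigl(-\tfrac{\Lambda}{3}\bigr)\alpha-\frac{\omega^{2}}{\Omega^{2}}\right)r^{2}\psi$$
by $\overline{\psi}$ and integrate over $(r_{+},\infty)$. Integration by parts gives
$$\left[r^{2}\Omega^{2}\overline{\psi}\psi'\right]_{r_{+}}^{\infty}-\int_{r_{+}}^{\infty}r^{2}\Omega^{2}|\psi'|^{2}\,dr = \int_{r_{+}}^{\infty}\bigl(-\tfrac{\Lambda}{3}\bigr)\alpha\,r^{2}|\psi|^{2}\,dr-\omega^{2}\int_{r_{+}}^{\infty}\frac{r^{2}|\psi|^{2}}{\Omega^{2}}\,dr.$$
Taking the imaginary part, the kinetic term, the mass term, and the boundary terms (see next paragraph) all contribute something real or zero, leaving
$$0=-\Im(\omega^{2})\int_{r_{+}}^{\infty}\frac{r^{2}|\psi|^{2}}{\Omega^{2}}\,dr.$$
Since $\Im(\omega)<0$, the weight $|\psi|^{2}\sim(r-r_{+})^{-2\Im(\omega)/T}$ at the horizon makes the integral convergent and strictly positive, so $\Im(\omega^{2})=0$, and combined with $\Im(\omega)\neq 0$ this forces $\Re(\omega)=0$.

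The real content is verifying that the boundary terms vanish. At infinity, Dirichlet decay $\psi\sim r^{-3/2-\Delta}$ makes $r^{2}\Omega^{2}\overline{\psi}\psi'\sim r^{-2\Delta}\to 0$. At the horizon, I use the horizon-regular ansatz $\psi=e^{i\omega r^{*}}\rho(r)$ with $\rho$ smooth at $r_{+}$: then $\psi'=e^{i\omega r^{*}}(i\omega\Omega^{-2}\rho+\rho')$, so
$$r^{2}\Omega^{2}\overline{\psi}\psi'=e^{-2\Im(\omega)r^{*}}\bigl(i\omega\,r^{2}|\rho|^{2}+r^{2}\Omega^{2}\overline{\rho}\rho'\bigr).$$
Near $r_{+}$ one has $e^{-2\Im(\omega)r^{*}}\sim (r-r_{+})^{-2\Im(\omega)/T}$ with $-2\Im(\omega)/T>0$, and the factor $r^{2}\Omega^{2}$ provides an additional vanishing $(r-r_{+})$ in the second term; both contributions vanish in the limit $r\to r_{+}$. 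For the Neumann case I would run the same scheme on the twisted equation \eqref{twisted Klein-Gordon equation for psi} tested against $\overline{\Psi}$ with $\Psi=r^{3/2-\Delta}\psi$, so that the Neumann condition $r^{2\Delta+1}\Psi'\to 0$ with $\Psi$ bounded kills the infinity boundary term, while the horizon analysis is unchanged.

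The main obstacle is precisely this boundary-term analysis at the horizon: one has to make sure that the singular factor $\Omega^{-2}$ inside $\psi'$ is compensated by the exponential decay $e^{-2\Im(\omega)r^{*}}$ coming from horizon regularity combined with $\Im(\omega)<0$, and that all integrals above are finite. Once this is established the conclusion is immediate, because the uncharged hypothesis removes the cross term $2q_{0}A\omega/\Omega^{2}$ in \eqref{Klein-Gordon equation for psi}, which would otherwise contribute a purely imaginary piece proportional to $\Re(\omega)+i\Im(\omega)$ rather than to $\Im(\omega^{2})$, and the argument would not close.
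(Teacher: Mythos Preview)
Your proposal is correct and follows essentially the same approach as the paper: multiply \eqref{Klein-Gordon equation for psi} by $\overline{\psi}$, take the imaginary part, integrate, and use the horizon decay \eqref{decay of growing mode near horizon} to kill the boundary term, arriving at $2\omega_{R}\omega_{I}\int_{r_{+}}^{\infty}\frac{r^{2}|\psi|^{2}}{\Omega^{2}}\,dr=0$. Your write-up is in fact more careful than the paper's in verifying the vanishing of both boundary terms and the convergence of the weighted integral, and in noting that the Neumann case is handled by the twisted equation \eqref{twisted Klein-Gordon equation for psi}.
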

\begin{proof}
Assume $\phi = e^{i\omega t}\psi$ is a growing mode solution to the uncharged Klein--Gordon equation with $\omega = \omega_{R}+i\omega_{I}$. Then multiplying $\overline{\psi}$ and taking the imaginary part of the equation \eqref{Klein-Gordon equation for psi}, we have\begin{equation*}
\Im\left(\frac{d}{dr}\left(r^{2}\Omega^{2}\frac{d\psi}{dr}\right)\bar{\psi}\right) = -r^{2}\vert\psi\vert^{2}\frac{2\omega_{R}\omega_{I}}{\Omega^{2}}.
\end{equation*}
Integrating the above equations and using the boundary condition \eqref{decay of growing mode near horizon}, we have\begin{equation}
0 = -2\omega_{R}\omega_{I}\int_{r_{+}}^{\infty}\frac{r^{2}\vert\psi\vert^{2}}{\Omega^{2}}\mathrm{d}r.
\end{equation}
Since $\omega_{I}$ is negative, we conclude that $\omega_{R}$ is zero.
\end{proof}

\section{Main results and outline of the proof}
\label{main theorem statement}
\subsection{Precise statements of main results}
Now we are ready to state the main results we get.
\begin{theorem}
\label{growing mode theorem}
For Klein--Gordon equation $\eqref{Klein-Gordon}$ with negative mass $\alpha$, let $C_{DN} = 0$ for Dirichlet boundary conditions and $C_{DN} = -\frac{5}{4}$ for Neumann boundary conditions. Imposing reflecting boundary condition for \eqref{Klein-Gordon}, we have\begin{enumerate}
\item[(1)]\textup{(Large charge case)} For any given sub-extremal parameters $(M_{b},r_{+},\Lambda)$ with $M_{e = 0}<M_{b}<M_{0}$ and $-\frac{9}{4}<\alpha<C_{DN}$, there exists a $q_{1}>0$, such that for any $\vert q_{0}\vert>q_{1}$ and $\delta$ sufficiently small, there exist real analytic functions $M_{e = 0}<M(\epsilon)<M_{b}$ and $\omega_{R}(\epsilon)\in\mathbb{R}$ with $\omega_{R}(0) = 0$ on $-\delta<\epsilon<\delta$ so that there exists a mode solution $\phi = e^{i(\omega_{R}(\epsilon)+i\epsilon)t}\psi(r)$ of $\eqref{Klein-Gordon}$ with sub-extremal parameters $\left(M(\epsilon),r_{+},\Lambda,\alpha,q_{0}\right)$. Moreover, $\phi$ can be continuously extended to the event horizon $r = r_{+}$.

\item[(2)]\textup{(General fixed charge case)} For each fixed parameters $(r_{+},\Lambda,\alpha,q_{0})$ satisfying\begin{align}
&-\frac{9}{4}<\alpha<\left\{-\frac{3}{2}+\frac{1}{2}\frac{q_{0}^{2}}{\bigl(-\frac{\Lambda}{3}\bigr)},C_{DN}\right\},\label{hardy assumption}\\
&\bigl(-\frac{\Lambda}{3}\bigr)r_{+}^{2}>R_{0},\label{large condition}
\end{align}
where $R_{0}$ is the positive solution to the quadratic equation \eqref{quadratic}. Then for any $\delta$ sufficiently small, there exist real analytic functions $M_{e = 0}<M(\epsilon)<M_{0}$ and $\omega_{R}(\epsilon)\in\mathbb{R}$ with $\omega_{R}(0) = 0$ on $-\delta<\epsilon<\delta$, such that $(M(\epsilon),r_{+},\Lambda)$ are sub-extremal parameters and there exists a mode solution $\phi = e^{i(\omega_{R}(\epsilon)+i\epsilon)t}$ of $\eqref{Klein-Gordon}$ with parameters $\left(M(\epsilon),r_{+},\Lambda,\alpha,q_{0}\right)$. Moreover, $\phi$ can be continuously extended to the event horizon $r = r_{+}$ and we have\begin{align}
&\frac{dM}{d\epsilon}(0)<0,\label{deri of M}\\
&q_{0}e\frac{d\omega_{R}}{d\epsilon}(0)<0,\quad q_{0}e\neq 0.\label{deri of omega}
\end{align}
\item[(3)]\textup{(Weakly charged case)} For each fixed parameters $(r_{+},\Lambda,\alpha,q_{0} = 0)$ satisfying\begin{align}
&-\frac{9}{4}<\alpha<-\frac{3}{2},\\&
\bigl(-\frac{\Lambda}{3}\bigr)r_{+}^{2}>\frac{1}{4\left(-\frac{3}{2}-\alpha\right)},
\end{align}
there exists $M_{e = 0}<M_{c}<M_{0}$ such that for any $M\in(M_{c},M_{0})$, there exists a growing mode solution to \eqref{Klein-Gordon} with parameters $(M,r_{+},\Lambda,\alpha,q_{0} = 0)$. Furthermore, for $M\in(M_{c},M_{0})$, there exists $\delta>0$ depending on $M$, such that for all $0\leq\vert q_{0}\vert\leq\delta$, \eqref{Klein-Gordon} with parameters $(M,r_{+},\Lambda,\alpha,q_{0})$ has a growing mode solution.
\end{enumerate}
\end{theorem}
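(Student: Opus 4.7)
The plan is to construct growing modes by perturbing stationary ones, treating parts (1) and (2) via bifurcation from a stationary solution and part (3) via direct spectral theory. Let $\psi_{hor}(r;\omega,M)$ denote the unique (up to scale) solution of the radial ODE \eqref{Klein-Gordon equation for psi} regular at the event horizon in the sense of \eqref{decay of growing mode near horizon}, and let $F(\omega,M)$ denote its appropriately normalized limit at infinity dictated by the chosen reflecting boundary condition. A mode solution corresponds to $F(\omega,M)=0$, which after decomposition into real and imaginary parts is a system of two real equations in the three real unknowns $(M,\omega_R,\epsilon)$ with $\epsilon=\Im\omega$.

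For parts (1) and (2) the first task is to locate a stationary solution, i.e., a sub-extremal mass $M_c$ with $F(0,M_c)=0$. This is equivalent to zero being a generalized eigenvalue of the Schr\"odinger-type operator
\[
A_M\psi \;=\; -\partial_r\bigl(r^2\Omega^2\partial_r\psi\bigr)+\Bigl[\Bigl(-\tfrac{\Lambda}{3}\Bigr)\alpha-\frac{q_0^2 A_t^2}{\Omega^2}\Bigr]r^2\psi
\]
with the appropriate reflecting boundary condition; the quadratic form $\langle A_M\psi,\psi\rangle$ encodes precisely the conserved energy discussed in Section \ref{uncharged instability}. For part (1), taking $|q_0|$ large forces the effective mass $\alpha_{eff}=\alpha-q_0^2 A_t^2/((-\Lambda/3)\Omega^2)$ well below the Breitenlohner--Freedman bound on a compact subregion of $(r_+,\infty)$, so a test function localized there yields $\langle A_{M_b}\psi_\ast,\psi_\ast\rangle<0$, while at $M=M_{e=0}$ the charge $e$ vanishes and the Hardy inequality \eqref{nav hardy} gives a strictly positive Rayleigh quotient; an intermediate value argument in $M$ then produces $M_c\in(M_{e=0},M_b)$. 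Part (2) proceeds analogously, but the negative Rayleigh quotient at near-extremal $M$ is produced using the conditions \eqref{hardy assumption}--\eqref{large condition} and the sharper Hardy-type inequality discussed in Section \ref{uncharged instability} rather than largeness of $|q_0|$.

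With a stationary solution $\psi_c$ at $(0,M_c)$ in hand, I would apply the implicit function theorem to $F\colon\mathbb{R}^3\to\mathbb{R}^2$. Non-degeneracy amounts to showing that $\partial_{(M,\omega_R)}F(0,M_c)$ is invertible. Differentiating the radial ODE \eqref{Klein-Gordon equation for psi} in $M$ and in $\omega_R$ at the base point, pairing against $\psi_c$, and integrating by parts reduces each entry of the $2\times 2$ derivative matrix to a weighted integral over $(r_+,\infty)$; these are nonzero under the hypotheses by a Wronskian computation, giving invertibility. The implicit function theorem then produces real-analytic $M(\epsilon),\omega_R(\epsilon)$ with $M(0)=M_c$ and $\omega_R(0)=0$, and the signs in \eqref{deri of M}--\eqref{deri of omega} follow from explicit evaluation of these derivative formulas. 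I expect this non-degeneracy and sign analysis to be the main technical obstacle.

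For part (3), the bifurcation of the previous steps only yields growing modes for $M$ in a neighborhood of $M_c$, whereas the conclusion requires modes on the full interval $(M_c,M_0)$. I would instead work directly with the generalized eigenvalue problem $A_M\psi=\omega^2(r^2/\Omega^2)\psi$ at $q_0=0$: growing modes correspond to a negative generalized eigenvalue, equivalently to $\langle A_M\psi,\psi\rangle<0$ on some admissible test function. Section \ref{uncharged instability} exhibits such a test function supported near the horizon for every $M$ sufficiently close to $M_0$, and since the condition of having a strictly negative Rayleigh quotient is open in $M$ and persists all the way down to the stationary threshold $M_c$ constructed above, a growing mode exists for every $M\in(M_c,M_0)$. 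The weakly charged extension then follows from analytic perturbation theory in $q_0$: for each such $M$, the relevant spectral quantity depends continuously on $q_0$, so strict negativity at $q_0=0$ survives for $|q_0|\leq\delta(M)$.
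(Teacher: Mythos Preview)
Your proposal is correct and follows essentially the same route as the paper: locate a stationary solution $M_c$ by a variational/intermediate-value argument on the Rayleigh quotient of $A_M$ (negative at $M_b$ or near $M_0$ via the effective-mass or sharp Hardy argument, nonnegative at $M_{e=0}$), then run the implicit function theorem on the boundary-value map $F(\omega,M)$, with non-degeneracy of $\partial_{(M,\omega_R)}F$ obtained by differentiating \eqref{Klein-Gordon equation for psi} in $M$, pairing against $\psi_c$, and integrating by parts; part (3) is handled exactly as you say, by monotonicity of the bottom of the spectrum in $M$ and continuity in $q_0$. The one substantive technical ingredient you do not mention is that the paper carries out the variational step via \emph{twisted} energy functionals $L_{M,h}[f]=\int r^2\Omega^2 h^2(\partial_r(h^{-1}f))^2+V_{M,h}r^2 f^2\,dr$ with carefully chosen twist functions $h$, together with a regularization $L_M^\epsilon$ on $(r_++\epsilon,\infty)$: these are needed to make the functional bounded below and coercive despite the degeneracy of $\Omega^2$ at the horizon and the negative potential, and (in the Neumann case) to make the functional finite at all.
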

\begin{remark}
In both the large charge case and the general fixed charge case, when $\epsilon = 0$, the solutions constructed in the theorem reduce to stationary solutions. One should think the growing mode solutions in these two cases are constructed by perturbing a stationary solution. However, in view of Proposition \ref{imaginary omega pro}, one can treat the existence of growing mode solutions to the uncharged Klein--Gordon equation \eqref{Klein-Gordon equation for psi} as an negative eigenvalue problem. This perspective allows us to construct growing mode solutions to the uncharged Klein--Gordon equation for all $M\in(M_{c},M_{0})$ without using perturbative approach. Additional, growing mode solutions for the weakly charged case can be obtained by perturbing the scalar field charge.
\end{remark}
\begin{remark}
In contrast to Proposition \ref{imaginary omega pro}, \eqref{deri of omega} shows the existence of oscillating profile ($\Re\omega\neq0$) for growing mode solutions to the charged Klein--Gordon equations.
\end{remark}

The key step toward proving the existence of growing mode solutions is the construction of non-trivial bounded stationary solutions to $\eqref{Klein-Gordon}$ with Dirichlet or Neumann boundary conditions. Under the stationary assumption, $\eqref{Klein-Gordon}$ is reduced to the ODE $\eqref{stationary Klein-Gordon}$. Instead of solving ODE $\eqref{stationary Klein-Gordon}$ under given boundary conditions, we start with regular data at $r = r_{+}$ and solve the equation $\eqref{stationary Klein-Gordon}$ to $r = \infty$. Considering the asymptotic behavior of the solution, we can prove the following proposition:
\begin{proposition}
\label{trival theorem}
For any sub-extremal parameters $(M,r_{+},\Lambda,\alpha,q_{0})$ satisfying the Breitenlohner--Freedman bound $-\frac{9}{4}<\alpha<0$, we can construct the static spherically symmetric solution $\phi(r)$ to the Klein--Gordon equation $\eqref{Klein-Gordon}$ under fixed Reissner--Nordström AdS metric $g$ such that\begin{align}
&\lim_{r\rightarrow r_{+}}\phi(r) = 1,\\
&\phi(r) = C_{D}u_{D}(r)+C_{N}u_{N}(r).
\end{align}
$\{u_{D}(r),u_{N}(r)\}$ is the local basis of the solution of equations $\eqref{Klein-Gordon on RN}$, with the following asymptotic behaviors:\begin{align}
\label{Dirichlet asymptotic}
&\lim_{r\rightarrow\infty}r^{\frac{3}{2}+\Delta}u_{D}(r) = 1,\quad \lim_{r\rightarrow\infty}r^{\frac{5}{2}+\Delta}\frac{du_{D}}{dr} = -\frac{3}{2}-\Delta,
\\&\lim_{r\rightarrow\infty}r^{\frac{3}{2}-\Delta}u_{N}(r) = 1,\quad \lim_{r\rightarrow\infty}r^{\frac{5}{2}-\Delta}\frac{du_{N}}{dr} = -\frac{3}{2}+\Delta.\label{Neumann asymptotic}
\end{align}
In other words, $u_{D}$ is a function on $\mathcal{M}$ satisfying the Dirichlet boundary condition and $u_{N}$ is a function on $\mathcal{M}$ satisfying the Neumann boundary condition.
\end{proposition}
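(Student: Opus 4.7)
The plan is to construct $\phi$ by Frobenius analysis at the regular singular point $r=r_{+}$, extend globally by standard linear ODE theory, and then read off the asymptotic coefficients via a second Frobenius analysis at infinity.

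\emph{Local existence at the horizon.} Sub-extremality gives $(\Omega^{2})'(r_{+}) = T > 0$, and the gauge choice \eqref{equation for A} forces $A(r_{+}) = 0$, so that $A^{2}/\Omega^{2}$ extends continuously across $r_{+}$. Rewriting \eqref{stationary Klein-Gordon} in non-conservative form in the variable $s = r - r_{+}$ as
\[
\phi''(s) + p(s)\,\phi'(s) + q(s)\,\phi(s) = 0,
\]
I check that $p$ has a simple pole at $s=0$ with residue exactly $1$ (from $(r^{2}\Omega^{2})'/(r^{2}\Omega^{2})$) and that $q$ has at worst a simple pole rather than the generic double pole, so $r_{+}$ is a regular singular point whose indicial equation reduces to $\sigma^{2}=0$ with a double root at $\sigma = 0$. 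Frobenius' theorem then produces a unique analytic solution $\phi(r) = 1 + \sum_{k \geq 1} a_{k}(r - r_{+})^{k}$ on $[r_{+}, r_{+}+\delta)$ satisfying $\phi(r_{+})=1$; the second linearly independent solution carries a $\log(r - r_{+})$ singularity and is discarded. Since the coefficient $r^{2}\Omega^{2}$ is smooth and strictly positive on any compact subinterval of $(r_{+},\infty)$, standard linear ODE theory extends $\phi$ uniquely and smoothly to the entire exterior $(r_{+},\infty)$.

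\emph{Asymptotic basis at infinity.} To analyse \eqref{stationary Klein-Gordon} near $r = \infty$, I substitute $x = 1/r$. Using $x^{2}\Omega^{2}(x) = (-\Lambda/3) + O(x^{2})$ and $A^{2}/\Omega^{2} = O(x^{2})$ near $x = 0$, a direct computation brings \eqref{stationary Klein-Gordon} into a second-order ODE in $x$ with a regular singular point at $x = 0$ whose indicial equation is
\[
\sigma^{2} - 3\sigma - \alpha = 0,
\]
with roots $\sigma_{\pm} = \tfrac{3}{2}\pm\Delta$ corresponding to the asymptotic behaviours $r^{-3/2 \mp \Delta}$ as $r \to \infty$. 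Frobenius' theorem at $x = 0$ yields a basis $\{u_{D}, u_{N}\}$ of the solution space: $u_{D}$, associated to the larger exponent $\sigma_{+}$, is a pure power series in $x$ and, after normalisation, satisfies \eqref{Dirichlet asymptotic}; $u_{N}$, associated to the smaller exponent $\sigma_{-}$, is obtained by reduction of order and normalised so that \eqref{Neumann asymptotic} holds, possibly picking up a logarithmic term proportional to $\log x \cdot u_{D}(x)$ when $2\Delta \in \mathbb{Z}_{>0}$. Term-by-term differentiation of these Frobenius expansions yields the companion asymptotics for $u_{D}'$ and $u_{N}'$. Since $\{u_{D}, u_{N}\}$ spans the two-dimensional solution space of \eqref{stationary Klein-Gordon} on $(r_{+},\infty)$, the solution constructed in the previous paragraph admits a unique decomposition $\phi = C_{D}u_{D} + C_{N}u_{N}$.

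\emph{Main obstacle.} The most delicate point is the Frobenius analysis at infinity in the borderline case $2\Delta \in \mathbb{Z}_{>0}$, where a careful application of the second-solution formula is required to confirm that the logarithmic contribution in $u_{N}$ arises only at strictly sub-leading order $r^{-3/2-\Delta}\log r$ and hence preserves the prescribed leading behaviour $r^{-3/2+\Delta}$. The near-horizon construction is comparatively routine once the double indicial root has been identified; the only careful bookkeeping there is the verification, using \eqref{equation for A}, that $A^{2}/\Omega^{2}$ extends continuously at $r_{+}$ so that the coefficient of $\phi$ has only a simple pole at the regular singular point, which is what produces the degenerate indicial equation $\sigma^{2} = 0$ and hence the analytic solution with $\phi(r_{+}) = 1$.
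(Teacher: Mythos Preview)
Your proposal is correct and follows essentially the same route as the paper: Frobenius/asymptotic analysis at the regular singular point $r=r_{+}$ (identifying the double indicial root $\sigma=0$ and discarding the logarithmic solution), extension across the exterior, and Frobenius analysis at $r=\infty$ to obtain the basis $\{u_{D},u_{N}\}$.

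The only noteworthy difference is in the global extension step. You invoke standard linear ODE theory, using that $r^{2}\Omega^{2}$ is smooth and strictly positive on every compact subinterval of $(r_{+},\infty)$, so solutions cannot blow up at any interior point. The paper instead multiplies \eqref{stationary Klein-Gordon} by $r^{-2}\,d\phi/dr$, obtains an energy-type identity for $\Omega^{2}(\phi')^{2}+(\Lambda/3)\alpha\,\phi^{2}+(q_{0}^{2}A^{2}/\Omega^{2})\phi^{2}$, and closes with Gr\"onwall to bound $\phi$ and $\phi'$ on any $[r_{+},R)$. Your argument is more elementary and entirely sufficient for a linear equation; the paper's energy estimate gives a quantitative bound that is not strictly needed here but is in the spirit of estimates used elsewhere in the paper. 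Your explicit handling of the borderline case $2\Delta\in\mathbb{Z}_{>0}$ at infinity is a detail the paper glosses over, so in that respect your treatment is slightly more careful.
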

\begin{remark}
In the later discussion, we use the notation $u_{D}\approx r^{-\frac{3}{2}-\Delta}$ to denote the asymptotic behavior $\eqref{Dirichlet asymptotic}$ and notation $u_{N}\approx r^{-\frac{3}{2}+\Delta}$ to denote the asymptotic behavior $\eqref{Neumann asymptotic}$. 
\end{remark}

The above theorem holds for any sub-extremal parameters $(M,r_{+},\Lambda,\alpha,q_{0})$, highlighting the significance of the boundary conditions. For a given boundary condition, it's highly non-trivial that one can prove the existence of a regular stationary solution on $[r_{+},\infty)$.

Furthermore, we can prove the following existence of non-trivial stationary solutions to $\eqref{Klein-Gordon}$ with reflecting boundary conditions.
\begin{theorem}
\label{linear hair theorem}
Imposing the reflecting boundary condition for the Klein--Gordon equation $\eqref{Klein-Gordon}$ with a negative mass $\alpha\in(-\frac{9}{4},C_{DN})$, we have\begin{enumerate}
\item[(1)]\textup{(Large charge scalar field)} For any given sub-extremal parameters $(M_{b},r_{+},\Lambda)$ with $M_{e = 0}<M_{b}<M_{0}$ and $\alpha$ within the above range, there exists a $q_{1}>0$, such that for any $\vert q_{0}\vert>q_{1}$, there exists $M_{e = 0}<M = M(M_{b},r_{+},\Lambda,\alpha,q_{0})<M_{b}$, such that a stationary solution to $\eqref{Klein-Gordon}$ with parameters $(M,r_{+},\Lambda,\alpha,q_{0})$ exists and can be extended continuously to the event horizon $\{r = r_{+}\}$.
\item[(2)]\textup{(General charged scalar field)} For any given parameters $(r_{+},\Lambda,\alpha,q_{0})$ satisfying the conditions \eqref{hardy assumption} and \eqref{large condition}, there exists $M_{e = 0}<M<M_{0}$ such that a stationary solution $\phi$ to $\eqref{Klein-Gordon}$ with parameters $(M,r_{+},\Lambda,\alpha,q_{0})$ exists and can be extended continuously to the event horizon $\{r = r_{+}\}$.
\end{enumerate}
\end{theorem}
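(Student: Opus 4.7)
The plan is to prove Theorem \ref{linear hair theorem} via a one-parameter shooting/continuity argument built directly on Proposition \ref{trival theorem}. For every admissible $M\in[M_{e=0},M_{0})$, Proposition \ref{trival theorem} provides a solution $\phi_{M}$ of the stationary equation \eqref{stationary Klein-Gordon} that is regular at the horizon with $\phi_{M}(r_{+})=1$, together with the unique asymptotic decomposition $\phi_{M}=C_{D}(M)\,u_{D,M}+C_{N}(M)\,u_{N,M}$, where $u_{D,M}\sim r^{-3/2-\Delta}$ and $u_{N,M}\sim r^{-3/2+\Delta}$. From Section \ref{sec: boundary conditions}, $\phi_{M}$ satisfies the Dirichlet (resp.\ Neumann) boundary condition exactly when $C_{N}(M)=0$ (resp.\ $C_{D}(M)=0$). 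Since the coefficients of \eqref{stationary Klein-Gordon} depend analytically on $M$, the scalar functions $M\mapsto C_{D}(M)$ and $M\mapsto C_{N}(M)$ are continuous on $[M_{e=0},M_{0})$, and the existence problem reduces to finding a zero of one continuous function of a single real variable.

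To detect such a zero via the intermediate value theorem, I would pin down the sign of, say, $C_{N}(M)$ (the Dirichlet case; the Neumann case is symmetric with $C_{D}$ in place of $C_{N}$) at two carefully chosen values of $M$. At $M=M_{e=0}$ the metric is Schwarzschild-AdS, and the Hardy-inequality-based energy positivity explained around \eqref{nav hardy} rules out any bounded non-trivial stationary solution satisfying a reflecting boundary condition; hence $C_{N}(M_{e=0})\neq 0$. Moreover, multiplying \eqref{stationary Klein-Gordon} by $\bar{\phi}_{M_{e=0}}$ and integrating from $r_{+}$ to $\infty$ produces a positive-definite energy identity whose boundary term at infinity, via the asymptotic basis, identifies a specific sign of $C_{N}(M_{e=0})/C_{D}(M_{e=0})$.

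The sign flip at a second value $M_{\dagger}$ is produced by two distinct mechanisms matching the two parts of the theorem. For part (1), with $\lvert q_{0}\rvert$ larger than a threshold $q_{1}=q_{1}(M_{b},r_{+},\Lambda,\alpha)$, the effective mass $\alpha_{\mathrm{eff}}=\alpha-q_{0}^{2}A_{t}^{2}/((-\Lambda/3)\Omega^{2})$ becomes deeply negative on a compact sub-interval of $(r_{+},\infty)$, violating the Breitenlohner--Freedman bound there; a Sturm oscillation / variational (Rayleigh quotient) argument then produces a negative-energy bound state for some $M\in(M_{e=0},M_{b})$, which forces the opposite sign of $C_{N}$ at that $M$. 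For part (2), the precise algebraic assumptions \eqref{hardy assumption}--\eqref{large condition} (with $R_{0}$ arising from \eqref{quadratic}) are engineered so that, via the near-extremal instability of Section \ref{uncharged instability}, one can construct a compactly-supported-near-horizon test function whose quadratic form is negative; this again yields a bound state and the corresponding sign flip. In both cases, continuity of $C_{N}$ and the intermediate value theorem produce the desired $M^{*}$ with $C_{N}(M^{*})=0$, giving the Dirichlet stationary solution; the Neumann case is identical after replacing $C_{N}$ by $C_{D}$ (the restriction $\alpha<C_{DN}=-\tfrac{5}{4}$ ensures Neumann local well-posedness, cf.\ Section \ref{sec:local well-posedness}).

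The main obstacle is precisely the quantitative sign-flip in the second step. In part (1) one must show that the interior oscillation produced by the deep well in $\alpha_{\mathrm{eff}}$ actually propagates to a sign change in the \emph{asymptotic} coefficient $C_{N}$, rather than being absorbed in local oscillation; this requires Sturm-comparison combined with the asymptotic analysis underlying Proposition \ref{trival theorem} applied uniformly in $M$. In part (2), the quadratic form \eqref{quadratic} is the footprint of a delicate matching between the Hardy-type weighted inequality recalled in Section \ref{uncharged instability} and the near-horizon quadratic decay of $\Omega^{2}$ as $M\to M_{0}$; the challenge is to design a test function that simultaneously is admissible for the variational problem, satisfies the asymptotic boundary condition, and realizes a negative quadratic form exactly when \eqref{large condition} holds, so that $R_{0}$ emerges as the sharp threshold rather than merely a sufficient one.
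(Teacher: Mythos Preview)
Your overall continuity-in-$M$ strategy is correct and is indeed the backbone of the paper's proof. However, the object you choose to track and the key implication you rely on both have genuine gaps.

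\textbf{The main gap.} You claim that the existence of a negative-energy bound state at some $M$ ``forces the opposite sign of $C_{N}$ at that $M$.'' This does not follow. A negative bound state says only that the principal Dirichlet eigenvalue of the operator in \eqref{Euler for phi} is strictly negative; it gives no direct information about the sign of $C_{N}(M)$, which is the asymptotic coefficient of the \emph{zero-eigenvalue} solution $\phi_{M}$ normalized at $r_{+}$. At a given $M$ with a negative bound state, several Dirichlet eigenvalues may already have crossed zero, and the sign of $C_{N}(M)$ reflects the parity of that number, not merely the existence of one. Likewise, your energy-identity argument to pin down the sign of $C_{N}(M_{e=0})/C_{D}(M_{e=0})$ is formal: for $C_{N}\neq 0$ both sides of the integrated identity diverge (this is exactly why twisted derivatives are needed), so the purported ``specific sign'' is not obtained from that computation.

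\textbf{What the paper does instead.} The paper tracks the principal eigenvalue $-\lambda_{M}$ of the constrained variational problem for $L_{M}$ (equivalently, the minimum of $L_{M}$ over $\mathcal{F}$), and proves: (i) $\lambda_{M}$ is monotone and uniformly Lipschitz in $M$ (by direct comparison of $L_{M}$, since $\Omega^{2}_{M}$ and $e^{2}/\Omega^{2}_{M}$ are monotone in $M$); (ii) $\lambda_{M}>0$ for $M\in\mathcal{A}_{s}$ (your negative-bound-state input, Lemmas \ref{charged negative energy bound state} and \ref{nonempty}); (iii) $\lambda_{M_{c}}=0$ at $M_{c}=\inf\mathcal{A}_{s}$; (iv) the corresponding eigenfunction $\phi_{M_{c}}$ solves \eqref{stationary Klein-Gordon} with the Dirichlet condition. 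Tracking $\lambda_{M}$ rather than $C_{N}(M)$ is what makes the intermediate-value step rigorous, because monotonicity and continuity of $\lambda_{M}$ are transparent from the variational characterization, whereas monotonicity of $C_{N}(M)$ is neither stated nor true in general.

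\textbf{Technical issues you omit.} The variational problem is singular at both endpoints: $\Omega_{M}^{2}$ vanishes at $r_{+}$ and the quadratic form $L_{M}$ is neither obviously lower-bounded nor coercive in $H^{1}$. The paper handles this with (a) twisted derivatives $h\,\partial_{r}(h^{-1}\cdot)$ with $h=r^{-\beta}$, which shift the negative mass term into a potential $V_{M,\beta}$ that is positive near infinity (Proposition \ref{eigenvalue}) and, for a different $\beta$, positive near $r_{+}$ (Proposition \ref{nontrival solution}); and (b) an $\epsilon$-regularization $L_{M}^{\epsilon}$ on $(r_{+}+\epsilon,\infty)$, followed by uniform-in-$\epsilon$ estimates (Proposition \ref{gain regularity}) to pass to the limit. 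Your sketch invokes ``Sturm oscillation / Rayleigh quotient'' but does not address any of this; on the singular interval these tools require precisely the coercivity that the twisted derivative provides.

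\textbf{Neumann is not symmetric.} Your remark that the Neumann case is ``identical after replacing $C_{N}$ by $C_{D}$'' understates the difficulty. For Neumann the untwisted functional $L_{M}$ is infinite, Lemma \ref{change} fails on $\mathcal{F}_{N}$, and a different, carefully designed twist $h_{0}$ (see \eqref{twisted function}) is required so that the potential $V_{M,h_{0}}-q_{0}^{2}A^{2}/\Omega_{M}^{2}$ is positive both near $r_{+}$ and near infinity simultaneously; moreover, because only local convergence is available, a separate argument (Proposition \ref{non-trival solution Neumann}) is needed to show the limit $\phi_{M}$ retains the Neumann branch and does not silently pick up Dirichlet decay.
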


\subsection{Main difficulty and outline of the proof}
\label{outline of the proof}
In this section, we discuss the main difficulties in the proof and ideas we used by considering the uncharged Klein--Gordon equation with Dirichlet boundary conditions; see Section \ref{outline of neumann} for the discussion of Neumann boundary conditions. The ideas of doing the charged case are essentially the same. We mainly adopt the method outlined in \cite{shlapentokh2014exponentially}. The proof is divided into two main steps:\begin{itemize}
\item Prove the existence of stationary solutions to Klein--Gordon equation \eqref{Klein-Gordon} as in Theorem \ref{linear hair theorem}.
\item Perturb the stationary solution and use the implicit function theorem to get growing mode solutions.
\end{itemize}
The main difficulty is the first step.To show the existence of a non-trivial stationary solution $\phi$ with Dirichlet boundary condition, we apply the variational method to find the minimizer of the energy functional of $\eqref{stationary Klein-Gordon}$:\begin{equation*}
L[f] = \int_{r_{+}}^{\infty}r^{2}\Omega^{2}\left(\frac{df}{dr}\right)^{2}+\bigl(-\frac{\Lambda}{3}\bigr)\alpha r^{2}f^{2}dr.
\end{equation*}
The asymptotic analysis of solution $\phi$ gives that $L[\phi]$ is finite if and only if $\phi$ is bounded and satisfies the Dirichlet boundary condition. Furthermore, the energy identity of $\eqref{stationary Klein-Gordon}$ implies $L[\phi] = 0$ for the stationary solution $\phi$.

To construct the desired solution, we apply the variational argument to $L[f]$ within the function class $H_{0}^{1}$ with $$\int_{r_{+}}^{\infty}\frac{r^{2}}{\Omega^{2}}f^{2}\mathrm{d}r= 1.$$By the constrained variational principle, the minimizer $\phi$ will be the solution of the following eigenvalue equation \begin{equation*}
-\frac{d}{dr}\left(r^{2}\Omega^{2}\frac{d\phi}{dr}\right)+\bigl(-\frac{\Lambda}{3}\bigr)\alpha r^{2}\phi = -\lambda(M,r_{+},\Lambda,\alpha)\frac{r^{2}}{\Omega^{2}}\phi.
\end{equation*}
For fixed neutral parameters $(r_{+},\Lambda,\alpha,q_{0} = 0)$, we denote the $M$-dependence of $\phi$ by $\phi_{M}$. To show $\lambda(M)$ can be zero for some $M$, we start with $M = M_{0}$ such that $(M,r_{+},\Lambda)$ are extremal parameters for the given $(r_{+},\Lambda)$ and show that $\lambda(M_{0})$ is negative. Then by exploiting the monotonicity and continuity of $L[f]$ with respect to $M$, we conclude the existence of $M = M_{c}$ such that $\lambda(M_{c}) = 0$.

However, several difficulties arise when attempting to apply the above framework. 
\begin{enumerate}
  \item First, the energy functional is not even lower bounded at first glance, as the functions in our function class essentially have an $L^{2}$ norm equal to $1$ while $L[f]$ contains the integral of $r^{2}f^{2}$. 
  \item Second, even if $L[f]$ is lower bounded in our function class, since $\Omega^{2}$ degenerates at the event horizon, we could not get the $H^{1}$ boundedness of the minimizing sequence $\phi_{M,n}$. Hence, to the best of our knowledge, no (weak) convergence result we can use. 
  \item Third, the above framework requires us to start with $M = M_{0}$, and the corresponding $L_{M_{0}}[f]$ should have a negative minimum. However, as already shown in Section \ref{uncharged instability}, for small black hole charge \eqref{weakly charged}, one can prove the positivity of the energy functional, which means the existence of the energy functional with a negative minimun is subtle in our setting.
\end{enumerate}

The second difficulty is the easiest one to overcome. We can directly use the method outlined in \cite{shlapentokh2014exponentially} by considering the perturbed energy functional $L^{\epsilon}[f]$, which has the same integrand but is integrated over $(r_{+}+\epsilon,\infty)$. Then for each fixed $\epsilon$, we can show that $\Omega^{2}$ is lower bounded on $(r_{+}+\epsilon,\infty)$, thereby we can prove the $H^{1}$ boundedness of the minimizing sequence of this perturbed energy functional. By the weak compactness of $H^{1}$ and Sobolev embedding, we can show the existence of the minimizer $\phi^{\epsilon}$ of $L^{\epsilon}[f]$; See Proposition \ref{eigenvalue}. To get rid of $\epsilon$, our method is to take a step back by showing the local $H^{1}$ boundedness on any fixed compact set $K\subset (r_{+},\infty)$ and then achieving local convergence for each $K$. Nonetheless, the limit obtained from the local convergence argument might be trivial since the energy can concentrate outside of the compact set $K$ despite having $\Vert f\Vert_{L^{2}(r_{+}+\epsilon,\infty)} = 1$. Therefore, we still need some coercivity results.

One of the main new ideas in this paper lie in the proof of the first difficulty and coercivity mentioned above, which uses the so-called twisted derivative obtained by replacing the usual derivative $\frac{d}{dr}$ with $\widetilde{\nabla}_{r}(\cdot) = h\frac{d}{r}(h^{-1}\cdot)$; see Section \ref{sec:energy functional}. The twisted derivative has been used to establish the local well-posedness of the Klein--Gordon equation on asymptotically AdS space with Neumann boundary conditions in \cite{warnick2013massive}. This approach addresses the difficulty arising from the fact that the energy functional $L[f]$ is infinite for functions with Neumann boundary conditions. The surprising aspect here is that, even for Dirichlet boundary conditions, we need to employ the twisted derivative. Writing the equation and energy functional in the twisted derivative form, the structure of $\Omega^{2}$ will influence the potential term, enriching the sign structure of the potential. By a careful analysis of the potential and integrability of solutions $\phi^{\epsilon}$ in Proposition \ref{nontrival solution} and \ref{gain regularity} we can get the lower boundedness and coercivity.

Dealing with the third difficulty of achieving a negative minimum for the uncharged $L[f]$ is particularly challenging. To overcome this obstacle, we derive a sharp near-horizon version of the Hardy inequality in Lemma \ref{hardy} and devise a test function with a compact support to demonstrate the existence of a negative minimum in Lemma \ref{nonempty}.

The non-trivial stationary solution with Neumann boundary conditions can be constructed similarly, by using a different twist function $h$ in the twisted derivative.

Growing mode solutions can be obtained from the stationary solution through the application of the implicit function theorem; see Section \ref{growing mode solution}.

\section{Proof of Proposition $\ref{trival theorem}$}
\label{trival hair}
The proof of Proposition $\ref{trival theorem}$ is a standard application of the asymptotic analysis. We provide a proof in this section for completeness.
\begin{proof}
By the local asymptotic analysis of equations $\eqref{stationary Klein-Gordon}$,
locally near $r = r_{+}$ we have \begin{equation*}
\phi = A\phi_{1}+B\log(r-r_{+})\phi_{2},\quad r_{+}<r<r_{+}+\epsilon,
\end{equation*}
where $\phi_{1}$ and $\phi_{2}$ are analytic functions on $(r_{+},r_{+}+\epsilon)$ and finite at $r = r_{+}$. If we can show when $B = 0$, the solution $\phi$ can be extended to the whole domain $(r_{+},\infty)$, then by the asymptotic analysis at $r = \infty$, we have\begin{equation*}
\phi = C_{D}u_{D}(r)+C_{N}u_{N}(r),
\end{equation*}
where $u_{D}$ and $u_{N}$ are the local solutions of $\eqref{stationary Klein-Gordon}$ at $r = \infty$ satisfying $\eqref{Dirichlet asymptotic}$ and $\eqref{Neumann asymptotic}$ respectively.

Note in $\eqref{stationary Klein-Gordon}$, the term $\frac{A^{2}}{\Omega^{2}}$ is defined everywhere on $[r_{+},\infty)$ since\begin{equation*}
\lim_{r\rightarrow r_{+}}\frac{A^{2}}{\Omega^{2}}(r) = \frac{e^{2}}{r_{+}^{4}T}\lim_{r\rightarrow r_{+}}\frac{(r-r_{+})^{2}}{r-r_{+}} = 0.
\end{equation*}
Multiplying $\frac{1}{r^{2}}\frac{d\phi}{dr}$ on both sides of the equation $\eqref{stationary Klein-Gordon}$, we have\begin{equation}
\frac{1}{2}\left(\Omega^{2}\left(\frac{d\phi}{dr}\right)^{2}+\bigl(\frac{\Lambda}{3}\bigr)\alpha\phi^{2}+\frac{q_{0}^{2}A^{2}}{\Omega^{2}}\phi^{2}\right) = -\left(\frac{1}{2}\frac{d\Omega^{2}}{dr}+\frac{2}{r}\Omega^{2}\right)\left(\frac{d\phi}{dr}\right)^{2}+\frac{d}{dr}\left(\frac{q_{0}^{2}A^{2}}{\Omega^{2}}\right)\phi^{2}.\label{linear stimulated energy}
\end{equation}
Integrating $\eqref{linear stimulated energy}$, we have\begin{equation}
\frac{1}{2}\Omega^{2}\left(\frac{d\phi}{dr}\right)^{2}(r)+\frac{1}{2}\bigl(\frac{\Lambda}{3}\bigr)\alpha\phi^{2}(r)+\frac{q_{0}^{2}A^{2}}{2\Omega^{2}}\phi^{2}(r)\leq \frac{1}{2}\bigl(\frac{\Lambda}{3}\bigr)\alpha\phi^{2}(0)+\int_{r_{+}}^{r}\frac{d}{dr}\left(\frac{q_{0}^{2}A^{2}}{\Omega^{2}}\right)\phi^{2}\mathrm{d}\bar{r}.
\end{equation}
Then by the Gronwall inequality, we have\begin{equation*}
\phi^{2}(r)\leq\phi^{2}(0)e^{\int_{r_{+}}^{r}\frac{d}{dr}\left(\frac{q_{0}^{2}A^{2}}{\Omega^{2}}\right)\mathrm{d}\bar{r}},
\end{equation*}
which means $\phi(r)$ and $\frac{d\phi}{dr}$ are finite on any interval $[r_{+},R)$. Hence by the extension principle of the ODE, we know the solution of $\eqref{stationary Klein-Gordon}$ exists on $(r_{+},\infty)$.
\end{proof}

We can aslo prove the following Wronskian estimate of $\{u_{D},u_{N}\}$.
\begin{proposition}
For the local basis $\{u_{D},u_{N}\}$ of solutions of the linear Klein--Gordon equation $\eqref{stationary Klein-Gordon}$ in Proposition $\ref{trival theorem}$, for $N$ large and $\epsilon$ small enough, we have the following bounds\begin{align}
&\vert u_{D}^{\prime}u_{N}-u_{D}u^{\prime}_{N}\vert\geq Cr^{-4},\quad r>N,\label{infty wronskian}\\&
\vert u_{D}^{\prime}u_{N}-u_{D}u^{\prime}_{N}\vert \geq  C\frac{1}{r-r_{+}}, \quad r_{+}<r<r_{+}+\epsilon.\label{horizon wronskian}
\end{align}
\end{proposition}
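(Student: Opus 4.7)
The plan is to exploit the standard Abel--Liouville identity for the Wronskian of the linear second-order ODE \eqref{stationary Klein-Gordon}. Writing \eqref{stationary Klein-Gordon} in the divergence form $(p(r)\phi')' = q(r)\phi$ with $p(r) = r^{2}\Omega^{2}$, multiplying the equation for $u_{D}$ by $u_{N}$, the equation for $u_{N}$ by $u_{D}$, subtracting, and integrating by parts gives the conservation law
\begin{equation*}
\frac{d}{dr}\Bigl(r^{2}\Omega^{2}\bigl(u_{D}' u_{N} - u_{D} u_{N}'\bigr)\Bigr) = 0.
\end{equation*}
Consequently the quantity $r^{2}\Omega^{2}(u_{D}' u_{N} - u_{D} u_{N}')$ is a constant on $(r_{+},\infty)$.

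To identify this constant, I would evaluate it in the limit $r \to \infty$ using the normalizations \eqref{Dirichlet asymptotic} and \eqref{Neumann asymptotic}. Factoring out the powers $r^{-4}$, the asymptotics yield
\begin{equation*}
r^{4}\bigl(u_{D}' u_{N} - u_{D} u_{N}'\bigr) \longrightarrow -\bigl(\tfrac{3}{2} + \Delta\bigr) \cdot 1 - 1 \cdot \bigl(-\bigl(\tfrac{3}{2} - \Delta\bigr)\bigr) = -2\Delta.
\end{equation*}
Combining with $\Omega^{2} \sim \bigl(-\tfrac{\Lambda}{3}\bigr) r^{2}$ as $r \to \infty$, the conserved quantity is
\begin{equation*}
r^{2}\Omega^{2}\bigl(u_{D}' u_{N} - u_{D} u_{N}'\bigr) \equiv -2\Delta \bigl(-\tfrac{\Lambda}{3}\bigr).
\end{equation*}
This constant is non-zero, since $\Delta = \sqrt{\tfrac{9}{4} + \alpha} > 0$ by the strict Breitenlohner--Freedman bound $\alpha > -\tfrac{9}{4}$.

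With the Wronskian identity in hand, the two bounds follow by analyzing $p(r) = r^{2}\Omega^{2}$ in each asymptotic region and solving for $u_{D}' u_{N} - u_{D} u_{N}'$. For $r > N$ large, one has $r^{2}\Omega^{2} \leq C r^{4}$, which immediately yields $|u_{D}' u_{N} - u_{D} u_{N}'| \geq C r^{-4}$, establishing \eqref{infty wronskian}. For $r_{+} < r < r_{+} + \epsilon$ with $\epsilon$ small, the sub-extremality condition produces the first-order vanishing $\Omega^{2}(r) = T \cdot (r - r_{+}) + O((r - r_{+})^{2})$ with horizon temperature $T > 0$ (see \eqref{temperature}), so $r^{2}\Omega^{2} \leq C(r - r_{+})$ on a sufficiently small interval, which gives \eqref{horizon wronskian}.

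No serious obstacle is expected: the proof reduces to one line of Abel--Liouville together with the asymptotic normalizations of $u_{D}$ and $u_{N}$. The only point requiring attention is the non-vanishing of the Wronskian constant, which is the quantitative manifestation of the strict inequality $\alpha > -\tfrac{9}{4}$; at the endpoint $\alpha = -\tfrac{9}{4}$ one has $\Delta = 0$ so $u_{D}$ and $u_{N}$ coincide at leading order and cease to form a basis, and the Wronskian indeed vanishes identically.
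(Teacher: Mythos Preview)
Your proof is correct and in fact cleaner than the paper's. The paper handles \eqref{infty wronskian} the same way you do, by direct substitution of the asymptotics \eqref{Dirichlet asymptotic}--\eqref{Neumann asymptotic}. For \eqref{horizon wronskian}, however, the paper does not use Abel--Liouville: instead it expands $u_{D}$ and $u_{N}$ in the local Frobenius basis $\{\phi_{1}, \log(r-r_{+})\phi_{2}\}$ near $r=r_{+}$, computes the Wronskian term by term, and extracts the $\frac{1}{r-r_{+}}$ singularity from the cross terms involving the derivative of the logarithm; the nonvanishing constant is then $|A_{D}B_{N}-A_{N}B_{D}|$, obtained from the linear independence of $u_{D}$ and $u_{N}$. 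Your route via the conservation law $r^{2}\Omega^{2}W \equiv -2\Delta\bigl(-\tfrac{\Lambda}{3}\bigr)$ is more direct, yields the explicit value of the constant, and makes transparent the role of the strict Breitenlohner--Freedman bound $\Delta>0$; the paper's approach, by contrast, gives only an abstract nonzero constant but illustrates the local singular structure of the basis at the horizon.
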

\begin{proof}
By the asymptotic behavior of $u_{D}$ and $u_{N}$, $\eqref{infty wronskian}$ follows trivially. Let\begin{equation*}
\{\phi_{1},\log(r-r_{+})\phi_{2}\}
\end{equation*}
be the local basis of the solution of $\eqref{stationary Klein-Gordon}$ near $r = r_{+}$. We have\begin{align*}
&u_{D} = A_{D}\phi_{1}+B_{D}\log(r-r_{+})\phi_{2},\\&
u_{N} = A_{N}\phi_{1}+B_{N}\log(r-r_{+})\phi_{2}.
\end{align*}
Calculating the Wronskian, we have\begin{equation}
\begin{aligned}
&\vert u_{D}^{\prime}u_{N}-u_{D}u^{\prime}_{N}\vert\\=&
\left(A_{D}\phi_{1}^{\prime}+B_{D}\log(r-r_{+})\phi_{2}^{\prime}+B_{D}\frac{\phi_{2}}{r-r_{+}}\right)\left(A_{N}\phi_{1}+B_{N}\log(r-r_{+})\phi_{2}\right)\\&-
\left(A_{D}\phi_{1}+B_{D}\log(r-r_{+})\phi_{2}\right)\left(A_{N}\phi_{1}^{\prime}+B_{N}\log(r-r_{+})\phi_{2}^{\prime}+B_{N}\frac{\phi_{2}}{r-r_{+}}\right)
\\\geq& \frac{1}{2}\frac{\left\vert\phi_{1}(r_{+})\phi_{2}(r_{+})\right\vert}{r-r_{+}}\left\vert A_{D}B_{N}-A_{N}B_{D}\right\vert.
\end{aligned}
\end{equation}
Since $u_{D}$ and $u_{N}$ are linearly independent, then we have\begin{equation*}
\vert A_{D}B_{N}-A_{N}B_{D}\vert>0.
\end{equation*}
Hence we can prove $\eqref{horizon wronskian}$.
\end{proof}
\section{Proof of Theorem $\ref{linear hair theorem}$ for Dirichlet boundary conditions}
\label{proof of linear hair theorem}
In this section, we will prove Theorem $\ref{linear hair theorem}$ under Dirichlet boundary conditions. We use the notation $\Omega_{M}^{2}$ to emphasize the role of the parameter $M$ in the following argument.
\subsection{Energy functional and the twisted energy functional}
\label{sec:energy functional}
Recall that we can write the stationary Klein--Gordon equation $\eqref{Klein-Gordon}$ as\begin{equation}
-\frac{d}{dr}(r^{2}\Omega_{M}^{2}\frac{d\phi}{dr})+\bigl(-\frac{\Lambda}{3}\bigr)\alpha r^{2}\phi-\frac{q_{0}^{2}A^{2}}{\Omega_{M}^{2}}r^{2}\phi = 0,\label{rkeq}
\end{equation}
where $A$ takes the form of\begin{equation*}
A = -e\left(\frac{1}{r_{+}}-\frac{1}{r}\right).
\end{equation*}
The corresponding energy functional is:\begin{equation}
\label{original energy functional}
L_{M}[f] = \int_{r_{+}}^{\infty}r^{2}\Omega_{M}^{2}\left(\frac{df}{dr}\right)^{2}+\left(\bigl(-\frac{\Lambda}{3}\bigr)\alpha-\frac{q_{0}^{2}A^{2}}{\Omega_{M}^{2}}\right) r^{2}f^{2}\mathrm{d}r.
\end{equation}
We call the term\begin{equation*}
V_{M}(r): = \bigl(-\frac{\Lambda}{3}\bigr)\alpha-\frac{q_{0}^{2}A^{2}}{\Omega_{M}^{2}}
\end{equation*}
defined in the above energy functional the potential term. One can see $V_{M}(r)$ is always negative on $(r_{+},\infty)$. To overcome the difficulties mentioned in Section \ref{outline of the proof}, we introduce the twisted derivative. Let $\nabla_{r}^{h}$ be the twisted derivative\begin{equation*}
\nabla_{r}^{h}f = h\frac{d}{dr}(h^{-1}f),
\end{equation*}
where function $h$ is called the twist function. The dual operator of this twisted derivative operator is $h^{-1}\frac{d}{dr}(hf)$. We can rewrite the equation $\eqref{rkeq}$ with respect to the twisted derivative as\begin{equation}
-h^{-1}\frac{d}{dr}\left(r^{2}\Omega_{M}^{2}h^{2}\frac{dh^{-1}\phi}{dr}\right)+\left(V_{M,h}(r)-\frac{q_{0}^{2}A^{2}}{\Omega_{M}^{2}}\right)r^{2}\phi = 0,
\end{equation}
where the twisted potential $V_{M,h}(r)$ is\begin{equation}
V_{M,h}(r):=\bigl(-\frac{\Lambda}{3}\bigr)\alpha-\frac{1}{r^{2}h}\frac{d}{dr}\left(r^{2}\Omega_{M}^{2}\frac{dh}{dr}\right)
\end{equation}
We can also define the twisted energy functional by\begin{equation}
\label{twisted energy functional}
L_{M,h}[f] := \int_{r_{+}}^{\infty}r^{2}\Omega_{M}^{2}h^{2}\left(\frac{dh^{-1}f}{dr}\right)^{2}+\left(V_{M,h}(r)-\frac{q_{0}^{2}A^{2}}{\Omega_{M}^{2}}\right)r^{2}f^{2}dr.
\end{equation}
In the proof of Theorem $\ref{linear hair theorem}$ with Dirichlet boundary conditions, we choose $h = r^{-\beta}$. Let $V_{M,\beta}$ denote the potential function and $L_{M,\beta}$ denote the twisted energy functional when $h = r^{-\beta}$, then we have
\begin{equation}
V_{M,\beta}(r) =\beta r^{-1}\frac{d\Omega_{M}^{2}}{dr}+\beta(1-\beta)r^{-2}\Omega_{M}^{2}+\bigl(-\frac{\Lambda}{3}\bigr)\alpha.
\label{potential}
\end{equation}

We can prove that the twisted energy functional is equivalent to the original energy functional.
%For different energy functional, we can prove the following lemma, which shows they are actually equivalent.
\begin{lemma}
\label{change}
If $f\in C^{\infty}_{c}(r_{+},\infty)$, then $L_{M}[f] = L_{M}^{h}[f]$ for any smooth function $h$.
\end{lemma}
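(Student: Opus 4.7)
The plan is purely computational: unfold the twisted derivative in the definition of $L_{M,h}[f]$, integrate by parts using the compact support of $f$, and check that the twist contribution is exactly cancelled by the extra potential piece built into $V_{M,h}$. The charge term $-\frac{q_{0}^{2}A^{2}}{\Omega_{M}^{2}}r^{2}f^{2}$ appears identically in \eqref{original energy functional} and \eqref{twisted energy functional}, so it can be set aside from the start and the identity reduces to the uncharged case.

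First I would expand the twisted derivative pointwise:
\begin{equation*}
h\,\frac{d(h^{-1}f)}{dr} = f' - \frac{h'}{h}\,f,
\end{equation*}
and square to get
\begin{equation*}
r^{2}\Omega_{M}^{2}h^{2}\left(\frac{d(h^{-1}f)}{dr}\right)^{2}
= r^{2}\Omega_{M}^{2}(f')^{2} - r^{2}\Omega_{M}^{2}\frac{h'}{h}(f^{2})' + r^{2}\Omega_{M}^{2}\left(\frac{h'}{h}\right)^{2}f^{2},
\end{equation*}
where I have written the cross term as $-r^{2}\Omega_{M}^{2}(h'/h)(f^{2})'$ in anticipation of an integration by parts.

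Next, since $f\in C_{c}^{\infty}(r_{+},\infty)$ the boundary terms vanish, and integrating by parts gives
\begin{equation*}
-\int_{r_{+}}^{\infty} r^{2}\Omega_{M}^{2}\,\frac{h'}{h}(f^{2})'\,dr = \int_{r_{+}}^{\infty}\frac{d}{dr}\!\left(r^{2}\Omega_{M}^{2}\frac{h'}{h}\right)f^{2}\,dr = \int_{r_{+}}^{\infty}\left[\frac{1}{h}\frac{d}{dr}\!\left(r^{2}\Omega_{M}^{2}h'\right) - r^{2}\Omega_{M}^{2}\left(\frac{h'}{h}\right)^{2}\right]f^{2}\,dr.
\end{equation*}
Summed with the $(h'/h)^{2}$ term this cancels and leaves
\begin{equation*}
\int_{r_{+}}^{\infty} r^{2}\Omega_{M}^{2}h^{2}\left(\frac{d(h^{-1}f)}{dr}\right)^{2}\,dr = \int_{r_{+}}^{\infty} r^{2}\Omega_{M}^{2}(f')^{2}\,dr + \int_{r_{+}}^{\infty}\frac{1}{h}\frac{d}{dr}\!\left(r^{2}\Omega_{M}^{2}h'\right)f^{2}\,dr.
\end{equation*}

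Finally, I would invoke the definition $V_{M,h}(r) = \bigl(-\tfrac{\Lambda}{3}\bigr)\alpha - \tfrac{1}{r^{2}h}\tfrac{d}{dr}\!\left(r^{2}\Omega_{M}^{2}h'\right)$, which implies
\begin{equation*}
\int_{r_{+}}^{\infty}V_{M,h}(r)\,r^{2}f^{2}\,dr = \int_{r_{+}}^{\infty}\bigl(-\tfrac{\Lambda}{3}\bigr)\alpha\,r^{2}f^{2}\,dr - \int_{r_{+}}^{\infty}\frac{1}{h}\frac{d}{dr}\!\left(r^{2}\Omega_{M}^{2}h'\right)f^{2}\,dr.
\end{equation*}
Adding this to the previous display, the two twist-dependent integrals cancel and, restoring the common charge term, the right-hand side of \eqref{twisted energy functional} collapses to \eqref{original energy functional}. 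There is no real obstacle: the only thing to be careful about is the compact support assumption, which is what makes the integration by parts boundary-free and therefore legitimizes the identity for every smooth twist $h$ that does not vanish on $(r_{+},\infty)$.
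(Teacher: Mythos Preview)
Your proof is correct and follows essentially the same approach as the paper: both expand the twisted derivative, integrate the cross term by parts using the compact support of $f$, and observe that the resulting twist contribution is exactly cancelled by the extra piece in $V_{M,h}$. Your presentation is slightly cleaner in that you isolate the common charge term at the outset, but the underlying computation is identical.
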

\begin{proof}
This can be proved by direct computation. We have\begin{align*}
&L_{M,h}[f] \\=& \int_{r_{+}}^{\infty}r^{2}h^{2}\Omega_{M}^{2}\left(\frac{dh^{-1}f}{dr}\right)^{2}+\left(V_{M,h}-\frac{q_{0}^{2}A^{2}}{\Omega_{M}^{2}}\right)r^{2}f^{2}dr\\=&
\int_{r_{+}}^{\infty} r^{2}\Omega_{M}^{2}\left(\frac{df}{dr}\right)^{2}+r^{2}\Omega_{M}^{2}h^{-2}\left(\frac{dh}{dr}\right)^{2}f^{2}-r^{2}\Omega_{M}^{2}h^{-1}\frac{dh}{dr}\frac{df^{2}}{dr}+\left(V_{M,h}(r)-\frac{q_{0}^{2}A^{2}}{\Omega_{M}^{2}}\right)r^{2}f^{2}dr\\=&
\int_{r_{+}}^{\infty}r^{2}\Omega_{M}^{2}\left(\frac{df}{dr}\right)^{2}+\left(V_{M,h}(r)-\frac{q_{0}^{2}A^{2}}{\Omega_{M}^{2}}+\Omega_{M}^{2}h^{-2}\left(\frac{dh}{dr}\right)^{2}+\frac{1}{r^{2}}\frac{d}{dr}\left(r^{2}\Omega_{M}^{2}h^{-1}\frac{dh}{dr}\right)\right)r^{2}f^{2}dr\\=&
L_{M}[f].
\end{align*}
The third identity is due to the integration by parts and the fact that $f$ is compactly supported.
\end{proof}
Since $\Omega_{M}^{2}$ degenerates at the event horizon, we define the perturbed energy functional $L^{\epsilon}_{M}[f]$ to be\begin{equation}
L^{\epsilon}_{M}[f] = \int_{r_{+}+\epsilon}^{\infty}r^{2}\Omega_{M}^{2}(\frac{df}{dr})^{2}+\left(\bigl(-\frac{\Lambda}{3}\bigr)\alpha-\frac{q_{0}^{2}A^{2}}{\Omega^{2}}\right)r^{2}f^{2}\mathrm{d}r.
\label{perturbated energy functional}
\end{equation}
Similarly, we can define the perturbed twisted energy functional\begin{equation}
L_{M,\beta}^{\epsilon}[f] = \int_{r_{+}+\epsilon}^{\infty}r^{-2\beta+2}\Omega_{M}^{2}\left(\frac{dr^{\beta}f}{dr}\right)^{2}+r^{2}\left(V_{M,\beta}(r)-\frac{q_{0}^{2}A^{2}}{\Omega_{M}^{2}}\right)f^{2}\mathrm{d}r.
\end{equation}

\subsection{Negative energy bound state}
\label{sec:negative}
Recall the temperature of the event horizon\begin{equation}
T := \frac{d\Omega_{M}^{2}}{dr}(r_{+}) = \frac{2}{r_{+}^{2}}(M_{0}-M).
\end{equation}
For fixed parameters $(r_{+},\Lambda,\alpha)$, the horizon temperature $T$ is determined by the black hole mass $M$. 

We say $L_{M}[f]$ has a negative energy bound state if there exists $f\in C^{\infty}_{c}(r_{+},\infty)$ such that $L_{M}[f]<0$. Let 
\begin{equation*}
\mathcal{A}_{s} = \{M_{e = 0}<M<s,\ \exists f\in C_{c}^{\infty}(r_{+},\infty)\text{ such that } L_{M}[f]<0 \}
\end{equation*}
be the set of all admissible $M$ such that $L_{M}$ admits a negative energy bound state. Due to Lemma $\ref{change}$, if $L_{M}[f]$ has a negative energy bound state, so does $L_{M,h}[f]$ for any smooth function $h$. If charge $\vert q_{0}\vert$ can be taken to be large, then the negative energy bound state follows trivially. We can prove the following lemma.
\begin{lemma}
\label{charged negative energy bound state}
For any fixed sub-extremal parameters $(M_{b},r_{+},\Lambda,\alpha)$ satisfying the bound $-\frac{9}{4}<\alpha<0$, there exists a $q_{1}(M_{b},r_{+},\Lambda,\alpha)>0$ such that for any $\vert q_{0}\vert>q_{1}$, we can find the negative energy bound state for the functional $L_{M_{b}}[f]$.
\end{lemma}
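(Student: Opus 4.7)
The plan is to exhibit a compactly supported test function $\eta\in C_{c}^{\infty}(r_{+},\infty)$ for which $L_{M_{b}}[\eta]<0$ whenever $|q_{0}|$ is sufficiently large. The underlying observation is that in the energy functional $\eqref{original energy functional}$ only the charged potential term $-\frac{q_{0}^{2}A^{2}}{\Omega_{M_{b}}^{2}}r^{2}$ depends on $q_{0}$; it scales like $-q_{0}^{2}$ and is strictly negative on any compact subset of $(r_{+},\infty)$, so in the large charge regime it can be made to dominate the remaining $q_{0}$-independent contributions. There is no need to invoke the twisted energy functional or any subtle Hardy-type inequality, since the mere availability of a large negative multiple of $q_{0}^{2}$ in the potential suffices.

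First I would fix any nontrivial nonnegative $\eta\in C_{c}^{\infty}(r_{+},\infty)$ with support in some compact interval $[a,b]\subset(r_{+},\infty)$. On $[a,b]$ the sub-extremality of $(M_{b},r_{+},\Lambda)$ guarantees that $\Omega_{M_{b}}^{2}$ is bounded above and below by positive constants, while the assumption $M_{e=0}<M_{b}<M_{0}$ forces $e\neq 0$, so that $A(r)=-e(r_{+}^{-1}-r^{-1})$ is bounded below in absolute value by a positive constant on $[a,b]$. In particular, there exist positive constants $c,C$, depending on $\eta$ and on $(M_{b},r_{+},\Lambda,\alpha)$ but not on $q_{0}$, such that
\begin{equation*}
\int_{a}^{b}r^{2}\Omega_{M_{b}}^{2}(\eta')^{2}\,dr\leq C,\qquad \int_{a}^{b}\frac{A^{2}}{\Omega_{M_{b}}^{2}}r^{2}\eta^{2}\,dr\geq c.
\end{equation*}
Dropping the nonpositive contribution $\bigl(-\frac{\Lambda}{3}\bigr)\alpha\,r^{2}\eta^{2}$ (recall $\alpha<0$, $\Lambda<0$), which only weakens the upper bound, one obtains
\begin{equation*}
L_{M_{b}}[\eta]\leq C-c\,q_{0}^{2}.
\end{equation*}
Setting $q_{1}:=\sqrt{C/c}$ then yields $L_{M_{b}}[\eta]<0$ for every $|q_{0}|>q_{1}$, which is the desired negative energy bound state.

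There is no serious obstacle in this step: the argument reduces to a one-line scaling computation exploiting the quadratic dependence of the charged potential on $q_{0}$. The only subtlety worth flagging is that the test function must be supported strictly away from the event horizon, since $A^{2}/\Omega_{M_{b}}^{2}\sim(r-r_{+})/T$ degenerates to zero as $r\to r_{+}$. In particular, the tachyonic mechanism responsible for the negative energy bound state is realized in the bulk region $r>r_{+}$ rather than at the horizon, in line with the physical heuristics discussed around \eqref{strong coupling}.
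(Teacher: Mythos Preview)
Your proposal is correct and takes essentially the same approach as the paper: fix a single test function $\eta\in C_c^\infty(r_+,\infty)$, split $L_{M_b}[\eta]$ into a $q_0$-independent part and a strictly negative part scaling like $-q_0^{2}$, and take $q_1$ to be the square root of their ratio. The only cosmetic difference is that the paper keeps the $\bigl(-\frac{\Lambda}{3}\bigr)\alpha r^{2}\eta^{2}$ contribution inside the constant (their $a_1$) rather than dropping it as you do; since this term is nonpositive your simplification is harmless.
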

\begin{proof}
Let $\eta\equiv1$ on $(r_{+}+\frac{1}{4},r_{+}+\frac{3}{4})$ be a smooth function defined on $\mathbb{R}$ with support on $(r_{+},r_{+}+1)$. Let\begin{align*}
&a_{1}(M_{b},r_{+},\Lambda,\alpha) : = \int_{r_{+}}^{\infty}r^{2}\Omega_{M_{b}}^{2}\left(\frac{d\eta}{dr}\right)^{2}+\bigl(-\frac{\Lambda}{3}\bigr)\alpha r^{2}\eta^{2}\mathrm{d}r,\\&
a_{2}(M_{b},r_{+},\Lambda,\alpha): = \int_{r_{+}}^{\infty}\frac{A^{2}}{\Omega_{M_{b}}^{2}}r^{2}\eta^{2}\mathrm{d}r>0.
\end{align*} 
If we set $q_{1}^{2} = \frac{a_{1}}{a_{2}}$, then for any $q_{0}^{2}>q_{1}^{2}$, we have the negative energy bound state.
\end{proof}
For the second case in Theorem \ref{linear hair theorem} where the sufficient largeness of $q_{0}$ is missing, the existence of a negative energy bound state becomes more subtle. The numerical method in physics literature \cite{hartnoll2008holographic} fails to find a negative energy bound state for $L_{M}[f]$ when $M_{e = 0}<M<M_{0}$ and $q_{0} = 0$. Next, by using a sharp Hardy-type inequality and continuity argument, we prove the existence of a negative energy bound state for the general fixed charge case, which paves the way to prove Theorem $\ref{linear hair theorem}$.
\begin{lemma}\textup{[Sharp Hardy-type Inequality]}
\label{hardy}
Assume  $f\in C_{c}^{\infty}(r_{+},r_{+}+1)$, then we have\begin{equation*}
\int_{r_{+}}^{r_{+}+1}f^{2}(r)\mathrm{d}r\leq 4\int_{r_{+}}^{r_{+}+1}(r-r_{+})^{2}(f^{\prime})^{2}\mathrm{d}r.
\end{equation*}
The constant $4$ in the inequality is sharp in the sense that for any $\delta>0$ small, we can find a $f_{\delta}\in C_{c}^{\infty}(r_{+},r_{+}+1)$ such that \begin{equation*}
\int_{r_{+}}^{r_{+}+1}f_{\delta}^{2}\mathrm{d}r\geq (4-\delta)\int_{r_{+}}^{r_{+}+1}(r-r_{+})^{2}(f_{\delta}^{\prime})^{2}\mathrm{d}r.
\end{equation*}
\end{lemma}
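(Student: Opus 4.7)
The plan is to translate $s = r - r_+$ and prove the scale-invariant Hardy inequality
\begin{equation*}
\int_0^1 f(s)^2\,ds \leq 4\int_0^1 s^2 f'(s)^2\,ds, \qquad f \in C_c^\infty(0,1),
\end{equation*}
with the constant $4$ unimprovable. The inequality itself is one integration by parts plus Cauchy--Schwarz: since $1 = \frac{d}{ds}(s)$ and $f$ has compact support in the open interval,
\begin{equation*}
\int_0^1 f^2\,ds = -2\int_0^1 s f f'\,ds \leq 2\left(\int_0^1 f^2\,ds\right)^{1/2}\left(\int_0^1 s^2 (f')^2\,ds\right)^{1/2},
\end{equation*}
and squaring yields the desired bound.

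For the sharpness, the scale-invariant candidates $f_\gamma(s) = s^\gamma$ give the ratio $\int f_\gamma^2/\int s^2 (f_\gamma')^2 = 1/\gamma^2$, but the constraint that both integrals converge near $s = 0$ forces $\gamma > -1/2$, so the supremum $4$ is approached as $\gamma \to -1/2^+$ and never attained on $C_c^\infty(0,1)$. To manufacture admissible approximate extremizers, I fix $\eta > 0$ small, choose a cutoff $\chi_\eta \in C_c^\infty(0,1)$ with $\chi_\eta \equiv 1$ on $[\eta, 1-\eta]$, $\chi_\eta \equiv 0$ outside $[\eta/2, 1 - \eta/2]$, and $|\chi_\eta'| \leq C/\eta$, and then set $f_\delta(r) := \chi_\eta(r - r_+)\,(r - r_+)^{-1/2+\eta}$.

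On the plateau $[\eta, 1-\eta]$ both integrands (up to the factor $(1/2-\eta)^2$ for the weighted derivative) are equal to $s^{-1+2\eta}$, so the plateau contribution is
\begin{equation*}
\int_\eta^{1-\eta} s^{-1+2\eta}\,ds = \frac{(1-\eta)^{2\eta} - \eta^{2\eta}}{2\eta} \sim -\log \eta \to \infty \quad\text{as } \eta \to 0^+.
\end{equation*}
On the two transition intervals of length $\sim \eta$, a product-rule expansion combined with $|\chi_\eta'|^2 s^2 (s^{-1/2+\eta})^2 = O(1)$ bounds the cutoff contributions to $\int s^2 (f_\delta')^2$, $\int f_\delta^2$ and the cross terms by $O(1)$. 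The ratio of the full integrals is therefore
\begin{equation*}
\frac{\int_0^1 f_\delta^2\,ds}{\int_0^1 s^2 (f_\delta')^2\,ds} = \frac{-\log\eta + O(1)}{(1/2-\eta)^2\,(-\log\eta) + O(1)} \xrightarrow{\eta \to 0^+} 4,
\end{equation*}
so for any $\delta > 0$ a sufficiently small $\eta$ produces $f_\delta$ with ratio at least $4 - \delta$. The only real technical point is this transition estimate: one must confirm that the $O(1/\eta)$ derivative of the cutoff on intervals of length $\eta$ contributes only $O(1)$, which is swamped by the logarithmically divergent plateau. This is an elementary but necessary bookkeeping step, and it is the main obstacle in the argument.
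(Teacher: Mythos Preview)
Your proof of the inequality itself is correct and identical to the paper's: one integration by parts followed by Cauchy--Schwarz.

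For sharpness your cutoff strategy is natural, but there is a genuine error in the bookkeeping at the \emph{right} endpoint. On the transition interval $[1-\eta,\,1-\eta/2]$ you have $s\sim 1$ and $|\chi_\eta'|\lesssim 1/\eta$, so
\[
|\chi_\eta'|^2\,s^{2}\,(s^{-1/2+\eta})^{2}\;=\;|\chi_\eta'|^2\,s^{1+2\eta}\;\sim\;\frac{1}{\eta^{2}},
\]
and integrating over an interval of length $\eta/2$ contributes $\sim 1/\eta$ to $\int_0^1 s^{2}(f_\delta')^{2}\,ds$, not $O(1)$. Since $1/\eta$ dominates the logarithmically divergent plateau $\sim -\log\eta$, your displayed ratio actually tends to $0$, not $4$. (Near $s=0$ the pointwise claim $|\chi_\eta'|^2 s^{1+2\eta}=O(1)$ is also false---it is $\sim 1/\eta$---but there the extra factor $s\sim\eta$ rescues the integral, which is what you presumably intended.) The fix is immediate: decouple the two ends by using a \emph{fixed} cutoff near $s=1$ (derivative $O(1)$, independent of $\eta$) and your $\eta$-dependent cutoff only near $s=0$; then the right-endpoint contribution is genuinely $O(1)$ and your ratio argument goes through. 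For comparison, the paper bypasses cutoffs entirely by writing an explicit $H_0^1$ function $A_1 s^{\alpha_1}+A_2 s^{\alpha_2}-(A_1+A_2)$ on $[\epsilon,1]$ that vanishes at both endpoints, computing both integrals exactly, and sending $\epsilon\to 0$, $\alpha_i\to -\tfrac12$; it then mollifies to land in $C_c^\infty$.
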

\begin{proof}
By the integration by parts and Hölder inequality, we have\begin{align*}
\int_{r_{+}}^{r_{+}+1}f^{2}\mathrm{d}r &= (r-r_{+})f^{2}(r)\Bigl|_{r_{+}}^{r_{+}+1}-\int_{r_{+}}^{r_{+}+1}2(r-r_{+})f(r)f^{\prime}(r)\mathrm{d}r\\&
\leq 2\left(\int_{r_{+}}^{r_{+}+1}f^{2}\mathrm{d}r\right)^{\frac{1}{2}}\left(\int_{r_{+}}^{r_{+}+1}(r-r_{+})^{2}(f^{\prime})^{2}\mathrm{d}r\right)^{\frac{1}{2}}.
\end{align*}
Thus we have\begin{equation*}
\int_{r_{+}}^{r_{+}+1}f^{2}\mathrm{d}r\leq 4\int_{r_{+}}^{r_{+}+1}(r-r_{+})^{2}\left(\frac{df}{dr}\right)^{2}\mathrm{d}r.
\end{equation*}
Next we prove the constant $4$ here is sharp. For $\epsilon>0$ and $-\frac{1}{2}<\alpha_{1}<\alpha_{2}<0$, let $f_{\epsilon,\alpha_{1},\alpha_{2}}$ be\begin{equation}
f_{\epsilon,\alpha_{1},\alpha_{2}} = \left\{
\begin{aligned}
&A_{1}(r-r_{+})^{\alpha_{1}}+A_{2}(r-r_{+})^{\alpha_{2}}-(A_{1}+A_{2}),\quad r-r_{+}>\epsilon,\\
&0,\quad 0<r-r_{+}\leq\epsilon,
\end{aligned}
\right.
\end{equation}
where $A_{1} = \epsilon^{\alpha_{2}-\alpha_{1}}-\epsilon^{-\alpha_{1}}$ and $A_{2} = -(1-\epsilon^{-\alpha_{1}})$. We have\begin{align*}
&\int_{r_{+}}^{r_{+}+1}f_{\epsilon,\alpha_{1},\alpha_{2}}^{2}\mathrm{d}r=   \int_{r_{+}+\epsilon}^{r_{+}+1}f_{\epsilon,\alpha_{1},\alpha_{2}}^{2}\mathrm{d}r \\=&
A_{1}^{2}\left(\frac{1-\epsilon^{2\alpha_{1}+1}}{2\alpha_{1}+1}-2\frac{1-\epsilon^{\alpha_{1}+1}}{\alpha_{1}+1}+1-\epsilon\right)+A_{2}^{2}\left(\frac{1-\epsilon^{2\alpha_{2}+1}}{2\alpha_{2}+1}-2\frac{1-\epsilon^{\alpha_{2}+1}}{\alpha_{2}+1}+1-\epsilon\right)\\&+
2A_{1}A_{2}\left(\frac{1-\epsilon^{\alpha_{1}+\alpha_{2}+1}}{\alpha_{1}+\alpha_{2}+1}-2\frac{1-\epsilon^{\alpha_{1}+1}}{\alpha_{1}+1}-2\frac{1-\epsilon^{\alpha_{2}+1}}{\alpha_{2}+1}+1-\epsilon\right),
\end{align*}
and \begin{align*}
&\int_{r_{+}}^{r_{+}+1}(r-r_{+})^{2}\bigl(\frac{df_{\epsilon,\alpha_{1},\alpha_{2}}}{dr}\bigr)^{2}\mathrm{d}r \\=& 
A_{1}^{2}\alpha_{1}^{2}\frac{1-\epsilon^{2\alpha_{1}+1}}{2\alpha_{1}+1}+A_{2}^{2}\alpha_{2}^{2}\frac{1-\epsilon^{2\alpha_{2}+1}}{2\alpha_{2}+1}+2A_{1}A_{2}\alpha_{1}\alpha_{2}\frac{1-\epsilon^{\alpha_{1}+\alpha_{2}+1}}{\alpha_{1}+\alpha_{2}+1}.
\end{align*}
Hence if we choose $\epsilon$ small and $\alpha_{1},\alpha_{2}$ close to $-\frac{1}{2}$, we have \begin{equation*}
\int_{r_{+}}^{r_{+}+1}f_{\epsilon,\alpha_{1},\alpha_{2}}^{2}\mathrm{d}r\geq (4-2\delta)\int_{r_{+}}^{r_{+}+1}(r-r_{+})^{2}\left(\frac{df_{\epsilon,\alpha_{1},\alpha_{2}}}{dr}\right)^{2}\mathrm{d}r.
\end{equation*}
Note by the construction of $f_{\epsilon,\alpha_{1},\alpha_{2}}$, we know that $f_{\epsilon,\alpha_{1},\alpha_{2}}$ is continuous on $[r_{+},r_{+}+1]$ and smooth on $(r_{+}+\epsilon,r_{+}+1)$. Hence $f_{\epsilon,\alpha_{1},\alpha_{2}}\in H_{0}^{1}(r_{+}+\epsilon,r_{+}+1)$. Then we can approximate $f_{\epsilon,\alpha_{1},\alpha_{2}}$ by $f_{\delta}\in C_{c}(r_{+}+\epsilon,r_{+}+1)$, we have $f_{\delta}\in C_{c}^{\infty}(r_{+},r_{+}+1)$ such that\begin{equation*}
\int_{r_{+}}^{r_{+}+1}(f_{\delta})^{2}\mathrm{d}r\geq (4-\delta)\int_{r_{+}}^{r_{+}+1}(r-r_{+})^{2}\left(\frac{df_{\delta}}{dr}\right)^{2}\mathrm{d}r.
\end{equation*}
\end{proof}
\begin{remark}
Note the inequality we got above is scaling invariant. So we can get the same result with the same sharp constant if we change the domain in the setting to be $[r_{+},r_{+}+a]$ for any $a>0$. In the later discussion, $a$ will be chosen to be a small number.
\end{remark}
Now we are ready to prove the following lemma showing that for fixed $q_{0}$ and the parameters $(r_{+},\Lambda,\alpha,q_{0})$ satisfying conditions \eqref{hardy assumption} and \eqref{large condition}, the set $\mathcal{A}_{M_{0}}$ is non-empty.
\begin{lemma}
\label{nonempty}
For each fixed parameters$ (r_{+},\Lambda,\alpha,q_{0})$ satisfying\begin{align}
&-\frac{9}{4}<\alpha<\min\left\{0,-\frac{3}{2}+\frac{q_{0}^{2}}{2\bigl(-\frac{\Lambda}{3}\bigr)}\right\},\\&
\bigl(-\frac{\Lambda}{3}\bigr)r_{+}^{2}>R_{0},
\end{align}
where $R_{0}$ is the positive root of \eqref{quadratic}, $\mathcal{A}_{M_{0}}$ is non-empty.
\end{lemma}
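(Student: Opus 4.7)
My strategy is to evaluate $L_M$ at the extremal value $M=M_0$ on a test function nearly saturating the sharp Hardy inequality of Lemma~\ref{hardy}, and then pass to sub-extremal $M$ close to $M_0$ by continuity. The heuristic powering this is that at extremality the horizon temperature $T$ vanishes and $\Omega_{M_0}^2$ degenerates exactly quadratically at $r_+$, matching the weight $(r-r_+)^2$ appearing in the sharp Hardy inequality; moreover $A^2$ also vanishes quadratically at $r_+$, so $A^2/\Omega_{M_0}^2$ extends continuously to a constant at the horizon and $L_{M_0}[f]$ remains finite for $f$ supported up to $r_+$.

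Concretely, write $L=-\Lambda/3$, $x=Lr_+^2$, $Q=q_0^2/L$. The extremality relations $M_0=r_+(1+2Lr_+^2)$ and $e_0^2=r_+^2(1+3Lr_+^2)$ yield the Taylor expansions
\begin{equation*}
\Omega_{M_0}^2=K_0(r-r_+)^2+O((r-r_+)^3),\qquad K_0:=\tfrac{1}{r_+^2}+6L,\qquad K_0r_+^2=1+6x,
\end{equation*}
together with $A^2=(e_0^2/r_+^4)(r-r_+)^2+O((r-r_+)^3)$. For any $f$ supported in $(r_+,r_++a)$ with $a$ small this gives
\begin{equation*}
L_{M_0}[f]=\Bigl[r_+^2K_0\!\int (r-r_+)^2(f')^2\,dr+\bigl(L\alpha r_+^2-\tfrac{q_0^2e_0^2}{K_0r_+^2}\bigr)\!\int f^2\,dr\Bigr](1+O(a)).
\end{equation*}
I would then take $f=f_\delta$, the near-optimizer from Lemma~\ref{hardy} rescaled (by scale-invariance of the inequality) to have support in $(r_+,r_++a)$, so that $\int (r-r_+)^2(f_\delta')^2\leq\tfrac{1}{4-\delta}\int f_\delta^2$. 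Using $r_+^2K_0=1+6x$, $L\alpha r_+^2=\alpha x$, and $q_0^2 r_+^2=Qx$ (so $q_0^2e_0^2/(K_0r_+^2)=Qx(1+3x)/(1+6x)$), the bracket is dominated by
\begin{equation*}
\Bigl[\tfrac{1+6x}{4-\delta}+\alpha x-\tfrac{Qx(1+3x)}{1+6x}\Bigr]\int f_\delta^2,
\end{equation*}
and multiplying by the positive factor $4(1+6x)$ reduces the $\delta\to 0$ negativity condition to
\begin{equation*}
p(x):=12(2\alpha+3-Q)x^2+4(\alpha+3-Q)x+1<0.
\end{equation*}
After the identifications $24(\alpha+\tfrac{3}{2}-\tfrac{Q}{2})=12(2\alpha+3-Q)$ and $4(\alpha+\tfrac{3}{2}-\tfrac{Q}{2})-2Q+6=4(\alpha+3-Q)$ this is exactly negativity of the quadratic in \eqref{quadratic}. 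Hypothesis \eqref{hardy assumption} forces $2\alpha+3-Q<0$, so $p$ opens downward; since $p(0)=1$, $p$ has a unique positive root, which is precisely $R_0$, and hypothesis \eqref{large condition} $x>R_0$ then gives $p(x)<0$. Taking $\delta$ and $a$ small enough yields $L_{M_0}[f_\delta]<0$.

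Finally, because the support of $f_\delta$ is compactly contained in $(r_+,\infty)$, the coefficient $\Omega_M^2$ is bounded away from zero on $\operatorname{supp}(f_\delta)$ uniformly in $M$ in a neighborhood of $M_0$, so $M\mapsto L_M[f_\delta]$ is continuous at $M_0$; hence $L_M[f_\delta]<0$ persists for $M<M_0$ sufficiently close to $M_0$. Such $M$ lies in $(M_{e=0},M_0)$ since $M_0=r_+(1+2Lr_+^2)>\tfrac{r_+}{2}(1+Lr_+^2)=M_{e=0}$, so $\mathcal{A}_{M_0}\neq\emptyset$. The only non-routine step is the algebraic matching of the $\delta\to 0$ limit of the sharp-Hardy-saturated bracket with the quadratic $p$ of \eqref{quadratic}; once this identification is made, both the continuity argument in $M$ and the absorption of the $O(a)$ error in the extremal expansion are routine.
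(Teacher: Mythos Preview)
Your proof is correct and follows essentially the same approach as the paper: both evaluate $L_M$ at the extremal value $M=M_0$ on a near-optimizer $f_\delta$ of the sharp Hardy inequality (Lemma~\ref{hardy}) localized near $r_+$, reduce the sign of $L_{M_0}[f_\delta]$ to the sign of the quadratic \eqref{quadratic} in the variable $\bigl(-\frac{\Lambda}{3}\bigr)r_+^2$, and then conclude by continuity in $M$. Your Taylor-expansion bookkeeping (with the $O(a)$ error) is equivalent to, and arguably cleaner than, the paper's use of the exact factorization $\Omega_{M_0}^{2}=\bigl(-\frac{\Lambda}{3}\bigr)(r-r_+)^2\bigl(r^2+2r_+r+3r_+^2+\tfrac{1}{(-\Lambda/3)}\bigr)/r^2$; in particular your algebraic identification of $p(x)$ with \eqref{quadratic} is exactly right.
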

\begin{proof}
For fixed $(r_{+},\Lambda,\alpha,q_{0})$, if we can find the negative energy bound state for $L_{M_{0}}[f]$, then by continuity of $L_{M}[f]$ with respect to $M$ for fixed $f\in C_{c}(r_{+},\infty)$, we can find the negative energy bound state for $L_{M}[f]$ with $M$ in the neighborhood of $M_{0}$.

Calculating $L_{M}[f]$ with $M = M_{0}$, we have\begin{align*}
L_{M_{0}}[f] &= \int_{r_{+}}^{\infty}\bigl(-\frac{\Lambda}{3}\bigr)(r-r_{+})^{2}\left(r^{2}+2r_{+}r+3r_{+}^{2}+\frac{1}{\bigl(-\frac{\Lambda}{3}\bigr)}\right)\left(\frac{df}{dr}\right)^{2}+\left(\bigl(-\frac{\Lambda}{3}\bigr)\alpha-\frac{q_{0}^{2}A^{2}}{\Omega^{2}}\right)r^{2}f^{2}\mathrm{d}r.
\end{align*}
By Lemma $\ref{hardy}$, we can find $f_{\delta}\in C_{c}^{\infty}[r_{+},\infty)$ with $f_{\delta}(r) = 0$ for any $r>r_{+}+\epsilon$ such that\begin{equation*}
\int_{r_{+}}^{r_{+}+\epsilon}f_{\delta}^{2}\mathrm{d}r\geq (4-\delta)\int_{r_{+}}^{r_{+}+\epsilon}(r-r_{+})^{2}\left(\frac{df_{\delta}}{dr}\right)^{2}\mathrm{d}r.
\end{equation*}
Then we have the following estimate for $L_{M_{0}}[f_{\delta}]$:
\begin{align*}
&L_{M}[f_{\delta}] \\<&\left(-\frac{\Lambda}{3}\right)\int_{r_{+}}^{r_{+}+\epsilon}(r-r_{+})^{2}\left(r^{2}+2r_{+}r+3r_{+}^{2}+\frac{1}{\bigl(-\frac{\Lambda}{3}\bigr)}\right)\left(\frac{df_{\delta}}{dr}\right)^{2}\mathrm{d}r\\&+\left(-\frac{\Lambda}{3}\right)(4-\delta)r_{+}^{2}\int_{r_{+}}^{r_{+}+\epsilon}\left(\alpha-\frac{q_{0}^{2}}{\bigl(-\frac{\Lambda}{3}\bigr)}\frac{1+3\bigl(-\frac{\Lambda}{3}\bigr)r_{+}^{2}}{1+6\bigl(-\frac{\Lambda}{3}\bigr)r_{+}^{2}}\right)(r-r_{+})^{2}\left(\frac{df_{\delta}}{dr}\right)^{2}\mathrm{d}r\\
<&\left(-\frac{\Lambda}{3}\right)r_{+}^{2}\int_{r_{+}}^{r_{+}+\epsilon}(r-r_{+})^{2}\left(6+\frac{1}{\bigl(-\frac{\Lambda}{3}\bigr)r_{+}^{2}}\right.\\&\left.-4\left(\alpha-\frac{q_{0}^{2}}{2\bigl(-\frac{\Lambda}{3}\bigr)}-\frac{q_{0}^{2}}{2\bigl(-\frac{\Lambda}{3}\bigr)}\frac{1}{1+6\bigl(-\frac{\Lambda}{3}\bigr)r_{+}^{2}}\right)+C\epsilon+C\delta\right)\left(\frac{df_{\delta}}{dr}\right)^{2}\mathrm{d}r\\=&
\bigl(-\frac{\Lambda}{3}\bigr)r_{+}^{2}\int_{r_{+}}^{r_{+}+\epsilon}(r-r_{+})^{2}\left(-4\left(\alpha-\frac{3}{2}-\frac{q_{0}^{2}}{2\bigl(-\frac{\Lambda}{3}\bigr)}\right)\right.\\&\left.+\frac{1}{\bigl(-\frac{\Lambda}{3}\bigr)r_{+}^{2}}+\frac{2q_{0}^{2}}{\bigl(-\frac{\Lambda}{3}\bigr)}\frac{1}{1+6\bigl(-\frac{\Lambda}{3}\bigr)r_{+}^{2}}+C\epsilon+C\delta\right)\left(\frac{df_{\delta}}{dr}\right)^{2}\mathrm{d}r.
\end{align*}
Then by the conditions \eqref{hardy assumption} and \eqref{large condition}, we can choose $\epsilon$ and $\delta$ small such that the right hand side of the above inequality is negative.

Now we have $L_{M_{0}}[f_{\delta}]<0$. By the continuity, we can prove $\mathcal{A}_{M_{0}}$ is non-empty.
\end{proof}

\subsection{Minimizer of the energy functional $L_{M}^{\epsilon}[f]$}
We consider the following function class $\mathcal{F}$:\begin{equation*}
\mathcal{F}: = \left\{f\in C^{\infty}_{c}(r_{+},\infty),\quad \int_{r_{+}}^{\infty}\frac{r^{2}f^{2}}{\Omega^{2}}\mathrm{d}r = 1.\right\}.
\end{equation*}
We can define the perturbed function class $\mathcal{F}^{\epsilon}$ by changing the domain in the above definition to be $(r_{+}+\epsilon,\infty)$. Since $f\in\mathcal{F}^{\epsilon}$ is compactly supported, we still have $L_{M}^{\epsilon}[f] = L_{M,\beta}^{\epsilon}[f]$, similar to Lemma $\ref{change}$. 

At first glimpse one may suspect whether the minimum of $L_{M}^{\epsilon}[f]$ can be attained in $\mathcal{F}^{\epsilon}$ since $L_{M}^{\epsilon}$ contains the integral of $r^{2}f^{2}$ while $\mathcal{F}^{\epsilon}$ only makes the restriction on the $L^{2}$ norm of the function near infinity. However, we can prove the following proposition.\begin{proposition}
For $M_{e=0}<M<M_{0}$, if $L_{M}[f]$ has a negative energy bound state, then for any $\epsilon>0$ small enough, $L_{M}^{\epsilon}[f]$ can attain its negative minimum in the function class $\mathcal{F}^{\epsilon}$.
\label{eigenvalue}
\end{proposition}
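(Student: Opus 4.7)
The plan is to apply the direct method of the calculus of variations, but written with respect to the twisted form $L_{M,\beta}^\epsilon$ at $\beta = \frac{3}{2}$, which by Lemma \ref{change} agrees with $L_M^\epsilon$ on the compactly supported functions of $\mathcal{F}^\epsilon$. The reason to use the $\frac{3}{2}$-twist is that a direct computation shows $V_{M,3/2}(r) = \frac{3}{2r}\frac{d\Omega_M^2}{dr} - \frac{3}{4}\frac{\Omega_M^2}{r^2} + (-\frac{\Lambda}{3})\alpha$ converges to the strictly positive constant $(-\frac{\Lambda}{3})(\frac{9}{4}+\alpha)$ at infinity, whereas the untwisted potential $(-\frac{\Lambda}{3})\alpha - q_0^2 A^2/\Omega_M^2$ is strictly negative throughout $(r_+,\infty)$. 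This asymptotic positivity is exactly the coercivity needed to control the weighted $L^2$ mass at infinity, since the constraint $\int r^2 f^2/\Omega_M^2\, dr = 1$ gives only weak control there (where $r^2/\Omega_M^2$ tends to the finite constant $3/(-\Lambda)$).

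That the infimum is strictly negative is immediate: given a negative energy bound state $f_0 \in C_c^\infty(r_+,\infty)$ with $L_M[f_0] < 0$, choose $\epsilon > 0$ so small that $\mathrm{supp}(f_0) \subset (r_++\epsilon,\infty)$, then rescale to $\tilde f_0 := f_0 / \left(\int r^2 f_0^2/\Omega_M^2\, dr\right)^{1/2} \in \mathcal{F}^\epsilon$, for which $L_M^\epsilon[\tilde f_0] < 0$. For a minimizing sequence $\{f_n\} \subset \mathcal{F}^\epsilon$, the twisted potential $W := V_{M,3/2} - q_0^2 A^2/\Omega_M^2$ is continuous on $[r_++\epsilon,\infty)$ with $W(r) \geq c_0 > 0$ for $r \geq R_0$ large enough. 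Since $\Omega_M^2$ is bounded below by a positive constant on $[r_++\epsilon, R_0]$, the constraint yields $\int_{r_++\epsilon}^{R_0} f_n^2\, dr \leq C_\epsilon$, so the (possibly negative) contribution of $W$ on $[r_++\epsilon, R_0]$ to $L_M^\epsilon[f_n]$ is uniformly bounded from below. Combined with $L_M^\epsilon[f_n] \leq L_M^\epsilon[\tilde f_0]$, this produces uniform bounds on both $\int_{r_++\epsilon}^\infty r^{-1}\Omega_M^2 \bigl(\partial_r(r^{3/2}f_n)\bigr)^2\, dr$ and $\int_{R_0}^\infty r^2 f_n^2\, dr$.

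The main obstacle is to rule out mass escape to infinity, which would prevent the constraint from surviving the passage to the weak limit (producing $L_M^\epsilon[f_\infty] \leq \inf$ but only with $\|f_\infty\| < 1$, destroying the argument). This is handled precisely by the second bound above: since $r^2/\Omega_M^2$ is bounded at infinity, one has $\int_R^\infty r^2 f_n^2/\Omega_M^2\, dr \leq C R^{-2}$ uniformly in $n$. On any compact $[r_++\epsilon, R]$ the first bound gives a uniform $H^1$ bound on $r^{3/2} f_n$, so Rellich–Kondrachov extracts a subsequence converging strongly in $L^2_{\mathrm{loc}}$ to some $f_\infty$, and the uniform tail estimate guarantees that the full constraint $\int_{r_++\epsilon}^\infty r^2 f_\infty^2/\Omega_M^2\, dr = 1$ passes to the limit. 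Weak lower semicontinuity of the convex quadratic $L_{M,3/2}^\epsilon$ then yields $L_M^\epsilon[f_\infty] \leq \liminf L_M^\epsilon[f_n] = \inf < 0$, and standard ODE regularity applied to the resulting Euler–Lagrange equation (with Lagrange multiplier accounting for the constraint) promotes $f_\infty$ to a smooth function on $(r_++\epsilon, \infty)$.
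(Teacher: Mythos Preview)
Your argument is correct and follows essentially the same route as the paper: both use the $\beta=\tfrac{3}{2}$ twist so that the twisted potential is eventually positive, deduce a uniform lower bound and the integrability gain $\int r^{2}f_n^{2}\,dr\le C$, extract a limit via Rellich on compacta, and use this gain to prevent mass escape so the constraint survives in the limit. The only cosmetic difference is that you pass the constraint to the limit via a direct tail bound $\int_R^\infty r^{2}f_n^{2}/\Omega_M^{2}\,dr\le CR^{-2}$ whereas the paper argues by contradiction; also note the functional is not literally convex (the potential has a sign), but your intended splitting into the weakly lower semicontinuous derivative part and the locally strongly convergent potential part is exactly what the paper uses.
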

\begin{proof}
In this proof, we use $C$ to denote the constant independent of $\epsilon$. By Lemma $\ref{change}$ and the continuity of $L_{M,\frac{3}{2}}^{\epsilon}[f]$ with respect to $\epsilon$, we know that for $\epsilon$ small enough, we can also find a negative energy bound state for $L_{M,\frac{3}{2}}^{\epsilon}[f]$. It remains to prove that $L_{M,\frac{3}{2}}^{\epsilon}$ can attain its minimum in $\mathcal{F}^{\epsilon}$. We have\begin{align*}
L_{M,\frac{3}{2}}^{\epsilon}[f] &= \int_{r_{+}+\epsilon}^{\infty}r^{-1}\Omega_{M}^{2}\left(\frac{dr^{\frac{3}{2}}f}{dr}\right)^{2}+\left(V_{M,\frac{3}{2}}(r)-\frac{q_{0}^{2}A^{2}}{\Omega_{M}^{2}}\right)r^{2}f^{2}\mathrm{d}r,\\
V_{M,\frac{3}{2}}(r) &= \left(-\frac{\Lambda}{3}\right)\left(\frac{9}{4}+\alpha\right)-\frac{3}{4r^{2}}+\frac{9M}{2r^{3}}-\frac{15e^{2}}{4r^{4}}.
\end{align*}
Since $L_{M}[f]$ has a negative energy bound state, the term \begin{equation*}
V_{M,\frac{3}{2}}(r)-\frac{q_{0}^{2}A^{2}}{\Omega_{M}^{2}}
\end{equation*}
must have negative values. However, by considering the limit of this term when $r\rightarrow\infty$, we have\begin{equation*}
\lim_{r\rightarrow\infty}V_{M,\frac{3}{2}}(r)-\frac{q_{0}^{2}A^{2}}{\Omega_{M}^{2}} = \left(-\frac{\Lambda}{3}\right)\left(\frac{9}{4}+\alpha\right)>0.
\end{equation*}
Hence there exists a $x_{v}$ such that $V_{M,\frac{3}{2}}-\frac{q_{0}^{2}A^{2}}{\Omega_{M}^{2}}$ is positive for $r>x_{v}$.

%Basic calculation shows that $V_{M,\frac{3}{2}}$ has at most one zero point and tends to a positive contant $\left(-\frac{\Lambda}{3}\right)\left(\frac{9}{4}+\frac{m^{2}}{\left(-\frac{\Lambda}{3}\right)}\right)$ when $r\rightarrow\infty$. Hence for $M_{0}<M<M_{c}$, we have $V_{M,\frac{3}{2}}$ is negative at the event horizon and has exactly one zero point $x_{v}$, otherwise it is impossible to find a negative energy bound state for $\widetilde{L}_{M,\frac{3}{2}}[f]$ since the integrand is positive. 
Then we have\begin{equation}
\begin{aligned}
L_{M,\frac{3}{2}}^{\epsilon}[f]-&\int_{r_{+}+\epsilon}^{x_{v}}\left(V_{M,\frac{3}{2}}(r)-\frac{q_{0}^{2}A^{2}}{\Omega_{M}^{2}}\right)r^{2}f^{2}\mathrm{d}r \\&= \int_{r_{+}+\epsilon}^{\infty}r^{-1}\Omega_{M}^{2}\left(\frac{dr^{\frac{3}{2}}f}{dr}\right)^{2}\mathrm{d}r+\int_{x_{v}}^{\infty}\left(V_{M,\frac{3}{2}}(r)-\frac{q_{0}^{2}A^{2}}{\Omega_{M}^{2}}\right)r^{2}f^{2}\mathrm{d}r
\end{aligned}
\label{core equality1}
\end{equation}
Since on $(r_{+},x_{v})$\begin{equation*}
\left\vert V_{M,\frac{3}{2}}(r)-\frac{q_{0}^{2}A^{2}}{\Omega_{M}^{2}}\right\vert\leq C,
\end{equation*}
we have\begin{equation*}
L_{M,\frac{3}{2}}^{\epsilon}[f]-\int_{r_{+}+\epsilon}^{x_{v}}\left(V_{M,\frac{3}{2}}(r)-\frac{q_{0}^{2}A^{2}}{\Omega_{M}^{2}}\right)r^{2}f^{2}\mathrm{d}r\leq L_{M,\frac{3}{2}}^{\epsilon}[f]+C\int_{r_{+}+\epsilon}^{x_{v}}r^{2}f^{2}\mathrm{d}r.
\end{equation*}

Since $f\in\mathcal{F}^{\epsilon}$, we have \begin{equation}
\int_{r_{+}+\epsilon}^{x_{v}}r^{2}f^{2}\mathrm{d}r = \int_{r_{+}+\epsilon}^{\infty}\frac{r^{2}}{\Omega_{M}^{2}}\Omega_{M}^{2}f^{2}\mathrm{d}r\leq C.
\end{equation}
And since the right hand side of $\eqref{core equality1}$ is positive, we have\begin{equation*}
L_{M,\frac{3}{2}}^{\epsilon}[f]>-C.
\end{equation*}
Hence we have $L_{M,\frac{3}{2}}^{\epsilon}$ is lower bounded. Let $f_{n}^{\epsilon}\in\mathcal{F}^{\epsilon}$ be the minimizing sequence of $L_{M,\frac{3}{2}}^{\epsilon}$. Without loss of generality, we can assume $L_{M,\frac{3}{2}}^{\epsilon}[f_{n}^{\epsilon}]<0$. Then since $L_{M,\frac{3}{2}}^{\epsilon}[f_{n}^{\epsilon}]$ is bounded, by $\eqref{core equality1}$ we know $rf_{n}^{\epsilon}$ is $L^{2}$ integrable:\begin{equation}
\int_{r_{+}+\epsilon}^{\infty}r^{2}(f_{n}^{\epsilon})^{2}\mathrm{d}r\leq C,
\label{increase the integrability}
\end{equation}
For $r>r_{+}+\epsilon$, we have\begin{equation*}
\Omega_{M}^{2}(r)>C\epsilon r^{2}.
\end{equation*}
Hence we have\begin{align*}
\int_{r_{+}+\epsilon}^{\infty}r\left(\frac{dr^{\frac{3}{2}}f_{n}^{\epsilon}}{dr}\right)^{2}\mathrm{d}r\leq C(\epsilon).
\end{align*}
Then we have $f_{n}^{\epsilon}$ is $H^{1}$ bounded. By Rellich compactness theorem, we have $f_{n}^{\epsilon}$ weakly converges to $\phi^{\epsilon}$ in $H_{0}^{1}$ and strongly converges to $\phi^{\epsilon}$ in $L^{2}$ on any compact set $K\subset[r_{+}+\epsilon,\infty)$. Next, we prove that $\phi^{\epsilon}$ also belongs to $\mathcal{F}^{\epsilon}$. By the strong $L^{2}$ convergence on a compact set $K$, we have\begin{equation*}
\int_{K} \frac{r^{2}}{\Omega_{M}^{2}}(\phi^{\epsilon})^{2}\mathrm{d}r\leq1.
\end{equation*}
Passing to the limit, we have \begin{equation*}
\int_{r_{+}+\epsilon}^{\infty}\frac{r^{2}}{\Omega_{M}^{2}}\left(\phi^{\epsilon}\right)^{2}\mathrm{d}r\leq 1.
\end{equation*}
Assume $\int_{r_{+}+\epsilon}^{\infty}\frac{r^{2}}{\Omega_{M}^{2}}(\phi^{\epsilon})^{2}\mathrm{d}r<1$, then for any $N$, there exist infinite many $f_{n}^{\epsilon}$ such that \begin{equation*}
\int_{N}^{\infty}\frac{r^{2}}{\Omega_{M}^{2}}(f_{n}^{\epsilon})^{2}\mathrm{d}r>\frac{1}{2}\left(1-\int_{r_{+}+\epsilon}^{\infty}\frac{r^{2}}{\Omega_{M}^{2}}(\phi^{\epsilon})^{2}\mathrm{d}r\right).
\end{equation*}
Then by $\eqref{core equality1}$, we have \begin{equation*}
C\geq \int_{x_{v}}^{\infty}\left(V_{M,\frac{3}{2}}-\frac{q_{0}^{2}A^{2}}{\Omega_{M}^{2}}\right)r^{2}(f_{n}^{\epsilon})^{2}\mathrm{d}r\geq cN^{2}\int_{N}^{\infty}(f_{n}^{\epsilon})^{2}\mathrm{d}r,
\end{equation*}
which provides a contradiction if $N$ is large enough. Hence $L_{M,\frac{3}{2}}^{\epsilon}[f]$ can attain its minimum $L_{M,\frac{3}{2}}^{\epsilon}[\phi^{\epsilon}]$ in $\mathcal{F}^{\epsilon}$.
\end{proof}
\begin{remark}
\label{coercivity of infinity}
The key point in obtaining the minimizer in the above proof is that the twisted potential $V_{M,\frac{3}{2}}$ is strictly positive near the infinity. This property allows us to increase the integrability of the functions in the minimizing sequence and prove that $\phi^{\epsilon}\in\mathcal{F}^{\epsilon}$. For the $q_{0} = 0$ case, we have a more refined description of $V_{M,\frac{3}{2}}$: $V_{M,\frac{3}{2}}$ is negative on $(r_{+},x_{v})$ and positive on $(x_{v},\infty)$. The presence of the charge $q_{0}$ complicates the sign of the term $V_{M,\frac{3}{2}}-\frac{q_{0}^{2}A^{2}}{\Omega_{M}^{2}}$ on $(r_{+},x_{v})$.
\end{remark}
By the constrained variational principle, we can get the Euler--Lagrange equation for $\phi^{\epsilon}$\begin{equation}
\label{Euler for phi}
-\frac{d}{dr}\left(r^{2}\Omega_{M}^{2}\frac{d\phi^{\epsilon}}{dr}\right)+\left(\bigl(-\frac{\Lambda}{3}\bigr)\alpha-\frac{q_{0}^{2}A^{2}}{\Omega_{M}^{2}}\right)r^{2}\phi^{\epsilon} = -\lambda_{M}^{\epsilon}\frac{r^{2}}{\Omega_{M}^{2}}\phi^{\epsilon},
\end{equation}
where $\lambda_{M}^{\epsilon}$ is the minimum of $L_{M,\frac{3}{2}}^{\epsilon}$ in $\mathcal{F}^{\epsilon}$. 

\begin{remark}
One can still have Lemma $\ref{eigenvalue}$ and estimate $\eqref{increase the integrability}$ without assuming the negative energy bound state. The proof follows line by line. However, the negative energy bound state condition allows us to derive the Euler-Lagrange equation with negative eigenvalue $-\lambda_{M}$. 
\end{remark}

Applying the asymptotic analysis to the solution of $\eqref{Euler for phi}$, asymptotically we have $\phi^{\epsilon}\approx r^{-\frac{3}{2}-\Delta}$ since $\Vert \phi^{\epsilon}\Vert_{L^{2}}$ is bounded. Moreover, by using the energy estimate, we can get the following uniform bound for $\phi^{\epsilon}$ independent of $\epsilon$.
\begin{proposition}
\label{gain regularity}
Let $\phi^{\epsilon}$ be the solution of $\eqref{Euler for phi}$ with sub-extremal parameters $(M,r_{+},\Lambda,\alpha,q_{0})$ obtained above. Then $\phi^{\epsilon}$ is asymptotically $Cr^{-\frac{3}{2}-\Delta}$ when $r$ approaches infinity. Moreover, we have\begin{equation}
\label{gain}
\int_{r_{+}+\epsilon}^{\infty}r^{2+\Delta}(\phi^{\epsilon})^{2}\mathrm{d}r<C,
\end{equation}
where $C$ is a constant independent of $\epsilon$.
\end{proposition}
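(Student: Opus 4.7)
The plan is to first establish the asymptotic Dirichlet behavior of $\phi^{\epsilon}$ by standard ODE asymptotic analysis combined with the integrability bound \eqref{increase the integrability} from the proof of Proposition~\ref{eigenvalue}, and then to obtain the uniform weighted bound \eqref{gain} by showing that the asymptotic coefficient in front of $r^{-3/2-\Delta}$ is uniformly controlled via Sobolev embedding in the interior and matching at a fixed interior radius. The key preliminary observation is that the eigenvalue $-\lambda_{M}^{\epsilon}$ is uniformly bounded as $\epsilon\to 0$: a fixed test function $f_{0}\in C_{c}^{\infty}(r_{+}+\epsilon_{0},\infty)$ realizing a negative energy bound state (cf.\ Lemma~\ref{nonempty} or Lemma~\ref{charged negative energy bound state}) lies in $\mathcal{F}^{\epsilon}$ after normalization for all $\epsilon<\epsilon_{0}$, and the variational principle yields $-\lambda_{M}^{\epsilon}\leq L_{M}^{\epsilon}[f_{0}]/c_{0}^{2}\leq C$ with $c_{0}^{2}=\int r^{2}f_{0}^{2}/\Omega_{M}^{2}\,\mathrm{d}r$.

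For the asymptotic behavior, $\phi^{\epsilon}$ is a smooth classical solution of the linear second-order ODE \eqref{Euler for phi} on $(r_{+}+\epsilon,\infty)$, so a local asymptotic analysis at $r=\infty$ (exactly as in Proposition~\ref{trival theorem}) produces two independent branches $u_{D}^{\epsilon}\sim r^{-3/2-\Delta}$ and $u_{N}^{\epsilon}\sim r^{-3/2+\Delta}$, and one writes $\phi^{\epsilon}=A_{\epsilon}u_{D}^{\epsilon}+B_{\epsilon}u_{N}^{\epsilon}$ near infinity. The uniform bound $\int r^{2}(\phi^{\epsilon})^{2}\mathrm{d}r\leq C$ (inherited by $\phi^\epsilon$ from \eqref{increase the integrability} via lower semicontinuity) forces $B_{\epsilon}=0$: otherwise the Neumann component would contribute a divergent $B_{\epsilon}^{2}\int^{\infty}r^{-1+2\Delta}\mathrm{d}r$ for every $\Delta>0$. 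Hence $\phi^{\epsilon}(r)\sim A_{\epsilon}r^{-3/2-\Delta}$ as $r\to\infty$.

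For \eqref{gain}, combining the Euler--Lagrange identity $L_{M}^{\epsilon}[\phi^{\epsilon}]=-\lambda_{M}^{\epsilon}$ with the uniform eigenvalue bound and the twisted identity of Lemma~\ref{change} (whose boundary terms vanish thanks to the Dirichlet asymptotic just established) yields uniform control of $\int r^{2}\Omega_{M}^{2}(\phi^{\epsilon}_{r})^{2}\mathrm{d}r$ and $\int r^{2}(\phi^{\epsilon})^{2}\mathrm{d}r$. On a fixed compact subinterval $[r_{1},r_{2}]\subset(r_{+},\infty)$ with $r_{1}>r_{+}+\epsilon$ for all small $\epsilon$, the weights $r^{2}\Omega_{M}^{2}$ and $r^{2}$ are bounded below, so $\phi^{\epsilon}|_{[r_{1},r_{2}]}$ is uniformly bounded in $H^{1}$; Sobolev embedding in one dimension then gives a uniform $L^{\infty}$ bound on $[r_{1},r_{2}]$, in particular $|\phi^{\epsilon}(r_{2})|\leq C$. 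Matching with the Dirichlet branch yields $|A_{\epsilon}|=|\phi^{\epsilon}(r_{2})|/|u_{D}^{\epsilon}(r_{2})|\leq C$ uniformly, and \eqref{gain} follows from the split
\begin{equation*}
\int_{r_{+}+\epsilon}^{\infty}r^{2+\Delta}(\phi^{\epsilon})^{2}\mathrm{d}r\leq r_{2}^{\Delta}\int_{r_{+}+\epsilon}^{r_{2}}r^{2}(\phi^{\epsilon})^{2}\mathrm{d}r+|A_{\epsilon}|^{2}\int_{r_{2}}^{\infty}r^{2+\Delta}(u_{D}^{\epsilon}(r))^{2}\mathrm{d}r,
\end{equation*}
the last integral being uniformly finite by the Dirichlet decay of $u_{D}^{\epsilon}$ (making $r^{2+\Delta}(u_{D}^{\epsilon})^{2}\sim r^{-1-\Delta}$).

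The main obstacle I anticipate is the uniform non-degeneracy $|u_{D}^{\epsilon}(r_{2})|\geq c>0$ in the matching step, which is required for the Dirichlet coefficient $A_{\epsilon}$ to be genuinely controlled. Since the coefficients of \eqref{Euler for phi} depend continuously on $\lambda_{M}^{\epsilon}$ and the latter varies in a compact set by the preliminary step, this reduces to continuous dependence of solutions of regular linear ODEs on parameters, together with choosing $r_{2}$ large enough that the Dirichlet/Neumann basis at infinity remains well-conditioned across the admissible eigenvalue range.
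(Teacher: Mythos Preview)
Your argument is essentially correct but takes a genuinely different route from the paper. You decompose $\phi^{\epsilon}$ with respect to the $\epsilon$-dependent fundamental system $\{u_{D}^{\epsilon},u_{N}^{\epsilon}\}$ of the full eigenvalue equation \eqref{Euler for phi} using \emph{constant} coefficients $A_{\epsilon},B_{\epsilon}$, kill the Neumann branch via the $L^{2}$ bound, and then control $|A_{\epsilon}|$ by Sobolev embedding on a fixed interior interval combined with a continuity/compactness argument for the basis as $\lambda_{M}^{\epsilon}$ ranges over a bounded set. The paper instead uses the \emph{fixed} stationary basis $\{u_{D},u_{N}\}$ of \eqref{stationary Klein-Gordon} (eigenvalue zero) and applies \emph{variation of constants}: one writes $\phi^{\epsilon}=C_{D}^{\epsilon}(r)u_{D}+C_{N}^{\epsilon}(r)u_{N}$ with $r$-dependent coefficients driven by the forcing $-\lambda_{M}^{\epsilon}\phi^{\epsilon}/(\Omega_{M}^{2})^{2}$, and the Wronskian estimates \eqref{infty wronskian}--\eqref{horizon wronskian} yield quantitatively $|C_{N}^{\epsilon}(r)|\lesssim\lambda_{M}^{\epsilon}r^{-2-\Delta}$ and $|C_{D}^{\epsilon}(r)|\leq C$, from which \eqref{gain} follows directly. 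The paper's route is more explicit and, crucially, keeps the basis independent of $\epsilon$, so it sidesteps precisely the non-degeneracy obstacle $|u_{D}^{\epsilon}(r_{2})|\geq c$ that you flag; your route is conceptually lighter and avoids the Wronskian machinery, at the cost of invoking continuous dependence of the Frobenius solutions on the compactly varying parameter $\lambda_{M}^{\epsilon}$. Your proposed resolution of the obstacle---choosing $r_{2}$ large enough that the asymptotic $u_{D}^{\epsilon}(r_{2})\approx r_{2}^{-3/2-\Delta}$ holds uniformly over the admissible eigenvalue range---does work, so there is no genuine gap.
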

\begin{proof}
By the asymptotic analysis of the equation$\eqref{stationary Klein-Gordon}$ at $r = \infty$ and $L^{2}$ boundedness of $r\phi^{\epsilon}$, we know \begin{equation*}
\phi^{\epsilon} \approx Ar^{-\frac{3}{2}-\Delta},\quad r>N.
\end{equation*}
It remains to prove $\eqref{gain}$. Let \begin{equation}
\phi^{\epsilon} = C_{D}^{\epsilon}(r)u_{D}(r)+C_{N}^{\epsilon}(r)u_{N}(r),\label{varitional constant}
\end{equation}
where $u_{D}$ and $u_{N}$ are the local basis of the equation $\eqref{stationary Klein-Gordon}$ with the same sub-extremal parameters $(M,r_{+},\Lambda,\alpha,q_{0})$ on $(r_{+},\infty)$, defined in Proposition $\ref{trival theorem}$. 
By the asymptotic behavior of $\phi^{\epsilon}$ near $r = \infty$, we have
\begin{equation*}
\lim_{r\rightarrow\infty}C_{N}^{\epsilon}(r) = 0.
\end{equation*}
Substituting $\eqref{varitional constant}$ into $\eqref{Euler for phi}$, we have\begin{align*}
(C_{D}^{\epsilon})^{\prime}(r)u_{D}(r)+(C_{N}^{\epsilon})^{\prime}(r)u_{N}(r) &= 0,\\
(C_{D}^{\epsilon})^{\prime}(r)u_{D}^{\prime}(r)+(C_{N}^{\epsilon})^{\prime}(r)u_{N}^{\prime}(r) &= \frac{-\lambda_{M}^{\epsilon}\phi^{\epsilon}}{(\Omega_{M}^{2})^{2}}.
\end{align*}
Then we have
\begin{align*}
C_{N}^{\epsilon}(r) = \int_{\infty}^{r}\frac{\lambda_{M}^{\epsilon}\phi^{\epsilon}u_{D}}{(\Omega_{M}^{2})^{2}\left(u_{N}u_{D}^{\prime}-u_{N}^{\prime}u_{D}\right)}\mathrm{d}\bar{r}.
\end{align*}
Using $\eqref{infty wronskian}$ and $\eqref{horizon wronskian}$, we have\begin{align*}
\left\vert\frac{u_{D}}{u_{N}^{\prime}u_{D}-u_{N}u^{\prime}_{D}}\frac{1}{(\Omega_{M}^{2})^{2}}\right\vert\leq Cr^{-\frac{3}{2}-\Delta},\quad r\rightarrow\infty.
\end{align*}

Hence for $r>r_{+}+1$, we have\begin{equation}
\label{Cd gain}
\begin{aligned}
\vert C_{N}^{\epsilon}(r)\vert &\leq C\lambda_{M}^{\epsilon}\int_{r}^{\infty}\left\vert\frac{u_{D}}{u_{N}^{\prime}u_{D}-u_{N}u^{\prime}_{D}}\frac{\phi^{\epsilon}}{(\Omega_{M}^{2})^{2}}\right\vert\mathrm{d}\bar{r}
\\&\leq C\lambda_{M}^{\epsilon}\left(\int_{r}^{\infty}\bar{r}^{2}(\phi^{\epsilon})^{2}\mathrm{d}\bar{r}\right)^{\frac{1}{2}}\left(\int_{r}^{\infty}\frac{1}{\bar{r}^{2}}\left(\frac{u_{D}}{u_{N}^{\prime}u_{D}-u_{N}u^{\prime}_{D}}\right)^{2}\left(\frac{1}{(\Omega_{M}^{2})^{2}}\right)^{2}\mathrm{d}\bar{r}\right)^{\frac{1}{2}}\\&\leq C\lambda_{M}^{\epsilon}r^{-2-\Delta},\quad r>r_{+}+1.
\end{aligned}
\end{equation}
where $C$ here is a constant independent of $\epsilon$. Similarly, for $C_{D}^{\epsilon}$ and $r>r_{+}+1$, we have\begin{equation}
\label{large Cd}
\left\vert C_{D}^{\epsilon}(r)-C_{D}^{\epsilon}(r_{+}+1)\right\vert\leq C\left(\int_{r_{+}+1}^{r}\left(\frac{1}{r}\frac{u_{N}}{u^{\prime}_{N}u_{D}-u_{N}u^{\prime}_{D}}\frac{1}{(\Omega_{M}^{2})^{2}}\right)^{2}\mathrm{d}\bar{r}\right)^{\frac{1}{2}}\leq C.
\end{equation}
Thus we only need to show that $C_{D}^{\epsilon}(r_{+}+1)$ is uniformly bounded in $\epsilon$. Note that \begin{equation*}
C_{D}^{\epsilon}(r)u_{D}(r) = \phi^{\epsilon}-C_{N}^{\epsilon}(r)u_{N}(r).
\end{equation*}
If we can find a sequence $\epsilon_{n}$ such that $\vert C_{D}^{\epsilon_{n}}(r_{+}+1)\vert$ goes to infinity, then by $\eqref{large Cd}$, $\vert C_{D}^{\epsilon_{n}}(r)\vert$ goes to infinity uniformly in $r$ for $r>r_{+}+1$. Hence we have\begin{equation*}
\int_{r_{+}+1}^{\infty}(C_{D}^{\epsilon_{n}})^{2}(r)u^{2}_{D}(r)\mathrm{d}r\rightarrow \infty,
\end{equation*}
which is a contradiction since $\phi^{\epsilon}$ and $C_{N}u_{N}$ are uniformly $L^{2}$ bounded in $\epsilon$. Therefore we obtain the uniform boundedness of $C_{N}(r)$. Hence we can prove $\eqref{gain}$.
\end{proof}

Next, we want to get rid of $\epsilon$. We can prove the following proposition.
\begin{proposition}
\label{nontrival solution}
If $L_{M}[f]$ has a negative energy bound state, then there exists a non-zero solution of the equation\begin{equation}
-\frac{d}{dr}(r^{2}\Omega_{M}^{2}\frac{d\phi}{dr})+\left(\bigl(-\frac{\Lambda}{3}\bigr)\alpha-\frac{q_{0}^{2}A^{2}}{\Omega_{M}^{2}}\right)r^{2}\phi = -\frac{\lambda_{M}r^{2}}{\Omega_{M}^{2}}\phi,\quad r_{+}<r<\infty.
\end{equation}
Moreover, $\phi$ satisfies the Dirichlet boundary condition and can be extended continuously to the event horizon.
\end{proposition}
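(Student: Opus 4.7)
The plan is to construct $\phi$ as a subsequential limit as $\epsilon \to 0$ of the perturbed minimizers $\phi^{\epsilon}$ from Proposition \ref{eigenvalue}, each of which solves \eqref{Euler for phi} on $(r_+ + \epsilon, \infty)$ with eigenvalue $-\lambda_M^\epsilon < 0$. First I would show that $\lambda_M^\epsilon \to \lambda_M > 0$ along a subsequence: the negative energy bound state hypothesis supplies $f_0 \in C_c^\infty(r_+, \infty)$ with $L_M[f_0] < 0$, and once $\epsilon$ is smaller than the distance from $\mathrm{supp}(f_0)$ to $r_+$ we have (after normalization) $f_0 \in \mathcal{F}^\epsilon$, giving $-\lambda_M^\epsilon \leq L_M[f_0] < 0$; the upper bound on $\lambda_M^\epsilon$ comes from the uniform lower bound on $L_M^\epsilon$ already derived in Proposition \ref{eigenvalue}.

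Second, I would extract a locally convergent subsequence. On any compact $K \Subset (r_+, \infty)$, the decay bound \eqref{gain} gives a uniform $L^2(K)$ estimate, and combining the energy identity $L_M^\epsilon[\phi^\epsilon] = -\lambda_M^\epsilon$, the boundedness of the potential $V_M = (-\Lambda/3)\alpha - q_0^2 A^2/\Omega_M^2$ on $[r_+, \infty)$, and \eqref{gain} controls $\int r^2 \Omega_M^2 (\phi^{\epsilon\prime})^2$ uniformly; since $\Omega_M^2 \geq c_K > 0$ on $K$, this yields a uniform $H^1(K)$ bound. Rellich compactness in one dimension together with an ODE bootstrap via \eqref{Euler for phi} (smooth coefficients on $K$) then produces a subsequence converging in $C^2_{loc}(r_+, \infty)$ to a limit $\phi \in C^2(r_+, \infty)$ that solves \eqref{Euler for phi} classically with eigenvalue $-\lambda_M$.

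The hard part is showing $\phi \not\equiv 0$: the normalization $\int_{r_+ + \epsilon}^\infty r^2 (\phi^\epsilon)^2/\Omega_M^2 = 1$ involves a weight $1/\Omega_M^2$ that is singular at the horizon, so mass might concentrate at $r_+$ and vanish in the local limit. My strategy is a two-part non-concentration argument confining the unit mass to a fixed compact set $[r_+ + \eta_0, N_0] \Subset (r_+, \infty)$. Control of the tail at infinity is essentially available: the positivity of the twisted potential $V_{M,3/2}$ at large $r$ (Remark \ref{coercivity of infinity}) combined with \eqref{gain} and the boundedness of $r^2/\Omega_M^2$ at infinity forces $\int_N^\infty r^2 (\phi^\epsilon)^2/\Omega_M^2 \to 0$ as $N \to \infty$, uniformly in $\epsilon$. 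Control near the horizon is the delicate step: exploiting the Dirichlet-type vanishing $\phi^\epsilon(r_+ + \epsilon) = 0$, the uniform bound $\int r^2 \Omega_M^2 (\phi^{\epsilon\prime})^2 \leq C$, a near-horizon weighted Hardy inequality in the spirit of Lemma \ref{hardy}, and the regular-singular-point structure of \eqref{Euler for phi} at $r_+$, the aim is to bound $\int_{r_+ + \epsilon}^{r_+ + \eta} r^2 (\phi^\epsilon)^2/\Omega_M^2 \leq o_\eta(1)$ uniformly in $\epsilon$. Once the mass is confined to a fixed compact set, the $C^0_{loc}$ convergence of $\phi^\epsilon$ guarantees $\int_{K_0} r^2 \phi^2/\Omega_M^2 > 0$, hence $\phi \not\equiv 0$.

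Finally, the boundary conditions follow from standard asymptotics. At infinity, $\phi$ inherits \eqref{gain} by Fatou, so $\int r^{2+\Delta}\phi^2 < \infty$, and the asymptotic analysis of Proposition \ref{trival theorem} then pins down the $r^{-3/2-\Delta}$ branch, which is precisely the Dirichlet condition. At the horizon, regular-singular-point analysis of \eqref{Euler for phi} at $r = r_+$ yields indicial exponents $\pm\sqrt{\lambda_M}/T$, and the near-horizon mass estimate above combined with $\lambda_M > 0$ forces $\phi$ to select the regular branch $\sim (r - r_+)^{\sqrt{\lambda_M}/T}$, which extends continuously by $0$ at the event horizon.
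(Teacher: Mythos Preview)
Your overall architecture—local $H^1$ compactness, passing to a limit of the $\phi^\epsilon$, and reading off the boundary conditions from uniform integrability—matches the paper's proof. The difference lies entirely in the non-triviality step, and there your proposal diverges from the paper in a way that deserves comment.

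The paper does \emph{not} try to show that the weighted mass $\int r^2(\phi^\epsilon)^2/\Omega_M^2$ cannot concentrate at the horizon. Instead it introduces a \emph{second} twisted functional $L_{M,\beta_0}^\epsilon$ with $\beta_0$ large, chosen so that the twisted potential $V_{M,\beta_0}$ is strictly positive in a neighbourhood $(r_+,\tilde x_{\beta_0})$ of the horizon (this is where sub-extremality $T>0$ enters). The identity $L_{M,\beta_0}^\epsilon[\phi^\epsilon]=-\lambda_M^\epsilon$ together with Lemma~\ref{change} then forces the \emph{negative} part of the potential—which lives on $(\tilde x_{\beta_0},\infty)$—to carry at least $\lambda_M^\epsilon\geq c_0>0$ of the unweighted mass $\int r^2(\phi^\epsilon)^2$. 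Combined with the tail bound \eqref{gain}, this traps a fixed amount of $\int r^2(\phi^\epsilon)^2$ on a compact set $[\tilde x_{\beta_0},N]$, and local convergence finishes.

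Your route via a Hardy inequality is more delicate than you indicate. Near the horizon $\Omega_M^2\sim T(r-r_+)$, so you have $\int (r-r_+)\,(\phi^{\epsilon\prime})^2\leq C$ and you want $\int (\phi^\epsilon)^2/(r-r_+)=o_\eta(1)$; these two weighted norms are \emph{critical} with respect to each other (both scale-invariant under $r-r_+\mapsto \mu(r-r_+)$), so no Hardy inequality in the spirit of Lemma~\ref{hardy} can do this on its own. Your alternative, the regular-singular-point structure, \emph{can} be made to work: writing $\phi^\epsilon=a^\epsilon u_+ + b^\epsilon u_-$ in a $\lambda$-dependent Frobenius basis with $u_\pm\sim(r-r_+)^{\pm\sqrt{\lambda_M^\epsilon}/T}$, the interior $L^2(K)$ bound controls $|a^\epsilon|,|b^\epsilon|$, and the vanishing $\phi^\epsilon(r_++\epsilon)=0$ forces $b^\epsilon=O(\epsilon^{2\sqrt{\lambda_M^\epsilon}/T})$, which kills the singular contribution. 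But this requires tracking the $\epsilon$-dependence of the Frobenius basis through $\lambda_M^\epsilon$, and is considerably more work than your sketch suggests. The paper's second-twist argument sidesteps all of this with a one-line energy identity; your ODE route buys nothing extra here and is less robust (in the Neumann section the paper explicitly notes that one cannot freely swap twists, making a single well-designed twist essential).
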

\begin{proof}
Let $K_{n} = [r_{+}+\frac{1}{n},n]$. By $\eqref{core equality1}$, we have\begin{equation*}
\int_{K_{n}}\left(\frac{d\phi^{\epsilon}}{dr}\right)^{2}\mathrm{d}r\leq C(n).
\end{equation*}
Thus we have the uniform boundedness for $\phi^{\epsilon}$ on $K_{n}$\begin{align}
\Vert \phi^{\epsilon}\Vert_{H^{1}(K_{n})}\leq C(n).
\end{align}
Hence $\phi^{\epsilon}$ weakly converges to $\phi$ in $H^{1}_{loc}$ and strongly converges to $\phi$ in $L^{2}_{loc}$ up to a subsequence. By $\eqref{gain}$, we have \begin{equation*}
C\geq \int_{r_{+}+\epsilon}^{\infty}r^{2+\Delta}(\phi^{\epsilon})^{2}\mathrm{d}r\geq N^{\Delta}\int_{N}^{\infty}r^{2}(\phi^{\epsilon})^{2}\mathrm{d}r. 
\end{equation*}
Then we have\begin{equation}
\int_{N}^{\infty}r^{2}(\phi^{\epsilon})^{2}\mathrm{d}r\leq \frac{C}{N^{\Delta}},\label{c1}
\end{equation}
where $N$ is a positive constant which will be chosen very large later.

When $T > -\bigl(-\frac{\Lambda}{3}\bigr)\alpha\frac{r_{+}}{\beta}$, we have $V_{M,\beta}(r_{+}) > 0$. Now for our fixed parameters $(M,r_{+},\Lambda,\alpha,q_{0})$, we can find $\beta_{0}$ large enough, such that $\frac{d\Omega_{M}^{2}}{dr}(r_{+})>-\bigl(-\frac{\Lambda}{3}\bigr)\alpha\frac{r_{+}}{\beta_{0}}$. Then we have\begin{align*}
V_{M,\beta_{0}}(r_{+})&>0,\\
\lim_{r\rightarrow\infty}V_{M,\beta_{0}}(r) &= \left(-\frac{\Lambda}{3}\right)\left(
\beta_{0}(3-\beta_{0})+\alpha\right)<0.
\end{align*}
Basic calculation shows that $V_{M,\beta_{0}}$ will have exactly one zero point $x_{\beta_{0}}$. Hence there exists $\widetilde{x}_{\beta_{0}}>r_{+}$ such that $V_{M,\beta_{0}}-\frac{q_{0}^{2}A^{2}}{\Omega_{M}^{2}}$ is positive on $(r_{+},\widetilde{x}_{\beta_{0}})$.

Considering the energy functional $L_{M,\beta_{0}}^{\epsilon}[f]$, we have\begin{equation}
\label{core2}
\begin{aligned}
&L^{\epsilon}_{M,\beta_{0}}[\phi^{\epsilon}] +\int_{\widetilde{x}_{\beta_{0}}}^{\infty}(-r^{2})\left(V_{M,\beta_{0}}-\frac{q_{0}^{2}A^{2}}{\Omega_{M}^{2}}\right)(\phi^{\epsilon})^{2}\mathrm{d}r \\=& \int_{r_{+}+\epsilon}^{\infty}r^{-2\beta_{0}+2}\Omega_{M}^{2}\left(\frac{d}{dr}(r^{\beta_{0}}\phi^{\epsilon})\right)^{2}\mathrm{d}r+\int_{r_{+}+\epsilon}^{\widetilde{x}_{\beta_{0}}}r^{2}\left(V_{M,\beta_{0}}-\frac{q_{0}^{2}A^{2}}{\Omega_{M}^{2}}\right)(\phi^{\epsilon})^{2}\mathrm{d}r.
\end{aligned}
\end{equation}
Since \begin{equation*}
L_{M,\beta_{0}}[\phi^{\epsilon}] = -\lambda^{\epsilon}_{M},
\end{equation*}
where $\lambda_{M}^{\epsilon}$ is bounded above due to the fact that $L_{M}[f]$ has a negative energy bound state. Since the term $\left(V_{M,\alpha_{0}}-\frac{q_{0}^{2}A^{2}}{\Omega_{M}^{2}}\right)$ on the left hand side of $\eqref{core2}$ is bounded, we have \begin{equation}
\int_{\widetilde{x}_{\alpha_{0}}}^{\infty}r^{2}(\phi^{\epsilon})^{2}\mathrm{d}r\geq C_{1}.\label{c2}
\end{equation}
Combining $\eqref{c1}$ and $\eqref{c2}$, we have \begin{equation}
\int_{x_{\beta_{0}}}^{N}r^{2}(\phi^{\epsilon})^{2}\mathrm{d}r\geq C_{1}-\frac{C}{N^{\Delta}}>\frac{C_{1}}{2}
\label{coervicity}
\end{equation}
by choosing $N$ large enough.

Thus $\phi$ is non-zero. By the local convergence result, $\phi$ satisfies the equation \begin{equation*}
-\frac{d}{dr}(r^{2}\Omega_{M}^{2}\frac{d\phi}{dr})+\left(\bigl(-\frac{\Lambda}{3}\bigr)\alpha-\frac{q_{0}^{2}A^{2}}{\Omega_{M}^{2}}\right)r^{2}\phi = -\lambda_{M}\frac{r^{2}}{\Omega_{M}^{2}}\phi.
\end{equation*}
By the asymptotic analysis of the equation $\eqref{Euler for phi}$, we have\begin{align*}
&\phi(r) \approx A(r-r_{+})^{T\sqrt{\lambda_{M}}}+B,\quad r\rightarrow r_{+},\\&
\phi \approx C_{D}r^{-\frac{3}{2}-\Delta}+C_{N}r^{-\frac{3}{2}+\Delta},\quad r\rightarrow\infty.
\end{align*}
Since $\phi^{\epsilon}$ is locally convergent to $\phi$ on any compact set $K\subset [r_{+},\infty)$, we have\begin{align*}
&\int_{K}\frac{r^{2}}{\Omega_{M}^{2}}\phi^{2}<C,\\
&\int_{K}r^{2}\Omega_{M}^{2}(\frac{d\phi}{dr})^{2}+\left(\bigl(-\frac{\Lambda}{3}\bigr)\alpha-\frac{q_{0}^{2}A^{2}}{\Omega_{M}^{2}}\right)r^{2}\phi^{2}\mathrm{d}r<C,
\end{align*}
where $C$ is a constant independent of $K$. Thus we conclude that $B = C_{N} = 0$ and $\phi$ is the desired solution of $\eqref{Euler for phi}$.
\end{proof}
The immediate consequence of Proposition \ref{nontrival solution} is that, $e^{\sqrt{\lambda_{M}}t}\phi$ is a growing mode solution to the uncharged Klein--Gordon equation \eqref{Klein-Gordon} with the mode $\omega = -i\sqrt{\lambda}$.
\begin{remark}
By the asymptotic behavior of $\phi$, we have \begin{equation*}
\int_{r_{+}}^{\infty}\frac{r^{2}}{\Omega_{M}^{2}}\phi^{2}\mathrm{d}r <\infty.
\end{equation*}
If we normalize $\phi$ such that $$\int_{r_{+}}^{\infty}\frac{r^{2}}{\Omega_{M}^{2}}\phi^{2}\mathrm{d}r= 1,$$then we have\begin{equation}
\int_{r_{+}}^{\infty}r^{2}\Omega_{M}^{2}\left(\frac{d\phi}{dr}\right)^{2}+\left(\bigl(-\frac{\Lambda}{3}\bigr)\alpha-\frac{q_{0}^{2}A^{2}}{\Omega_{M}^{2}}\right)r^{2}\phi^{2}\mathrm{d}r = -\lambda_{M}.
\end{equation}
\label{normremark}
\end{remark}
\begin{remark}
\label{remark}
Note that in the above argument, to show $\phi$ is non-zero, we need the fact that $\eqref{c2}$ is bounded away from $0$ to have the lower bound $\eqref{coervicity}$. This step relies on the existence of a negative energy bound state.
\end{remark}
Now we are ready to prove Theorem $\ref{linear hair theorem}$.
\begin{proof}
By Lemma $\ref{charged negative energy bound state}$, Lemma $\ref{nonempty}$, and Proposition $\ref{eigenvalue}$, we only need to find an $M$ such that $\lambda_{M} = 0$. First, we derive the monotonicity of $\lambda_{M}$ and $\lambda_{M}^{\epsilon}$ in terms of $M$. For any $f\in C_{c}^{\infty}(r_{+},\infty)$ and $M_{1}>M_{2}$, we have\begin{align*}
L_{M_{1}}^{\epsilon}[f] =& \int_{r_{+}+\epsilon}^{\infty}r^{2}\Omega_{M_{1}}^{2}\left(\frac{df}{dr}\right)^{2}+\left(\bigl(-\frac{\Lambda}{3}\bigr)\alpha-\frac{q_{0}^{2}e_{1}^{2}\left(\frac{1}{r_{+}}-\frac{1}{r}\right)^{2}}{\Omega_{M_{1}}^{2}}\right)r^{2}f^{2}\mathrm{d}r\\
 =& \int_{r_{+}+\epsilon}^{\infty}r^{2}(\Omega_{M_{1}}^{2}-\Omega_{M_{2}}^{2})\left(\frac{df}{dr}\right)^{2}+\left(\frac{q_{0}^{2}e_{2}^{2}\left(\frac{1}{r_{+}}-\frac{1}{r}\right)^{2}}{\Omega_{M_{2}}^{2}}-\frac{q_{0}^{2}e_{1}^{2}\left(\frac{1}{r_{+}}-\frac{1}{r}\right)^{2}}{\Omega_{M_{1}}^{2}}\right)r^{2}f^{2}\mathrm{d}r\\+&L_{M_{2}}^{\epsilon}
\end{align*}
Since for fixed $(r_{+},\Lambda)$ and $r$ $\Omega_{M}^{2}$ is a decreasing function of $M$ and $\frac{e^{2}}{\Omega_{M}^{2}}$ is an increasing funciton of $M$, we have \begin{equation*}
L_{M_{1}}^{\epsilon}\leq L_{M_{2}}^{\epsilon}.
\end{equation*}
Passing to the limit, we have $\lambda_{M}$ and $\lambda_{M}^{\epsilon}$ are increasing functions of $M$. By the same computation as above, we can further show that $\lambda_{M}^{\epsilon}$ is a Lipschitz function of $M$ with Lipschitz constant uniform in $\epsilon$:\begin{equation*}
0\leq\lambda_{M_{1}}-\lambda_{M_{2}}\leq C(M_{1}-M_{2}),
\end{equation*} 
where $C$ is a uniform constant independent of $\epsilon$ and $M$. Passing to the limit we get $\lambda_{M}$ is also an increasing uniformly Lipschitz function of $M$. Let $M_{c} = \inf_{M}\mathcal{A}_{s}$, where $s = M_{b}$ if one considers the large charge case and $s = M_{0}$ if one considers the general fixed charge case. Then we can continuously extend $\lambda_{M}$ to $M = M_{c}$ by letting\begin{equation*}
\lambda_{M_{c}} = \lim_{M\rightarrow M_{c}}\lambda_{M}\geq 0.
\end{equation*}
If $\lambda_{M_{c}}>0$, then by remark $\ref{remark}$ and all the construction above, we can find $\phi_{c}^{\epsilon}\in\mathcal{F}^{\epsilon}$ and $\lambda_{M_{c}}^{\epsilon}>0$ such that $\phi^{\epsilon}_{c}$ is the solution of\begin{equation*}
-\frac{d}{dr}(r^{2}\Omega_{M_{c}}^{2}\frac{d\phi_{c}^{\epsilon}}{dr})+\left(\bigl(-\frac{\Lambda}{3}\alpha\bigr)-\frac{q_{0}^{2}A^{2}}{\Omega_{M_{c}}^{2}}\right)r^{2}\phi_{c}^{\epsilon} = -\lambda_{M_{c}}^{\epsilon}\frac{r^{2}}{\Omega_{M_{c}}^{2}}\phi_{c}^{\epsilon}.
\end{equation*}
Then $L_{M_{c}}^{\epsilon}[\phi^{\epsilon}]<0$ implies we can find a negative energy bound state for $L_{M_{c}}$, which means $M_{c}\in\mathcal{A}_{s}$. Since $\lambda_{M}$ is a contninuous function of $M$, for $M_{c}-\epsilon<M<M_{c}$ where $\epsilon$ is a small positive number, we have $\lambda_{M}<0$. Then we can apply the above argument in Proposition $\ref{eigenvalue}$ again to show that $M\in\mathcal{A}_{s}$, which contradicts to the definition of $M_{c}$. Therefore $\lambda_{M_{c}} = 0$. 

Last, we need to construct the corresponding function $\phi_{M_{c}}$. We consider the solution $\phi_{M,\lambda}$ of the equation\begin{equation*}
-\frac{d}{dr}(r^{2}\Omega_{M}^{2}\frac{d\phi}{dr})+\left(\bigl(-\frac{\Lambda}{3}\bigr)\alpha-\frac{q_{0}^{2}A^{2}}{\Omega_{M}^{2}}\right)r^{2}\phi+\lambda\frac{r^{2}}{\Omega_{M}^{2}}\phi = 0
\end{equation*}
with $\phi(r)\approx(r-r_{+})^{T\sqrt{\lambda}}$ and Dirichlet boundary condition. By the asymptotic analysis, we write $\phi$ as \begin{equation*}
\phi(r,M,\lambda) = A(M,\lambda)\phi_{D}(r,M,\lambda)+B(M,\lambda)\phi_{N}(r,M,\lambda),
\end{equation*} 
where \begin{equation*}
\phi_{D}(r,M,\lambda)\approx r^{-\frac{3}{2}-\Delta},\quad \phi_{N}(r,M,\lambda)\approx r^{-\frac{3}{2}+\Delta},\quad r\rightarrow\infty.
\end{equation*}
Since $B(M,\lambda_{M}) = 0$ for $M\in\mathcal{A}_{s}$, we have $B(M_{c},\lambda_{M_{c}}) = 0$ by continuity. Then \begin{equation*}
\phi_{M_{c}} := \phi(r,M_{c},0)
\end{equation*}
is the non-zero solution of $\ref{stationary Klein-Gordon}$ with Dirichlet boundary condition and can be extended continuously to the event horizon.
\end{proof}

\section{Proof of Theorem $\ref{linear hair theorem}$ for Neumann boundary conditions}
\label{Neumann section}
In this section, we begin to prove Theorem $\ref{linear hair theorem}$ for Neumann boundary conditions. We will elaborate on the new ingredients in this different boundary condition case while omitting some proofs similar to those used in the Dirichlet boundary condition case.
\subsection{Outline of the proof}
\label{outline of neumann}
In Section $\ref{outline of the proof}$, we discussed the outline of the proof for Dirichlet boundary conditions. Based on the discussion there and the proofs we used in Section $\ref{proof of linear hair theorem}$, we further discuss the new challenges we will face in the case of Neumann boundary conditions.

First, if we want to apply a similar variational method, the immediate difficulty we will face is that the energy functional $L_{M}[f]$ is not even finite for functions with Neumann boundary conditions since functions with Neumann boundary conditions decay slower near the infinity. This can be overcome by using the appropriately designed twisted derivative, which was first raised by Breitenlohner and Freedman \cite{breitenlohner1982stability} and later was used in \cite{warnick2013massive,dold2017unstable,holzegel2014boundedness} to study the Klein--Gordon equations on asymptotic AdS spacetimes under Neumann boundary conditions.

Second, to achieve Neumann boundary conditions for the minimizer, we can no longer take $f$ in Lemma \ref{change} to be compactly supported. Then Lemma \ref{change} will fail in general since the boundary term generated from using the integration by parts in the proof of Lemma $\ref{change}$ is not finite. Recall that in the proof of Proposition \ref{nontrival solution}, we need to use a different and in principle equivalent twisted energy functional to show that the limit $\phi$ is non-trivial. This step has not worked for the Neumann boundary conditions since the failure of Lemma \ref{change} for generic twist functions. Hence we have to come up with a more robust method. We overcome this difficulty by constructing a nice twisted energy functional.

Third, as in the case of the Dirichlet boundary conditions, we can only expect local convergence $\phi^{\epsilon}\rightarrow\phi$ while boundary conditions concern the behavior of $\phi$ at $r = \infty$. For the Dirichlet boundary conditions, in the proof of Proposition $\ref{nontrival solution}$, $L_{M}[\phi]$ is finite if and only if solution $\phi$ satisfies the Dirichlet boundary condition. We can obtain the finiteness of the energy functional on any compact set $K$ and then pass to the limit to get the finiteness of $L_{M}[\phi]$. This strategy does not work for the Neumann boundary conditions, since the suitable twisted energy functional $L_{M,h}[\phi]$ is finite for functions with Neumann or Dirichlet boundary conditions and we lose information at infinity when applying the local convergence argument. The boundary condition of $\phi$ is achieved by establishing the uniform bound of $\phi^{\epsilon}$, analogously to Proposition \ref{gain regularity}
\subsection{More general twisted derivatives}
In this section we introduce a new function space and the twisted derivatives we will use.

First, to illustrate the idea of dealing with non-integrability of the energy function generated by slow decay nature of function with Neumann boundary conditions, we consider the following naive twisted energy functional.

Let $h = r^{-\frac{3}{2}+\Delta}$. We can rewrite the equation $\eqref{stationary Klein-Gordon}$ by using the twisted derivative $h\partial_{r}(h^{-1}\cdot)$:
\begin{align}
&-r^{\frac{3}{2}-\Delta}\frac{d}{dr}\left(r^{2\Delta-1}\Omega_{M}^{2}\frac{dr^{\frac{3}{2}-\Delta}\phi}{dr}\right)+\left(V_{M,\frac{3}{2}-\Delta}-\frac{q_{0}^{2}A^{2}}{\Omega_{M}^{2}}\right)r^{2}\phi = 0,\\
&V_{M,\frac{3}{2}-\Delta} = \frac{(\frac{3}{2}-\Delta)(\Delta-\frac{1}{2})}{r^{2}}+\frac{2(\frac{3}{2}-\Delta)^{2}M}{r^{3}}-\frac{(\frac{3}{2}-\Delta)(\frac{5}{2}-\Delta)e^{2}}{r^{4}}.
\end{align}
The twisted energy functional for this equation is\begin{equation}
L_{M,\frac{3}{2}-\Delta}[f] = \int_{r_{+}}^{\infty}r^{2\Delta-1}\Omega_{M}^{2}\left(\frac{dr^{\frac{3}{2}-\Delta}f}{dr}\right)^{2}+\left(V_{M,\frac{3}{2}-\Delta}-\frac{q_{0}^{2}A^{2}}{\Omega_{M}^{2}}\right)r^{2}f^{2}dr,
\end{equation}
which is finite for functions with Neumann boundary conditions. The secret in the above choice of the twisted derivative is that the leading order term in the potential is canceled. However, in this situation, the leading term of $V_{M,\frac{3}{2}-\Delta}$ is comparable to the term $\frac{q_{0}A^{2}}{\Omega_{M}^{2}}$, which means the positivity of the potential term near $r=\infty$ depends on the values of $(M,r_{+},\Lambda,\alpha,q_{0})$. In view of Remark \ref{coercivity of infinity}, the method in the Dirichlet boundary condition case fails for Neumann boundary conditions. To make our method more robust and cover all ranges of possible $\alpha$ where the negative energy bound state and local well-posedness hold, we define the following twist function $h_{0}$, which can be viewed as a suitable modification of the twist function $r^{-\frac{3}{2}+\Delta}$ preserving its structure at infinity:\begin{equation}
\label{twisted function}
h_{0} = \begin{cases}
r^{-\frac{3}{2}+\Delta}(1-r^{-1-\Delta}),\quad &r\geq r_{+}+2,\\
g,\quad &r_{+}+1<r<r_{+}+2,\\
e^{-Kr},\quad &r_{+}\leq r<r_{+}+1,
\end{cases}
\end{equation}
where $K$ is a large constant which will be chosen later and $g$ is a smooth function that makes $h$ smooth on $(r_{+},\infty)$. When $r\rightarrow\infty$, we have
\begin{equation}
\label{positive near infinity}
\begin{aligned}
&\lim_{r\rightarrow\infty}r^{1+\Delta}\left(V_{M,h}(r)-\frac{q_{0}^{2}A^{2}}{\Omega_{M}^{2}}\right)\\=& \lim_{r\rightarrow\infty}r^{1+\Delta}\left(-\frac{1+\Delta}{r^{\frac{1}{2}+\Delta}(1-r^{-1-\Delta})}\frac{d}{dr}\left(r^{-\frac{3}{2}}\Omega_{M}^{2}\right)+\frac{(\frac{3}{2}-\Delta)(1+\Delta)}{r^{\frac{1}{2}+\Delta}(1-r^{-1-\Delta})}r^{-\frac{5}{2}}\Omega_{M}^{2}\right)\\=&
\left(-\alpha-\frac{5}{4}\right)\left(-\frac{\Lambda}{3}\right)>0.
\end{aligned}
\end{equation}
When $r\rightarrow r_{+}$, we have\begin{equation}
\label{positive near horizon}
\lim_{r\rightarrow r_{+}}V_{M,h_{0}}(r)-\frac{q_{0}^{2}A^{2}}{\Omega_{M}^{2}}=\bigl(-\frac{\Lambda}{3}\bigr)\alpha+KT>0
\end{equation}
for $K$ large enough.

To define the new function space, let $\chi$ be a smooth function which is $1$ on $[r_{+},R)$ and vanishes near infinity. Then we can define $\mathcal{F}_{N}$:
\begin{equation*}
\mathcal{F}_{N}: = \left\{f\in L^{2}(r_{+},\infty),\quad \chi f\in C_{c}^{\infty}(r_{+},\infty),\quad \int_{r_{+}}^{\infty}\frac{r^{2}}{\Omega_{M}^{2}}f^{2}\mathrm{d}r = 1\right\},
\end{equation*}
$f\in\mathcal{F}_{N}$ can be viewed as a function supported away from $r = r_{+}$. The difference between $\mathcal{F}_{N}$ and $\mathcal{F}$ defined in Section \ref{proof of linear hair theorem} is that $f$ is no longer compactly supported. Consequently, Lemma \ref{change} is not true generically since the non-vanishing boundary terms.

\subsection{Negative energy bound state and the eigenvalue solution}
\label{negative energy sec}
Since to find the negative energy bound state, we only need to consider $L_{M,h_{0}}[f]$ with $f$ compactly supported. Hence we can immediately get the existence of the negative energy bound state for $L_{M,h_{0}}[f]$. Specifically, we have the following two lemmas, the proofs of which follow line by line from the case of Dirichlet boundary conditions in Lemma \ref{charged negative energy bound state} and Lemma \ref{nonempty}.
\begin{lemma}
\label{charged negative energy bound state Neumann}
For any fixed parameters $(r_{+},\Lambda,\alpha)$ satisfying the bound $-\frac{9}{4}<\alpha<-\frac{5}{4}$, and for any $M_{e = 0}<M_{b}<M_{0}$, there exists a $q_{1}(M_{b},r_{+},\Lambda,\alpha)>0$ such that for any $\vert q_{0}\vert>q_{1}$, $\mathcal{A}_{M_{b}}$ is non-empty.
\end{lemma}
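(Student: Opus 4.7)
The plan is to transport the argument for Lemma \ref{charged negative energy bound state} essentially verbatim into the Neumann setting, exploiting the fact that the definition of $\mathcal{A}_{M_b}$ only tests $L_M$ against compactly supported functions, so the boundary condition at infinity plays no role in verifying the existence of a negative energy bound state. In particular, the distinction between the naive functional $L_M$ and the twisted functional $L_{M, h_0}$ disappears on $C_c^\infty(r_+, \infty)$ by (the argument underlying) Lemma \ref{change}, even though the Neumann twist function $h_0$ defined in \eqref{twisted function} is very different from the Dirichlet choice $h = r^{-\beta}$.

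First, I would fix a bump function $\eta \in C_c^\infty(r_+, r_+ + 1)$ with $\eta \equiv 1$ on $(r_+ + 1/4, r_+ + 3/4)$, exactly as in the Dirichlet construction. Since $\eta$ is compactly supported in the interior, the Neumann modifications (the enlarged class $\mathcal{F}_N$, the non-vanishing boundary terms that would otherwise break Lemma \ref{change}) are completely invisible to the computation of $L_{M_b}[\eta]$.

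Next, I would split
\[
L_{M_b}[\eta] \;=\; a_1(M_b, r_+, \Lambda, \alpha) \;-\; q_0^2 \, a_2(M_b, r_+, \Lambda, \alpha),
\]
where
\[
a_1 := \int_{r_+}^{\infty} r^2 \Omega_{M_b}^2 (\eta')^2 + \bigl(-\tfrac{\Lambda}{3}\bigr)\alpha\, r^2 \eta^2 \, dr, \qquad a_2 := \int_{r_+}^{\infty} \tfrac{A^2}{\Omega_{M_b}^2} r^2 \eta^2 \, dr.
\]
I would verify $a_2 > 0$ (using $e \neq 0$, which is guaranteed by $M_b > M_{e=0}$) and $a_2 < \infty$ (using the earlier computation $\lim_{r \to r_+} A^2 / \Omega_{M_b}^2 = 0$, so that the integrand is bounded on the compact support of $\eta$). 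Setting $q_1^2 := \max\{0, a_1 / a_2\}$, any $|q_0| > q_1$ gives $L_{M_b}[\eta] < 0$, producing a negative energy bound state at mass $M_b$.

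Finally, by continuity of $M \mapsto L_M[\eta]$ for fixed $\eta$ — the integrand depends smoothly on $M$ through $\Omega_M^2$ and $e(M)$, and $\eta$ is supported away from the horizon — there exists $\delta > 0$ such that $L_M[\eta] < 0$ for every $M \in (M_b - \delta, M_b)$, which places every such $M$ inside $\mathcal{A}_{M_b}$. There is no genuine obstacle beyond checking that the Neumann-specific machinery does not interfere; the key conceptual point, and essentially the whole content of the lemma, is that negative energy bound states are an intrinsic property of $L_M$ detected by compactly supported test functions, so the proof reduces to the Dirichlet case with no modification. The hypothesis $-\tfrac{9}{4} < \alpha < -\tfrac{5}{4}$ is not used here; it enters only later, through local well-posedness of the Neumann problem.
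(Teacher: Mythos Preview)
Your proposal is correct and matches the paper's approach exactly: the paper states that the proof of Lemma~\ref{charged negative energy bound state Neumann} follows line by line from the Dirichlet case (Lemma~\ref{charged negative energy bound state}), precisely because the negative energy bound state is detected by compactly supported test functions, on which the twisted and untwisted functionals agree. Your added continuity step to pass from $M_b$ to some $M<M_b$ is exactly what is needed to conclude $\mathcal{A}_{M_b}\neq\emptyset$, and your remark that the range $-\tfrac{9}{4}<\alpha<-\tfrac{5}{4}$ is not used at this stage is accurate.
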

\begin{lemma}
\label{nonempty Neumann}
For every fixed parameters $(r_{+},\Lambda,\alpha,q_{0})$ satisfying\begin{align}
&-\frac{9}{4}<\alpha<\left\{-\frac{3}{2}+\frac{q_{0}^{2}}{2\bigl(-\frac{\Lambda}{3}\bigr)},-\frac{5}{4}\right\},\\&
\bigl(-\frac{\Lambda}{3}\bigr)r_{+}^{2}>R_{0},
\end{align}
where $R_{0}$ is defined in \eqref{quadratic}, $\mathcal{A}_{M_{0}}$ is non-empty.
\end{lemma}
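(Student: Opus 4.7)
The plan is to carry over the proof of Lemma \ref{nonempty} essentially verbatim, exploiting the fact that $\mathcal{A}_{M_0}$ is defined through the existence of a compactly supported test function $f\in C_c^\infty(r_+,\infty)$ with $L_M[f]<0$. Since such $f$ vanishes in a neighborhood of both the horizon and of infinity, the boundary condition at infinity plays no role in this construction, and the only formally new input relative to the Dirichlet setting is the replacement of $C_{DN}=0$ by $C_{DN}=-\tfrac{5}{4}$ in \eqref{hardy assumption}.

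First, I reduce the problem to the extremal endpoint $M=M_0$: because $M\mapsto L_M[f]$ is continuous for each fixed compactly supported $f$, producing a single $f_\delta$ with $L_{M_0}[f_\delta]<0$ immediately yields $L_M[f_\delta]<0$ for all $M$ in a left neighborhood of $M_0$, which is exactly non-emptiness of $\mathcal{A}_{M_0}$. At $M=M_0$ the extremality degeneration
\begin{equation*}
\Omega_{M_0}^2 = \bigl(-\tfrac{\Lambda}{3}\bigr)(r-r_+)^2\,\frac{r^2+2r_+ r+3r_+^2+\frac{1}{(-\Lambda/3)}}{r^2}
\end{equation*}
matches exactly the quadratic weight governed by the sharp Hardy-type inequality of Lemma \ref{hardy}.

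Next, I take the test function $f_\delta\in C_c^\infty(r_+,r_++\epsilon)$ constructed in Lemma \ref{hardy}, so that $\int_{r_+}^{r_++\epsilon} f_\delta^2\,dr\geq(4-\delta)\int_{r_+}^{r_++\epsilon}(r-r_+)^2(f_\delta')^2\,dr$. Substituting into $L_{M_0}[f_\delta]$, expanding $A^2/\Omega_{M_0}^2$ near $r=r_+$ and using the extremal identity $e^2=r_+^2\bigl(1+3(-\Lambda/3)r_+^2\bigr)$, the same algebraic reduction as in Lemma \ref{nonempty} yields
\begin{align*}
L_{M_0}[f_\delta] < \bigl(-\tfrac{\Lambda}{3}\bigr)r_+^2\int_{r_+}^{r_++\epsilon}(r-r_+)^2\Bigl[&-4\bigl(\alpha-\tfrac{3}{2}-\tfrac{q_0^2}{2(-\Lambda/3)}\bigr)+\tfrac{1}{(-\Lambda/3)r_+^2}\\
&+\tfrac{2q_0^2}{(-\Lambda/3)\bigl(1+6(-\Lambda/3)r_+^2\bigr)}+C\epsilon+C\delta\Bigr](f_\delta')^2\,dr.
\end{align*}
By definition, $R_0$ is the positive root of \eqref{quadratic}, so hypotheses \eqref{hardy assumption}--\eqref{large condition} are tuned precisely to force the bracket to be strictly negative once $\epsilon,\delta$ are chosen sufficiently small.

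The main ``obstacle'' is really only a sanity check: nothing in the argument probes the decay of the test function at infinity, so the Dirichlet-to-Neumann passage costs nothing at this stage. Genuinely Neumann-specific difficulties arise only later, in the convergence and regularity analysis of the minimizer, where the tailored twist function $h_0$ of \eqref{twisted function} has been introduced precisely to compensate for the slower decay of Neumann boundary data. For the present lemma, the admissible window $-\tfrac{9}{4}<\alpha<-\tfrac{5}{4}$ is contained in the Dirichlet window of Lemma \ref{nonempty}, and indeed enlarges the coefficient $-4\bigl(\alpha-\tfrac{3}{2}-\tfrac{q_0^2}{2(-\Lambda/3)}\bigr)$, so the same threshold $R_0$ suffices. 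Hence $\mathcal{A}_{M_0}\neq\emptyset$.
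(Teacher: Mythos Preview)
Your proposal is correct and takes essentially the same approach as the paper: the paper explicitly states that the proof follows line by line from Lemma \ref{nonempty}, relying on the observation that compactly supported test functions are insensitive to the boundary condition at infinity, which is precisely your argument. Your additional remark that the Neumann window $-\tfrac{9}{4}<\alpha<-\tfrac{5}{4}$ is contained in the Dirichlet window, so the same threshold $R_0$ applies without modification, is a helpful clarification that the paper leaves implicit.
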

Next, we prove the existence of a minimizer for the energy functional $L_{M,h_{0}}^{\epsilon}[f]$. 
\begin{proposition}
If $L_{M,h_{0}}[f]$ has a negative energy bound state, then for any $\epsilon>0$ small enough, $L_{M,h_{0}}^{\epsilon}[f]$ can attain its negative minimum $L_{M,h_{0}}^{\epsilon}[\phi^{\epsilon}]$in the function class $\mathcal{F}_{N}^{\epsilon}$. Moreover, $\phi^{\epsilon}$ satisfies the Neumann boundary condition.
\label{eigenvalue for Neumann}
\end{proposition}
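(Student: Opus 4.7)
The plan is to parallel the variational construction of Proposition \ref{eigenvalue}, carried out entirely in the twisted framework using the twist function $h_0$ from \eqref{twisted function} and on the enlarged class $\mathcal{F}_N^\epsilon$, which accommodates the slow Neumann decay at infinity. The key property of $h_0$, established in \eqref{positive near infinity} and \eqref{positive near horizon}, is that the effective twisted potential $V_{M,h_0}-q_0^2A^2/\Omega_M^2$ is strictly positive both near the horizon and near infinity; in particular there is some $x_v$ so that this potential is positive on $(x_v,\infty)$. Because the assumed negative energy bound state for $L_{M,h_0}[f]$ can be chosen compactly supported, Lemma \ref{change} gives $L_{M,h_0}[f]=L_M[f]$ for such $f$, and by continuity in $\epsilon$ the same bound state yields a negative energy bound state for $L_{M,h_0}^\epsilon$ when $\epsilon$ is small.

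For the lower bound, given $f\in\mathcal{F}_N^\epsilon$ split the integral at $x_v$:
\begin{equation*}
L_{M,h_0}^\epsilon[f] = \int_{r_++\epsilon}^{\infty} r^2\Omega_M^2 h_0^2 \Bigl(\tfrac{d(h_0^{-1}f)}{dr}\Bigr)^{\!2} dr + \int_{x_v}^{\infty}\!\!\Bigl(V_{M,h_0}-\tfrac{q_0^2A^2}{\Omega_M^2}\Bigr)r^2 f^2\,dr + \int_{r_++\epsilon}^{x_v}\!\!\Bigl(V_{M,h_0}-\tfrac{q_0^2A^2}{\Omega_M^2}\Bigr)r^2 f^2\,dr.
\end{equation*}
The first two integrals are nonnegative; on $[r_++\epsilon,x_v]$, $\Omega_M^2$ is bounded from below by a constant depending on $\epsilon$, and the normalization $\int r^2 f^2/\Omega_M^2\,dr=1$ yields a uniform bound on $\int_{r_++\epsilon}^{x_v} r^2 f^2\,dr$, so $L_{M,h_0}^\epsilon[f]\geq -C$. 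For a minimizing sequence $f_n^\epsilon$ with $L_{M,h_0}^\epsilon[f_n^\epsilon]<0$, both the gradient integral and the $(x_v,\infty)$ integral are uniformly bounded. On any compact $K\Subset(r_++\epsilon,\infty)$, $h_0$ is smooth and bounded away from zero, so $\|f_n^\epsilon\|_{H^1(K)}\leq C(K)$. By Rellich and a diagonal argument, pass to a subsequence with $f_n^\epsilon\rightharpoonup\phi^\epsilon$ in $H^1_{\mathrm{loc}}$ and strongly in $L^2_{\mathrm{loc}}$.

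The main new obstacle, relative to the Dirichlet case, is preventing $L^2$ mass from escaping to infinity, since elements of $\mathcal{F}_N^\epsilon$ are not compactly supported. This is handled by the quantitative decay of the twisted potential: by \eqref{positive near infinity} one has $V_{M,h_0}-q_0^2A^2/\Omega_M^2\geq c\,r^{-1-\Delta}$ for $r$ large, hence the uniform bound on the $(x_v,\infty)$ integral forces $\int_{x_v}^{\infty} r^{1-\Delta}(f_n^\epsilon)^2\,dr\leq C$. Since $r^2/\Omega_M^2\sim r^{-2}$ at infinity and $r^{-2}\leq N^{\Delta-3}\,r^{1-\Delta}$ for $r\geq N$, the weighted $L^2$ mass of $f_n^\epsilon$ on $(N,\infty)$ is at most $C\,N^{\Delta-3}$, uniformly in $n$. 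Combined with strong $L^2_{\mathrm{loc}}$ convergence, this gives $\int r^2(\phi^\epsilon)^2/\Omega_M^2\,dr=1$, so $\phi^\epsilon\in\mathcal{F}_N^\epsilon$, and weak lower semicontinuity then yields $L_{M,h_0}^\epsilon[\phi^\epsilon]=\min_{\mathcal{F}_N^\epsilon} L_{M,h_0}^\epsilon=:-\lambda_M^\epsilon<0$.

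It remains to identify the Neumann boundary condition. The constrained variational principle produces the twisted Euler--Lagrange equation, equivalent as an ODE to \eqref{Euler for phi}. Both asymptotic profiles $r^{-3/2\pm\Delta}$ make $L_{M,h_0}^\epsilon$ finite, so finiteness alone does not select a boundary condition; rather, Neumann arises as the natural boundary condition of the variational problem. Indeed, testing against an admissible variation $g\in\mathcal{F}_N^\epsilon$ (which may itself have only Neumann-type decay) and integrating by parts in $L_{M,h_0}^\epsilon$ leaves a boundary term at infinity proportional to $\lim_{r\to\infty} r^2\Omega_M^2 h_0^2\,\partial_r(h_0^{-1}\phi^\epsilon)\,h_0^{-1}g$. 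Writing $\phi^\epsilon\sim A\,r^{-3/2-\Delta}+B\,r^{-3/2+\Delta}$ at infinity, a direct computation shows this limit equals a nonzero multiple of $A$ times $\lim_{r\to\infty} h_0^{-1}g$, so vanishing of the boundary term against arbitrary admissible $g$ forces $A=0$, which is exactly the Neumann condition. The hard step is preventing mass escape in the third paragraph, and it is precisely this step that motivates the choice of $h_0$ giving a twisted potential with the quantitative decay rate $r^{-1-\Delta}$ at infinity rather than the naive $h=r^{-3/2+\Delta}$ which only cancels the leading term of the potential.
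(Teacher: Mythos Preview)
Your proof is correct and follows essentially the same approach as the paper. The only differences are cosmetic: the paper splits the potential integral at both roots $x_1,x_2$ of $V_{M,h_0}-q_0^2A^2/\Omega_M^2$ (thereby exploiting the near-horizon positivity \eqref{positive near horizon} that you mention but do not actually use in your lower bound), and it simply invokes the variational principle for the natural Neumann boundary condition where you carry out the boundary-term computation explicitly.
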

\begin{proof}
The proof here is almost the same as the proof of Proposition \ref{eigenvalue}. Consider the twisted energy functional\begin{equation*}
L_{M,h_{0}}^{\epsilon}[f] = \int_{r_{+}+\epsilon}^{\infty}r^{2}\Omega_{M}^{2}h_{0}^{2}\left(\frac{dh_{0}^{-1}f}{dr}\right)^{2}+\left(V_{M,h_{0}}(r)-\frac{q_{0}^{2}A^{2}}{\Omega_{M}^{2}}\right)r^{2}f^{2}dr.
\end{equation*}
By the limit $\eqref{positive near infinity}$, we have $V_{M,h_{0}}-\frac{q_{0}^{2}A^{2}}{\Omega_{M}^{2}}$ is positive and asymptotically $r^{-1-\Delta}$ when $r\rightarrow\infty$. By the limit $\eqref{positive near horizon}$, we have $V_{M,h_{0}}-\frac{q_{0}^{2}A^{2}}{\Omega_{M}^{2}}$ is positive near the event horizon. Let $x_{1}$ and $x_{2}$ be the smallest and largest root of $V_{M,h_{0}}-\frac{q_{0}^{2}A^{2}}{\Omega_{M}^{2}}$ respectively. We have
\begin{equation}
\label{coercivity}
\begin{aligned}
L_{M,h_{0}}^{\epsilon}[f] =& \int_{r_{+}+\epsilon}^{\infty}r^{2}\Omega_{M}^{2}h_{0}^{2}\left(\frac{dh_{0}^{-1}f}{dr}\right)^{2}dr+\left(\int_{r_{+}+\epsilon}^{x_{1}}+\int_{x_{2}}^{\infty}\right)\left(V_{M,h_{0}}-\frac{q_{0}^{2}A^{2}}{\Omega_{M}^{2}}\right)r^{2}f^{2}dr\\&-
\int_{x_{1}}^{x_{2}}\left(\frac{q_{0}^{2}A^{2}}{\Omega_{M}^{2}}-V_{M,h_{0}}\right)r^{2}f^{2}dr\\\geq&
-C\int_{x_{1}}^{x_{2}}r^{-1-\Delta}r^{2}f^{2}dr\geq -C\int_{x_{1}}^{x_{2}}f^{2}dr>-C.
\end{aligned}
\end{equation}
Hence $L_{M,h_{0}}^{\epsilon}$ is lower bounded in $\mathcal{F}^{\epsilon}_{N}$. Let $\phi_{n}^{\epsilon}$ be the minimizing sequence. Then we have\begin{equation}
\left\Vert\frac{d\phi_{n}^{\epsilon}}{dr}\right\Vert_{L^{2}}\leq C(\epsilon).
\end{equation}
Therefore $\phi_{n}^{\epsilon}$ is $H^{1}$ bounded. Then we have $\phi_{n}^{\epsilon}$ is weakly convergent to $\phi^{\epsilon}$ and strongly convergent to $\phi^{\epsilon}$ on any compact set $K\subset (r_{+},\infty)$. Then by $\eqref{coercivity}$, we have in fact\begin{equation*}
\int_{r_{+}+\epsilon}^{\infty}r^{1-\Delta}(\phi_{n}^{\epsilon})^{2}dr<C.
\end{equation*}
Hence by the same argument as in Proposition $\ref{eigenvalue}$, we have $$\int_{r_{+}+\epsilon}^{\infty}\frac{r^{2}}{\Omega_{M}^{2}}(\phi^{\epsilon})^{2}\mathrm{d}r = 1$$.

Since $\phi^{\epsilon}$ is the minimizer of the energy functional $L_{M,h_{0}}^{\epsilon}[f]$, by variational principle (see Chapter $8$ in \cite{evans2022partial}), we have $\phi^{\epsilon}$ solves the equation
\begin{equation}
\label{phi epsilon equation}
-h_{0}^{-1}\frac{d}{dr}(r^{2}\Omega_{M}^{2}h_{0}^{2}\frac{dh^{-1}_{0}\phi^{\epsilon}}{dr})+\left(V_{M,h_{0}}(r)-\frac{q_{0}^{2}A^{2}}{\Omega_{M}^{2}}\right)r^{2}\phi^{\epsilon} = -\lambda_{M}^{\epsilon}\frac{r^{2}}{\Omega_{M}^{2}}\phi^{\epsilon}
\end{equation}
with Neumann boundary condition. Last, note we can also write the equation \eqref{phi epsilon equation} in the untwisted form:
\begin{equation}
-\frac{d}{dr}\left(r^{2}\Omega_{M}^{2}\frac{d\phi^{\epsilon}}{dr}\right)+\left(\bigl(-\frac{\Lambda}{3}\bigr)\alpha-\frac{q_{0}^{2}A^{2}}{\Omega_{M}^{2}}\right)r^{2}\phi^{\epsilon} = -\lambda_{M}^{\epsilon}\frac{r^{2}}{\Omega_{M}^{2}}\phi^{\epsilon}.
\end{equation}
\end{proof}
\begin{remark}
One should note that, the Dirichlet boundary condition in the proof of Proposition \ref{eigenvalue} is obtained by looking at the finite energy. However, since functions in our function class $\mathcal{F}_{N}$ no longer vanishes near infinity, the variational principle itself gives information about the boundary condition of $\phi^{\epsilon}$.
\end{remark}
We can get rid of $\epsilon$ in the following proposition:\begin{proposition}
\label{non-trival solution Neumann}
If there exists a negative energy bound state of $L_{M,h_{0}}[f]$, then we can find a non-zero solution $\phi_{M}$ of the equation\begin{equation}
\label{eq for phiM}
-\frac{d}{dr}\left(r^{2}\Omega_{M}^{2}\frac{d\phi_{M}}{dr}\right)+\left(\bigl(-\frac{\Lambda}{3}\bigr)\alpha-\frac{q_{0}^{2}A^{2}}{\Omega_{M}^{2}}\right)r^{2}\phi_{M} = -\lambda_{M}\frac{r^{2}}{\Omega_{M}^{2}}\phi_{M}.
\end{equation}
Moreover, $\phi_{M}$ satisfies the Neumann boundary condition and can be extended continuously to the event horizon $r = r_{+}$.
\end{proposition}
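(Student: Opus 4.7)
The plan is to mirror the Dirichlet proof of Proposition \ref{nontrival solution} but to replace the energy-based detection of the boundary condition at infinity, which no longer distinguishes Dirichlet from Neumann for $L_{M,h_0}$, by a quantitative variation-of-constants argument in the spirit of Proposition \ref{gain regularity}. Concretely, I would (i) extract a local limit $\phi_M$ of the minimizers $\phi^\epsilon$ given by Proposition \ref{eigenvalue for Neumann}; (ii) use the sign structure of $V_{M,h_0}-q_0^2A^2/\Omega_M^2$ to show $\phi_M\not\equiv 0$; (iii) upgrade the Neumann boundary condition from $\phi^\epsilon$ to $\phi_M$ by a uniform-in-$\epsilon$ tail bound; and (iv) run the horizon asymptotic analysis of \eqref{eq for phiM} to produce the continuous extension at $r = r_+$.

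For steps (i)--(ii), on any compact $K_n = [r_+ + 1/n, n]$ the coefficient $\Omega_M^2$ is bounded away from zero and the potential is bounded, so $L_{M,h_0}^\epsilon[\phi^\epsilon] = -\lambda_M^\epsilon$ together with the normalisation $\int r^2(\phi^\epsilon)^2/\Omega_M^2\,\mathrm{d}r = 1$ gives a uniform $H^1(K_n)$-bound. A diagonal extraction produces $\phi^\epsilon\rightharpoonup\phi_M$ weakly in $H^1_{\mathrm{loc}}(r_+,\infty)$ and strongly in $L^2_{\mathrm{loc}}$, and passing to the limit in \eqref{phi epsilon equation} yields a classical solution of \eqref{eq for phiM}. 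Non-triviality then follows exactly as in the Dirichlet case: by \eqref{positive near horizon}--\eqref{positive near infinity} the potential is positive off a compact interval $[x_1,x_2]$, so the negative value of $L_{M,h_0}^\epsilon[\phi^\epsilon]$ together with the non-negativity of the twisted kinetic term forces $\int_{x_1}^{x_2}(\phi^\epsilon)^2\,\mathrm{d}r\geq c>0$ uniformly in $\epsilon$, a lower bound that is preserved under strong $L^2$ convergence on $[x_1,x_2]$.

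Step (iii) is the principal new difficulty and the main obstacle. Using the local basis $\{u_D,u_N\}$ of Proposition \ref{trival theorem} as solutions of the homogeneous part of \eqref{phi epsilon equation}, variation of parameters gives a decomposition $\phi^\epsilon(r) = C_D^\epsilon(r)u_D(r)+C_N^\epsilon(r)u_N(r)$ with
\begin{equation*}
(C_D^\epsilon)^{\prime}(r) \;=\; \frac{-\lambda_M^\epsilon\,\phi^\epsilon\, u_N}{(\Omega_M^2)^2\,(u_N u_D^{\prime}-u_D u_N^{\prime})}.
\end{equation*}
The Neumann boundary condition for $\phi^\epsilon$ obtained in Proposition \ref{eigenvalue for Neumann} forces $\lim_{r\to\infty}C_D^\epsilon(r)=0$, hence $C_D^\epsilon(r)=-\int_r^\infty(C_D^\epsilon)^{\prime}\,\mathrm{d}\bar r$. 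Combining the Wronskian lower bound \eqref{infty wronskian}, the asymptotics $u_N\sim r^{-\frac{3}{2}+\Delta}$ from \eqref{Neumann asymptotic}, Cauchy--Schwarz, and the uniform weighted bound $\int r^{1-\Delta}(\phi^\epsilon)^2\,\mathrm{d}r\leq C$ that drops out of \eqref{coercivity}, one obtains a decay estimate of the form $|C_D^\epsilon(r)|\leq C r^{-\eta}$ for some $\eta>0$, together with the uniform boundedness of $C_N^\epsilon$, both uniform in $\epsilon$. These bounds pass through the local convergence to yield the analogous decomposition $\phi_M = C_D^\infty(r) u_D + C_N^\infty(r) u_N$ with $C_D^\infty(r)\to 0$ as $r\to\infty$, which is precisely the Neumann boundary condition for $\phi_M$. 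The delicate point here is that $u_N$ decays only slowly, so obtaining a uniform-in-$\epsilon$ positive power of decay for $C_D^\epsilon$ requires a careful splitting of the integral and the use of the correct weighted tail bound on $\phi^\epsilon$ coming from the positivity of $V_{M,h_0}-q_0^2A^2/\Omega_M^2$ at infinity.

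For step (iv), the asymptotic analysis of \eqref{eq for phiM} at the regular singular point $r=r_+$ gives two local solutions, one that vanishes at the horizon and one that admits a non-zero finite limit. Fatou's lemma applied to the uniform normalisation $\int r^2(\phi^\epsilon)^2/\Omega_M^2\,\mathrm{d}r = 1$, combined with $\Omega_M^2\sim T(r-r_+)$ near $r_+$ and the resulting logarithmic divergence of $\int (r-r_+)^{-1}\,\mathrm{d}r$, rules out the non-decaying mode for $\phi_M$. Consequently $\phi_M(r)\to 0$ as $r\to r_+$, so $\phi_M$ extends continuously to the event horizon by zero, concluding the proof.
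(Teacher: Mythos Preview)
Your proposal is correct and follows essentially the same route as the paper. Steps (i), (ii), and (iv) coincide with the paper's argument (which simply invokes ``the same argument as in Proposition~\ref{nontrival solution}'' for the local extraction, non-triviality via the sign of $V_{M,h_0}-q_0^2A^2/\Omega_M^2$ on the compact interval $[x_1,x_2]$, and the horizon asymptotics). For step~(iii), the paper also appeals to the variation-of-constants estimates ``as in Proposition~\ref{gain regularity}'' to obtain a uniform-in-$\epsilon$ bound on a weighted integral of $r^{2\Delta+1}\frac{d}{dr}(r^{\frac{3}{2}-\Delta}\phi^{\epsilon})$, which would diverge in the presence of a Dirichlet branch; your direct decay bound $|C_D^\epsilon(r)|\le C r^{-\eta}$ is just a different packaging of the same estimate, and both rely on the weighted tail control $\int r^{1-\Delta}(\phi^{\epsilon})^2\,\mathrm{d}r\le C$ coming from \eqref{coercivity} together with $\Delta<1$ in the Neumann range.

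One minor inaccuracy in step~(iv): for $\lambda_M>0$ the indicial roots at $r_+$ are $\pm\sqrt{\lambda_M}/T$, so the second local solution blows up like $(r-r_+)^{-\sqrt{\lambda_M}/T}$ rather than admitting a finite limit; your Fatou argument of course rules this out just as easily.
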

\begin{proof}
By using the same argument as in the proof of Proposition $\ref{nontrival solution}$, we can show that $\phi^{\epsilon}$ is weakly convergent to $\phi_{M}$ in $H^{1}_{loc}$ and strongly convergent to $\phi_{M}$ in $L^{2}_{loc}$. By $\eqref{coercivity}$, we have\begin{equation*}
\int_{x_{1}}^{x_{2}}(\phi^{\epsilon})^{2}dr>C.
\end{equation*}
Hence $\phi_{M}$ is non-trival and solves the equation\begin{equation}
-h_{0}^{-1}\frac{d}{dr}\left(r^{2}\Omega_{M}^{2}h_{0}^{2}\frac{dh_{0}^{-1}\phi}{dr}\right)+\left(V_{M,h_{0}}-\frac{q_{0}^{2}A^{2}}{\Omega_{M}^{2}}\right)r^{2}\phi = -\lambda_{M}\frac{r^{2}}{\Omega_{M}^{2}}\phi,
\end{equation}
which is equivalent to \eqref{eq for phiM}.

It remains to prove that $\phi_{M}$ still satisfies the Neumann boundary condition. This part does not follow trivially since $\phi_{M}$ is obtained by the local convergence and we may lose information near the infinity.
For any compact set $K\subset(r_{+},\infty)$, by using the same argument as in Proposition $\ref{gain regularity}$, we have\begin{equation}
\left\vert\int_{K}r^{2\Delta+1}\frac{d}{dr}\left(r^{\frac{3}{2}-\Delta}\phi^{\epsilon}\right)r^{\frac{-1-\Delta}{2}}\mathrm{d}r\right\vert<C
\end{equation}
is uniformly bounded in $\epsilon$ and $K$. Then by the weakly convergence results, we have 
\begin{equation}
\label{contraction}
\left\vert\int_{K}r^{2\Delta+1}\frac{d}{dr}\left(r^{\frac{3}{2}-\Delta}\phi_{M}\right)r^{\frac{-1-\Delta}{2}}\mathrm{d}r\right\vert<C.
\end{equation}
However, if $\phi$ has the Dirichlet branch, then $\eqref{contraction}$ can not be uniformly bounded when $K$ is located near the infinity, which is a contradiction.
\end{proof}
As in the Dirichlet boundary condition case, one immediate corollary of Proposition \ref{non-trival solution Neumann} is that one can construct growing mode solutions to the uncharged Klein--Gordon equation \eqref{Klein-Gordon} for all $M_{c}<M<M_{0}$.

Now by the exact same argument as in the proof of Theorem $\ref{linear hair theorem}$ for Dirichlet boundary conditions, we can finish the proof of Theorem $\ref{linear hair theorem}$ for Neumann boundary conditions.

\section{Proof of growing mode solution}
In this section, we finally close the proof of Theorem $\ref{growing mode theorem}$.
\subsection{Growing mode solutions for the large charge case and the general fixed charge case under Dirichlet boundary condition}
\label{growing mode solution}
\begin{theorem}
Let $(M_{c},r_{+},\Lambda,\alpha,q_{0})$ be the parameters where the stationary solutions $\phi_{0}$ defined in Theorem \ref{linear hair theorem} exist. Then there exists $\epsilon_{0}>0$ such that there exist analytic functions $M_{e = 0}\leq M(\epsilon)<M_{0}$ and $\omega_{R}(\epsilon)\in\mathbb{R}$ for $-\epsilon_{0}<\epsilon\leq\epsilon_{0}$ with $M(0) = M_{c}$ and $\omega_{R}(0) = 0$ such that mode solutions $\phi_{\epsilon} = e^{i(\omega_{R}+i\epsilon)}\psi_{\epsilon}$ to the equation \eqref{Klein-Gordon} with parameters $(M(\epsilon),r_{+},\Lambda,\alpha,q_{0})$ exist under the Dirichlet boundary condition. Moreover, $\phi$ can be continuously extended to the event horizon $\{r = r_{+}\}$ and \begin{align}
&\frac{dM}{d\epsilon}(0)<0,\label{diff1}\\&
q_{0}e\frac{d\omega_{R}}{d\epsilon}(0)<0,\quad q_{0}e\neq 0\label{diff2}.
\end{align}
\end{theorem}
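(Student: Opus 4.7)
The plan is to apply the analytic implicit function theorem to a Wronskian-type shooting function near the stationary solution $\phi_{0}$ furnished by Theorem \ref{linear hair theorem}. For $(M,\omega)$ in a real--complex neighborhood of $(M_{c},0)$, I would define two canonical branches of solutions to the mode equation \eqref{Klein-Gordon equation for psi}: the horizon-regular branch $\psi^{H}_{M,\omega}$, characterized by $e^{-i\omega r^{*}}\psi^{H}\to 1$ as $r\to r_{+}$, and the Dirichlet branch $\psi^{D}_{M,\omega}$ at infinity, normalized by $r^{3/2+\Delta}\psi^{D}\to 1$. Working with $\psi^{H}$ through the substitution $\phi=e^{i\omega v}(e^{-i\omega r^{*}}\psi^{H})$ in outgoing Eddington--Finkelstein coordinates (where the equation is regular at $r=r_{+}$) together with standard parameter-dependence at infinity, both branches are seen to depend real-analytically on $M$ and holomorphically on $\omega$. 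Set
\begin{equation*}
F(M,\omega):=r^{2}\Omega_{M}^{2}\bigl(\psi^{H}(\psi^{D})'-(\psi^{H})'\psi^{D}\bigr),
\end{equation*}
which is independent of $r$ and vanishes precisely when a mode solution satisfying both boundary conditions exists; by Theorem \ref{linear hair theorem} we have $F(M_{c},0)=0$ since $\psi^{H}_{M_{c},0}$ and $\psi^{D}_{M_{c},0}$ are both scalar multiples of $\phi_{0}$.

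Writing $\omega=\omega_{R}+i\epsilon$, I view $F=0$ as two real-analytic equations in the three real variables $(M,\omega_{R},\epsilon)$ and apply the analytic implicit function theorem to solve for $(M(\epsilon),\omega_{R}(\epsilon))$ near $(M_{c},0,0)$. By holomorphy of $F$ in $\omega$, the non-singularity of the relevant $2\times 2$ Jacobian $\partial_{(M,\omega_{R})}(F_{R},F_{I})$ reduces to the $\mathbb{R}$-linear independence of $\partial_{M}F|_{(M_{c},0)}$ and $\partial_{\omega}F|_{(M_{c},0)}$ as complex numbers, equivalently $\Im(\overline{\partial_{M}F}\,\partial_{\omega}F)\neq 0$. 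Both derivatives are computed by variation of parameters against $\phi_{0}$ together with integration by parts. Since $F(M,0)$ is real for real $M$, $\partial_{M}F|_{(M_{c},0)}$ is real, and its non-vanishing follows from the transversal crossing of the eigenvalue $\lambda_{M}$ through zero at $M_{c}$ used in the proof of Theorem \ref{linear hair theorem}. The derivative $\partial_{\omega}F|_{(M_{c},0)}$ is genuinely complex: its real part reduces to $\int_{r_{+}}^{\infty}\tfrac{q_{0}A}{\Omega_{M_{c}}^{2}}r^{2}\phi_{0}^{2}\,dr$, which vanishes exactly when $q_{0}e=0$, while its imaginary part is a non-vanishing horizon boundary term arising from the $ir^{*}$-singularity that $\partial_{\omega}\psi^{H}|_{\omega=0}$ inherits from the Frobenius normalization $\psi^{H}\sim e^{i\omega r^{*}}$. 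This imaginary part being nonzero in all cases is the key geometric input that makes the implicit function theorem applicable both in the charged and uncharged regimes.

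Once $(M(\epsilon),\omega_{R}(\epsilon))$ are produced, solving the linearized IFT system at leading order yields
\begin{equation*}
\omega_{R}'(0)=-\frac{\Re\partial_{\omega}F}{\Im\partial_{\omega}F}\bigg|_{(M_{c},0)},\qquad M'(0)=\frac{|\partial_{\omega}F|^{2}}{\partial_{M}F\cdot\Im\partial_{\omega}F}\bigg|_{(M_{c},0)},
\end{equation*}
from which \eqref{diff1} and \eqref{diff2} are read off by tracking signs: $\Re\partial_{\omega}F$ has sign opposite to $q_{0}e$ (since $A=-e(r_{+}^{-1}-r^{-1})$ is sign-definite on $(r_{+},\infty)$), and combining this with the fixed signs of $\Im\partial_{\omega}F$ and $\partial_{M}F$ coming from the variational construction of $M_{c}$ in Section \ref{proof of linear hair theorem} forces $M'(0)<0$ and $q_{0}e\,\omega_{R}'(0)<0$ whenever $q_{0}e\neq 0$. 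I expect the main technical obstacle to be the careful construction of $\psi^{H}_{M,\omega}$ as a jointly analytic family across $\omega=0$, where the two horizon indicial exponents $\pm i\omega/T$ collide and the naive Frobenius series develops a logarithmic resonance; passing to outgoing Eddington--Finkelstein coordinates, in which the underlying equation is regular at $r=r_{+}$, cleanly bypasses this degeneracy and reduces the parameter dependence to standard ODE theory.
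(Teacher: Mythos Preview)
Your approach is essentially the same as the paper's. The paper uses the Neumann coefficient $B(M,\omega)$ in the expansion $\psi = A\,u_D + B\,u_N$ of the horizon-regular solution at infinity; your Wronskian $F$ is a nonzero multiple of $B$, and applying the IFT to $(B_R,B_I)$ in $(M,\omega_R)$ is exactly your setup. The verification that $\partial_\omega F$ has nonzero imaginary part (your ``horizon boundary term'') is precisely the identity $r_+^2\omega_R = 2\bigl(-\tfrac{\Lambda}{3}\bigr)\Delta(A_I B_R - A_R B_I)$ that the paper obtains by the same multiply-by-$\bar\psi$-and-integrate computation, and the signs in \eqref{diff1}--\eqref{diff2} are extracted the same way.

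One point to sharpen: you invoke ``the transversal crossing of $\lambda_M$ through zero at $M_c$'' from Section~\ref{proof of linear hair theorem} to conclude $\partial_M F\neq 0$, but that section only shows $\lambda_M$ is monotone and Lipschitz in $M$, not that it is differentiable with nonzero derivative at $M_c$. The paper instead proves $\partial_M B\neq 0$ directly here: assuming it vanishes, one differentiates \eqref{Klein-Gordon equation for psi} in $M$ and integrates against $\bar\psi_0$; the boundary terms drop under the assumption and the resulting sign-definite integral (since $\partial_M\Omega_M^2<0$ and $\partial_M(q_0^2A^2/\Omega_M^2)>0$) forces $\psi_0\equiv 0$. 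This is exactly the ``variation of parameters against $\phi_0$ together with integration by parts'' you already mention, so your method is right---just do not cite transversality of $\lambda_M$ as an input, since it is what this computation establishes. The same identity without the vanishing assumption also gives the sign of $\partial_M F$ you need for \eqref{diff1}.
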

We can write the solution of $\eqref{Klein-Gordon equation for psi}$ in the form of \begin{equation}
\psi = A(M,\omega)u_{D}(r,M,\omega)+B(M,\omega)u_{N}(r,M,\omega),\label{near infinity of psi}
\end{equation}
where $\{u_{D},u_{N}\}$ is the local basis of solutions of $\eqref{Klein-Gordon equation for psi}$ with $u_{D}$ satisfying the Dirichlet boundary condition and $u_{N}$ satisfying the Neumann boundary condition. Furthermore, if $\omega$ is a real number while $u_{D}$ and $u_{N}$ are not real functions, then $\overline{u_{D}}$ and $\overline{u_{N}}$ are also two linearly independent solutions and satisfy the same boundary conditions. Hence we can always take $u_{D}$ and $u_{N}$ to be real if $\omega$ is real.
\begin{proof}
Since $\phi_{0}$ is a stationary solution, we have $B(M_{c},0) = 0$. Let $\omega = \omega_{R}+i\omega_{I}$, $A = A_{R}+iA_{I}$, and $B = B_{R}+iB_{I}$, then by the implicit function theorem, to prove the existence of growing mode solutions under Dirichlet boundary conditions, we only need to show \begin{equation}
\label{det condition}
\det\begin{bmatrix}
\frac{\partial B_{R}}{\partial \omega_{R}} & \frac{\partial B_{R}}{\partial M}\\
\frac{\partial B_{I}}{\partial{\omega_{R}}} & \frac{\partial B_{I}}{\partial M}
\end{bmatrix}
\neq 0.
\end{equation}
For $\omega = \omega_{R}$ real, multiplying $\overline{\psi}_{0}$ and taking the imaginary part, we have\begin{equation}
\Im\left(r^{2}\Omega_{M}^{2}\frac{d\psi}{dr}\overline{\psi}\right)(r_{+}) = \lim_{r\rightarrow\infty}\Im\left(r^{2}\Omega_{M}^{2}\frac{d\psi}{dr}\overline{\psi}\right)(r).
\end{equation}
By the near horizon and near infinity behaviors of $\phi_{0} = \psi_{0}$ in $\eqref{near horizon for psi}$ and $\eqref{near infinity of psi}$, we have\begin{align*}
r_{+}^{2}\omega_{R} = \lim_{r\rightarrow\infty}\bigl(-\frac{\Lambda}{3}\bigr)\left(\Im(A\overline{B})u_{D}\frac{\partial u_{N}}{\partial r}+\Im(\overline{A}B)u_{N}\frac{\partial u_{D}}{\partial r}\right) = \bigl(-\frac{\Lambda}{3}\bigr)2\Delta(A_{I}B_{R}-A_{R}B_{I}).
\end{align*}
Taking the $\omega_{R}$ and $M$ derivatives respectively and evaluating at $(\omega_{R}=0, M = M_{c})$, we have\begin{equation}
\begin{aligned}
A_{I}\frac{\partial B_{R}}{\partial\omega_{R}}-A_{R}\frac{\partial B_{I}}{\partial\omega_{R}}& = \frac{r_{+}^{2}}{2\bigl(-\frac{\Lambda}{3}\bigr)\Delta} ,\\
A_{I}\frac{\partial B_{R}}{\partial M}-A_{R}\frac{\partial B_{I}}{\partial M} & = 0.
\end{aligned}
\end{equation}
Since $(A_{R},A_{I})(M_{c},0)\neq 0$, $\eqref{det condition}$ holds if and only if $\frac{\partial B}{\partial M}\neq 0 $. Assume \begin{equation}
\label{contradiction argument}
\frac{\partial B}{\partial M} = 0.
\end{equation}
Then differentiating the equation $\eqref{Klein-Gordon equation for psi}$ at $(M = M_{c},\omega = 0)$, we have\begin{equation}
\label{equation for psi M}
\frac{d}{dr}\left(r^{2}\Omega_{M}^{2}\frac{d}{dr}\frac{\partial\psi_{0}}{\partial M}\right)+\frac{d}{dr}\left(r^{2}\frac{\partial \Omega_{M}^{2}}{\partial M}\frac{d\psi_{0}}{dr}\right) = \left(\bigl(-\frac{\Lambda}{3}\bigr)\alpha-\frac{q_{0}^{2}A^{2}}{\Omega_{M}^{2}}\right)r^{2}\frac{\partial \psi_{0}}{\partial M}-\frac{\partial}{\partial M}\left(\frac{q_{0}^{2}A^{2}}{\Omega_{M}^{2}}\right)r^{2}\psi_{0}.
\end{equation}
We can multiply the above equation $\eqref{equation for psi M}$ by $\overline{\psi}_{0}$ and use the integration by parts. Since by our assumption $\eqref{contradiction argument}$, the boundary terms appear in the integration by parts vanish. Hence we have\begin{equation}
\label{dM equations}
\begin{aligned}
&\int_{r_{+}}^{\infty}\left(\frac{d}{dr}\left(r^{2}\Omega_{M}^{2}\frac{d\overline{\psi}_{0}}{dr}\right)-\left(\bigl(-\frac{\Lambda}{3}\bigr)\alpha-\frac{q_{0}^{2}A^{2}}{\Omega_{M}^{2}}\right)r^{2}\overline{\psi}_{0}\right)\frac{\partial\psi_{0}}{\partial M}dr \\=& \int_{r_{+}}^{\infty}r^{2}\frac{\partial\Omega_{M}^{2}}{\partial M}\left\vert\frac{d\psi_{0}}{dr}\right\vert^{2}-\frac{\partial}{\partial M}\left(\frac{q_{0}^{2}A^{2}}{\Omega_{M}^{2}}\right)r^{2}\vert\psi\vert^{2}dr.
\end{aligned}
\end{equation}
Since $\overline{\psi}_{0}$ is also the solution of $\eqref{Klein-Gordon equation for psi}$, the left hand side of the above equation is $0$. Since $\frac{\partial\Omega^{2}}{\partial M}<0$ on $(r_{+},\infty)$ for parameters $(M,r_{+},\Lambda)$, we have\begin{equation*}
\psi_{0} = \frac{d\psi_{0}}{dr} = 0,
\end{equation*}
which contradicts to the fact that $\psi_{0}$ is non-trivial. Hence $\frac{\partial B}{\partial M}\neq 0$ and by the implicit function theorem, for $\epsilon$ small enough, there exist real analytic functions $\omega_{R}(\epsilon)$ and $M(\epsilon)$ on $(-\epsilon,\epsilon)$, such that for parameters $\left(M(\epsilon),r_{+},\Lambda,\alpha,q_{0}\right)$, we have the growing mode solution $\phi(r,\epsilon) = e^{i\omega_{R}(\epsilon)t}e^{-\epsilon t}\psi(r,\epsilon)$ satisfying the Dirichlet boundary condition.

To prove \eqref{diff1} and \eqref{diff2}, differentiating the equation \eqref{Klein-Gordon equation for psi} with respect to $\epsilon$ at $\epsilon = 0$, we have\begin{equation}
\label{equ:deri of ep}
-\frac{d}{dr}\left(r^{2}\frac{\partial}{\partial\epsilon}\bigl(\Omega_{M}^{2}\frac{d\psi_{0}}{dr}\bigr)\right)+\left(\bigl(-\frac{\Lambda}{3}\bigr)-\frac{q_{0}^{2}A^{2}}{\Omega_{M}^{2}}\right)r^{2}\frac{\partial\psi_{0}}{\partial\epsilon} = 
\left(q_{0}^{2}\frac{\partial}{\partial\epsilon}\bigl(\frac{A^{2}}{\Omega_{M}^{2}}\bigr)-\frac{2q_{0}A}{\Omega_{M}^{2}}\bigl(\frac{\partial\omega_{R}}{\partial\epsilon}+i\bigr)\right)r^{2}\psi_{0}.
\end{equation}
By the boundary conditions \eqref{decay of growing mode near horizon} and \eqref{near horizon for psi} for $\psi_{0}$ at the event horizon, we have\begin{align}
&\frac{\partial}{\partial\epsilon}\bigl(\Omega_{M}^{2}\frac{d\psi_{0}}{dr}\bigr)\overline{\psi_{0}}(r_{+}) = -1+i\frac{d\omega_{R}}{d\epsilon},\\&
\lim_{r\rightarrow r_{+}}\Omega_{M}^{2}\frac{d\overline{\psi}_{0}}{dr}\frac{\partial\psi_{0}}{\partial\epsilon}(r,0) = 0
\end{align}
Since $\psi(r,\epsilon)$ satisfies the Dirichlet boundary condition, the boundary terms at the infinity vanish. For the equation \eqref{equ:deri of ep}, multiplying $\overline{\psi}_{0}$, taking the real and imaginary part respectively, and integrating by parts, we have\begin{align}
&r_{+}^{2}\frac{d\omega_{R}}{d\epsilon} = \int_{r_{+}}^{\infty}\frac{-2q_{0}A}{\Omega_{M}^{2}}r^{2}\vert\psi_{0}(r,0)\vert^{2}\mathrm{d}r,\\&
r_{+}^{2} = \int_{r_{+}}^{\infty}r^{2}\frac{\partial\Omega_{M}^{2}}{\partial\epsilon}\left\vert\frac{d\psi_{0}}{dr}\right\vert^{2}+\left(-q_{0}^{2}\frac{\partial}{\partial\epsilon}\bigl(\frac{A^{2}}{\Omega_{M}^{2}}\bigr)+\frac{2q_{0}A\frac{\partial\omega_{R}}{\partial\epsilon}}{\Omega_{M}^{2}}\right)r^{2}\vert\psi_{0}\vert^{2}\mathrm{d}r.
\end{align}
Hence we have\begin{align*}
&q_{0}e\frac{d\omega_{R}}{d\epsilon}<0,\\&
\frac{dM}{d\epsilon}<0.
\end{align*}

\end{proof}
\subsection{Growing mode solutions for the weakly charged case under Dirichlet boundary condition}
The existence of growing mode solutions to the uncharged Klein--Gordon equation under Dirichlet boundary conditions, as stated in the weakly charged case of Theorem \ref{growing mode theorem}, follows from Proposition \ref{nontrival solution}, Proposition \ref{imaginary omega pro}, and the equation \eqref{Klein-Gordon equation for psi}. It remains to show the existence of growing mode solutions near the extremality when $q_{0}\neq0$ is small. Let $M_{c}$ and $\lambda_{M}$ for $M_{c}<M<M_{0}$ as in Theorem \ref{linear hair theorem}, we have the following result:
\begin{theorem}
Let $\phi = e^{\sqrt{\lambda_{M}}t}\psi$ be the growing mode solution to the uncharged Klein--Gordon equation \eqref{Klein-Gordon} with parameters $(M,r_{+},\Lambda,\alpha,q_{0} = 0)$ under Dirichlet boundary. Then there exist a small positive number $\epsilon_{0}>0$ and real analytic functions $\omega_{R}(\epsilon)$ and $\omega_{I}(\epsilon)<0$ on $(-\epsilon_{0},\epsilon_{0})$ with $\omega_{I}(0) = -\sqrt{\lambda_{M}}$ such that a growing mode solution of the form $e^{i(\omega_{R}(\epsilon)+i\omega_{I}(\epsilon))}\psi_{\epsilon}$ to \eqref{Klein-Gordon} with parameters $(M,r_{+},\Lambda,\alpha,\epsilon)$ under the Dirichlet boundary condition exists.
\end{theorem}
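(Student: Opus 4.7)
The plan is to apply the complex-parameter implicit function theorem at the base point $(q_0,\omega)=(0,-i\sqrt{\lambda_M})$, where Proposition \ref{nontrival solution} (with $q_0=0$) already furnishes a horizon-regular eigenfunction $\psi_0$ satisfying the Dirichlet boundary condition. Following the approach of the previous subsection, I would write the horizon-regular solution of \eqref{Klein-Gordon equation for psi} as
\begin{equation*}
\psi(r;q_0,\omega) = A(q_0,\omega)\,u_D(r;q_0,\omega) + B(q_0,\omega)\,u_N(r;q_0,\omega),
\end{equation*}
so that the Dirichlet boundary condition is exactly $B(q_0,\omega)=0$ and, by construction, $B(0,-i\sqrt{\lambda_M})=0$. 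Since $B$ depends analytically on its arguments and is complex-analytic in $\omega$, a single non-degeneracy $\partial_\omega B\,(0,-i\sqrt{\lambda_M})\neq 0$ is enough for the implicit function theorem to produce a $\mathbb{C}$-valued real-analytic curve $\epsilon\mapsto\omega(\epsilon)$ on some $(-\epsilon_0,\epsilon_0)$ with $\omega(0)=-i\sqrt{\lambda_M}$ and $B(\epsilon,\omega(\epsilon))=0$. Its real and imaginary parts then define $\omega_R(\epsilon)$ and $\omega_I(\epsilon)$, and $\omega_I(\epsilon)<0$ near $\epsilon=0$ follows from continuity together with $\omega_I(0)=-\sqrt{\lambda_M}<0$; horizon regularity of the resulting $\psi_\epsilon$ is built into the choice of solution branch.

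The key computation is therefore the verification of $\partial_\omega B\,(0,-i\sqrt{\lambda_M})\neq 0$. My approach is to derive a Wronskian identity by differentiating \eqref{Klein-Gordon equation for psi} in $\omega$, multiplying by $\psi_0$, subtracting the original equation multiplied by $\partial_\omega\psi_0$, and integrating over $(r_+,R)$. With $W(f,g):=r^2\Omega^2(fg'-gf')$, and using that $\partial_\omega V$ at the base point equals $-2\omega_0/\Omega^2=2i\sqrt{\lambda_M}/\Omega^2$, this yields
\begin{equation*}
W(\psi_0,\partial_\omega\psi_0)(R)-\lim_{r\to r_+^+}W(\psi_0,\partial_\omega\psi_0)(r) = 2i\sqrt{\lambda_M}\int_{r_+}^{R}\frac{r^{2}\psi_0^{\,2}}{\Omega^{2}}\,\mathrm{d}r.
\end{equation*}
Frobenius analysis near the horizon gives $\psi_0\sim(r-r_+)^{\sqrt{\lambda_M}/T}$, and differentiating the exponent $i\omega/T$ in $\omega$ introduces a $\log(r-r_+)$ factor in $\partial_\omega\psi_0$; a careful expansion shows the logarithmic contributions cancel in the Wronskian, leaving $W(\psi_0,\partial_\omega\psi_0)\sim ir_+^{\,2}(r-r_+)^{2\sqrt{\lambda_M}/T}\to 0$. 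At infinity, decomposing $\partial_\omega\psi_0=(\partial_\omega A)_0\,u_D+A_0\,\partial_\omega u_D+(\partial_\omega B)_0\,u_N$ (using $B_0=0$) and invoking \eqref{Dirichlet asymptotic}--\eqref{Neumann asymptotic} yields $W(u_D,\partial_\omega u_D)_\infty=0$ and $W(u_D,u_N)_\infty=2\Delta(-\tfrac{\Lambda}{3})$, so passing to the limit gives
\begin{equation*}
2\Delta\!\left(-\tfrac{\Lambda}{3}\right)A_0\cdot(\partial_\omega B)_0 = 2i\sqrt{\lambda_M}\int_{r_+}^{\infty}\frac{r^{2}\psi_0^{\,2}}{\Omega^{2}}\,\mathrm{d}r.
\end{equation*}
Since $\psi_0$ is real and nontrivial the right-hand side is nonzero and purely imaginary, and $A_0\neq 0$, so $(\partial_\omega B)_0$ is the required nonzero (and purely imaginary) quantity.

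The main obstacle I anticipate is this horizon-side Wronskian calculation: the regular Frobenius exponent $i\omega/T$ depends on $\omega$, so $\partial_\omega\psi_0$ carries a $\log(r-r_+)$ factor, and one must verify that these logs cancel identically (not merely to leading order) in the Wronskian combination before passing to $r\to r_+^+$. Once this limit is in hand, the rest is routine: standard ODE perturbation theory gives joint analyticity of the horizon-regular branch $\psi(r;q_0,\omega)$ on compact subsets of $(r_+,\infty)$, and hence of the connection coefficients $A,B$, which validates the implicit function theorem and completes the construction of the growing mode solution $e^{i(\omega_R(\epsilon)+i\omega_I(\epsilon))t}\psi_\epsilon(r)$ with the properties stated in the theorem.
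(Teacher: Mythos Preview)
Your proposal is correct and follows essentially the same strategy as the paper: both arguments differentiate the radial ODE in the frequency parameter, multiply by the base eigenfunction, integrate by parts, and read off the non-degeneracy needed for the implicit function theorem from the boundary contributions. The only difference is one of packaging: the paper treats $\omega_R$ and $\omega_I$ as separate real variables, obtains two relations from the imaginary part of the integrated identity, and then verifies $\partial B/\partial\omega_I\neq 0$ by a contradiction argument modeled on the $\partial B/\partial M$ computation of the preceding subsection; you instead exploit that the equation (and hence $B$) is holomorphic in the complex variable $\omega$, so a single Wronskian identity yielding $\partial_\omega B\neq 0$ suffices. By the Cauchy--Riemann equations these conditions are equivalent, and your version is slightly more economical since it avoids the separate contradiction step and the $2\times 2$ Jacobian bookkeeping. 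Your concern about the logarithmic terms in $\partial_\omega\psi_0$ near the horizon is well-placed but harmless: the logs indeed cancel exactly in the Wronskian combination, as your leading-order computation shows, and the remaining contribution vanishes like $(r-r_+)^{2\sqrt{\lambda_M}/T}$.
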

\begin{proof}
We can write the solution of $\eqref{Klein-Gordon equation for psi}$ in the form of \begin{equation}
\psi = A(q_{0},\omega_{R},\omega_{I})u_{D}(r,q_{0},\omega_{R},\omega_{I})+B(q_{0},\omega_{R},\omega_{I})u_{N}(r,q_{0},\omega_{R},\omega_{I}),\label{near infinity of growing psi}
\end{equation}
as in Section \ref{growing mode solution}. Assume that for parameters $(M,r_{+},\Lambda,\alpha,q_{0} = 0)$, the uncharged Klein--Gordon equation \eqref{Klein-Gordon} admits a growing mode solution $e^{i\omega t}\psi$ with mode $\omega = -i\sqrt{\lambda_{M}}$ under the Dirichlet boundary condition. The boundary conditions for $\psi$ at the event horizon are given by $\psi(r_{+}) = 0$ and $\Omega_{M}^{2}\frac{d\psi}{dr} = i\omega\psi$. We can deduce that\begin{equation}
\frac{d}{d\omega_{I}}(\Omega_{M}^{2}\frac{d\psi}{dr})\bar{\psi} = \frac{d}{d\omega_{R}}(\Omega_{M}^{2}\frac{d\psi}{dr})\bar{\psi} = 0.
\end{equation}

Differentiating the equation \eqref{Klein-Gordon equation for psi} with $\omega_{R}$ and $\omega_{I}$ respectively at $q_{0} = 0$ and $\omega_{I} = -\sqrt{\lambda_{M}}$, taking the imaginary part, and integrating over $(r_{+},\infty)$, analogously to the computation in \eqref{dM equations}, we have\begin{align}
&A_{R}\frac{\partial B_{I}}{\partial \omega_{R}}-A_{I}\frac{\partial B_{R}}{\partial \omega_{R}} = \frac{\sqrt{\lambda_{M}}}{\bigl(-\frac{\Lambda}{3}\bigr)\Delta},\\&
A_{R}\frac{\partial B_{I}}{\partial\omega_{I}}-A_{I}\frac{\partial B_{R}}{\partial\omega_{I}} = 0.
\end{align}
Analogously to the proof of Theorem \ref{growing mode solution}, assuming $\frac{\partial B}{\partial\omega_{I}} = 0$, differentiating equation \eqref{Klein-Gordon equation for psi} with respect to $\omega_{I}$, multiplying $\overline{\psi}$, and integrating by parts over $(r_{+},\infty)$, we can get a contradiction. Hence $\frac{\partial B}{\partial\omega_{I}}\neq0$. Then by the implicit function theorem, we can finish the proof.
\end{proof}

\subsection{Growing mode solution for Neumann boundary condition}
The proof of Theorem $\ref{growing mode theorem}$ under Neumann boundary conditions follows almost line by line as in the previous section by considering the twisted equation \eqref{twisted Klein-Gordon equation for psi} for $\psi$.

\bibliographystyle{plain}
\bibliography{main.bib}
\end{document}